\documentclass[aps,prx,reprint,groupedaddress,floatfix]{revtex4-2}

\usepackage{multirow}
\usepackage{amssymb,amsmath,amsthm,amsfonts}
\usepackage{bm}
\usepackage{textgreek}
\usepackage{mathtools}
\usepackage{enumitem}
\usepackage{graphicx}
\usepackage{tablefootnote} 
\usepackage[ruled,vlined,linesnumbered]{algorithm2e}
\usepackage{algorithmic}

\usepackage{physics}
\usepackage[justification=Justified]{caption}
\usepackage{subcaption}

\usepackage{tikz}
\usetikzlibrary{shapes.geometric, arrows.meta, positioning, fit, calc}

\usepackage{footnote}
\usepackage{xcolor}
\usepackage{mathrsfs}
\usepackage{bbm}
\usepackage{makecell}
\usepackage{pbox}
\usepackage[colorlinks]{hyperref}
\hypersetup{citecolor=blue}

\newtheorem{theorem}{Theorem}

\newtheorem{lemma}{Lemma}

\newtheorem{corollary}{Corollary}

\newcommand{\eq}[1]{(\ref{eq:#1})}
\newcommand{\thm}[1]{\hyperref[thm:#1]{Theorem~\ref*{thm:#1}}}
\newcommand{\cor}[1]{\hyperref[cor:#1]{Corollary~\ref*{cor:#1}}}
\newcommand{\defn}[1]{\hyperref[defn:#1]{Definition~\ref*{defn:#1}}}
\newcommand{\lem}[1]{\hyperref[lem:#1]{Lemma~\ref*{lem:#1}}}
\newcommand{\prop}[1]{\hyperref[prop:#1]{Proposition~\ref*{prop:#1}}}
\newcommand{\assum}[1]{\hyperref[assum:#1]{Assumption~\ref*{assum:#1}}}
\newcommand{\fig}[1]{\hyperref[fig:#1]{Figure~\ref*{fig:#1}}}
\newcommand{\tab}[1]{\hyperref[tab:#1]{Table~\ref*{tab:#1}}}
\newcommand{\algo}[1]{\hyperref[algo:#1]{Algorithm~\ref*{algo:#1}}}
\renewcommand{\sec}[1]{\hyperref[sec:#1]{Section~\ref*{sec:#1}}}
\newcommand{\append}[1]{\hyperref[append:#1]{Appendix~\ref*{append:#1}}}
\newcommand{\fac}[1]{\hyperref[fac:#1]{Fact~\ref*{fac:#1}}}
\newcommand{\lin}[1]{\hyperref[lin:#1]{Line~\ref*{lin:#1}}}

\renewcommand{\d}{\mathrm{d}}
\renewcommand{\L}{\mathcal{L}}

\newcommand{\A}{\mathcal{A}}

\newcommand{\G}{\mathcal{G}}
\newcommand{\E}{\mathcal{E}}
\renewcommand{\comm}{\mathrm{comm}}
\newcommand{\supp}{\mathrm{supp}}
\newcommand{\eps}{\varepsilon}
\newcommand{\mc}{\mathcal}
\newcommand{\mb}{\mathbb}
\def\>{\rangle}
\def\<{\langle}

\renewcommand\bra[1]{{\langle{#1}|}}
\renewcommand\ket[1]{{|{#1}\rangle}}
\renewcommand{\ketbra}[2]{\ket{#1}\!\bra{#2}}
\newcommand{\myparagraph}[1]{\smallskip\textit{#1}\quad}
\DeclareMathOperator{\ad}{ad}

\SetKwComment{Comment}{/* }{ */}
\SetKwInput{KwInput}{Input}                
\SetKwInput{KwOutput}{Output}              
\SetKwInput{KwNotation}{Notation}              
\SetKwInput{KwParameter}{Parameter}             

\begin{document}


\title{Lindbladian Simulation with Commutator Bounds}


\author{Xinzhao Wang$^{1,2}$}
\thanks{Equal Contribution.}

\author{Shuo Zhou$^{1,2,5}$}
\thanks{Equal Contribution.}

\author{Xiaoyang Wang$^{3,4}$} 

\author{\\ Yi-Cong Zheng$^{5}$}

\author{Shengyu Zhang$^{5}$}
\thanks{Corresponding author. Email: shengyzhang@tencent.com}

\author{Tongyang Li$^{1,2}$}
\thanks{Corresponding author. Email: tongyangli@pku.edu.cn}

\affiliation{\vspace{1em}\textsuperscript{1}Center on Frontiers of Computing Studies, Peking University, Beijing 100871, China}

\affiliation{\textsuperscript{2}School of Computer Science, Peking University, Beijing 100871, China}

\affiliation{\textsuperscript{3}RIKEN Center for Interdisciplinary Theoretical and Mathematical Sciences (iTHEMS), Wako, Saitama 351-0198, Japan}

\affiliation{\textsuperscript{4}RIKEN Center for Computational Science (R-CCS), Kobe, Hyogo 650-0047, Japan}

\affiliation{\textsuperscript{5}Tencent Quantum Laboratory, Tencent, Shenzhen, Guangdong 518057, China}


\date{\today}

\begin{abstract}
Trotter decomposition provides a simple approach to simulating open quantum systems by decomposing the Lindbladian into a sum of individual terms. While it is established that Trotter errors in Hamiltonian simulation depend on nested commutators of the summands, such a relationship remains poorly understood for Lindbladian dynamics. In this paper, we derive commutator-based Trotter error bounds for Lindbladian simulation, yielding an $\mathcal{O}(\sqrt{N})$ scaling in the number of Trotter steps for locally interacting systems on $N$ sites. When estimating observable averages, we apply Richardson extrapolation so that the number of Trotter steps per circuit depends polylogarithmically on the inverse precision while maintaining the commutator scaling. To bound the extrapolation remainder, we develop a general truncation bound for the Baker-Campbell-Hausdorff expansion that bypasses common convergence issues in physically relevant systems. For local Lindbladians, our results demonstrate that the Trotter-based methods outperform prior simulation techniques in system-size scaling while requiring only $\mathcal{O}(1)$ ancillas. Numerical simulations further support the predicted system-size and precision scaling.
\end{abstract}


\maketitle

\section{Introduction}
The Lindblad master equation serves as the canonical framework for the study of open quantum systems~\cite{lindblad1976generators, gorini1976completely}. It has been used to model dissipative processes across diverse fields, ranging from quantum optics~\cite{gardiner2004quantum, breuer2002theory} and quantum statistical mechanics~\cite{davies1974markovian, alicki1976detailed} to noise modeling in quantum computation~\cite{van2023probabilistic}. Beyond describing physical phenomena, Lindblad dynamics is applied to design quantum algorithms for the dissipative preparation of a wide variety of quantum states~\cite{lin2025dissipative}, including thermal states~\cite{chen2025efficient}, ground states~\cite{ding2024single}, and excited states~\cite{li2025dissipative}. Furthermore, it extends to solving differential equations~\cite{shang2025designing} and optimization problems~\cite{chen2025quantum,chen2025local}. 

Efficient simulation of Lindblad dynamics is therefore essential, serving both as a tool to investigate open quantum system dynamics and as a primitive for executing such quantum algorithms.
Digital simulation of Lindblad dynamics was initiated using Trotter decomposition~\cite{kliesch2011dissipative, childs2017efficient}. These methods approximate the evolution $e^{t\mc{L}}$ generated by a Lindbladian $\mathcal{L} = \sum_{j=1}^m \mc{L}_j$ using a product formula—a product of the exponentials of the summands. Since then, various algorithmic techniques have been developed~\cite{cattaneo2021collision,schlimgen2022quantum,pocrnic2023quantum,patel2023wave1,patel2023wave2,ding2024simulating,     kato2024exponentially,Chen2025randomizedmethod,yu2025lindbladian,mohammadipour2025reducing}. In particular, methods based on linear combinations of unitaries~\cite{cleve2017efficient,li2023simulating,chen2025efficient,peng2025quantum} achieve near-optimal complexity in terms of evolution time and precision. However, they require substantial ancilla overhead and complex multi-qubit controlled operations. In contrast, Trotter-based methods require fewer ancillas and simpler circuits~\cite{yu2025lindbladian}, making them more suitable for practical implementation. 

Beyond hardware efficiency, Trotter-based methods offer a distinct advantage in Hamiltonian simulation: the error is governed by nested commutators rather than the sum of the norms of the operator summands, enabling tighter complexity bounds~\cite{childs2021theory}. However, analogous commutator bounds are largely missing for Lindbladian simulation. Existing analyses bound the error using the sum of the norms $\sum_{j=1}^{m}\|{\mathcal{L}_j}\|$, which fail to account for the Lie-algebraic structure of the nested commutators of $\{\mathcal{L}_j\}$. Deriving such bounds is difficult, as standard commutator-based Trotter error representation involves inverse time evolutions~\cite{childs2021theory}. For Lindbladians, the norms of these inverse-evolution factors can grow exponentially due to dissipation.

In this paper, we provide the first commutator-based error bounds for Lindbladian product formulas, addressing both the total Trotter error and its arbitrarily high-order remainders.
The bound for the total error implies that standard product formulas with a fixed step size achieve a complexity scaling governed by nested commutators of the operator summands. To achieve higher precision, we use Richardson extrapolation to eliminate the leading-order terms of the Trotter error, such that the final precision is determined by the high-order remainder. The bound for the high-order remainders then gives a simulation algorithm in which the number of Trotter steps per circuit depends polylogarithmically on $1/\eps$ while maintaining the commutator-based scaling. Specifically, for local Lindbladians on $N$ sites, our approach requires only $\mathcal{O}(\sqrt{N})$ Trotter steps, improving the previous $\mathcal{O}(N^{3/2})$ scaling by a factor of $N$. Furthermore, the circuit implementation requires only constant ancillas, making it favorable for early fault-tolerant devices.
In addition, we conduct numerical experiments to examine the scaling with system size and the error reduction from extrapolation.

\section{Problem setup}
We aim to simulate open quantum systems governed by the Lindblad master equation:
\begin{align}
\frac{\mathrm{d} \rho}{\mathrm{d} t} = \mathcal{L}(\rho) \coloneqq -\mathrm{i}[H, \rho] + \sum_{\nu=1}^{m_D} \Big( L_{\nu} \rho L_{\nu}^\dagger - \frac{1}{2} \{ L_{\nu}^\dagger L_{\nu}, \rho \} \Big).\nonumber
\end{align}
While our Trotter error analysis holds for general Lindbladians, we focus on systems defined on a lattice $[N]$. We assume that each Hamiltonian term $H_\mu$ in $H = \sum_{\mu=1}^{m_C}H_\mu$ is supported on at most $k$ sites and that each jump operator $L_\nu = \sum_\gamma d_{\nu,\gamma}$ is a sum of at most $\Gamma$ components, each supported on at most $k$ sites. We refer to such a system as a \textit{$(\Gamma,k)$-local Lindbladian}. Expanding the commutators and anticommutators into elementary left- and right-multiplication maps gives
\begin{align}
&\mc L(\cdot)\nonumber \\
={}&\sum_\mu\big[-\mathrm{i}H_\mu(\cdot)
+\mathrm{i}(\cdot)H_\mu\big]\nonumber\\
&+\sum_{\nu,\gamma_1,\gamma_2}\frac{1}{2}\Big[
2d_{\nu,\gamma_1}(\cdot)d_{\nu,\gamma_2}^\dagger
-d_{\nu,\gamma_2}^\dagger d_{\nu,\gamma_1}(\cdot)
-(\cdot)d_{\nu,\gamma_2}^\dagger d_{\nu,\gamma_1}\Big]\nonumber\\
=: &\sum_{v=1}^M\mc K_v .
\label{eq:local-decompose}
\end{align}
Each $\mc K_v$ denotes one of the elementary superoperators in Eq.~\eq{local-decompose} and is supported on at most $2k$ sites. These superoperators need not themselves be Lindbladians and are used for the locality and commutator estimates. The product formula instead uses the coarser Lindbladian decomposition $\mc L=\sum_j\mc L_j$ in Eq.~\eq{prod-formula}. We assume that the local decomposition is \emph{$g$-extensive}, meaning that the local interaction strength at each site is bounded by
\begin{align}
\textstyle
\label{eq:def-extensive-main}
\sum_{v: \mathrm{supp}(\mathcal{K}_v) \ni j} \|\mathcal{K}_v\|_{\diamond} \le g, \quad \forall j \in [N],
\end{align}
where $\|\cdot\|_{\diamond}$ denotes the diamond norm and $\mathrm{supp}(\cdot)$ represents the support of the superoperator. In the special case where each jump operator is individually supported on at most $k$ sites, the $(1,k)$-local Lindbladian reduces to the standard definition of a \emph{$k$-local Lindbladian}.

We consider two tasks: (i) approximating the evolution channel $e^{t\mathcal{L}}$ within a diamond-norm error $\eps$, and (ii) estimating the expectation value of a given observable $O$ with an additive error $\eps \|O\|$ for the time-evolved state.

\begin{figure}[t]
\centering
\resizebox{\columnwidth}{!}{%
\begin{tikzpicture}[
    x=4.6cm, y=-1.45cm, 
    base/.style = {rectangle, draw=black, thick, align=center, inner sep=2mm, text width=4.0cm},
    task/.style = {base, fill=gray!15, font=\small\bfseries},
    theory/.style = {base, rounded corners, fill=white, font=\footnotesize},
    result/.style = {base, fill=white, font=\footnotesize, densely dotted},
    impl/.style = {base, fill=white, font=\footnotesize, text width=3.8cm},
    arrow/.style = {-{Stealth[scale=1.0]}, thick}
]

\node (top) [base, fill=gray!15, font=\small\bfseries, text width=8.8cm] at (0.5, 0) {Lindbladian simulation via second-order Trotterization};

\node (task1) [task] at (0, 1) {Channel approximation};
\node (task2) [task] at (1, 1) {Observable estimation};

\draw [arrow] (top.south) -- (0.5, 0.53) -| (task1.north);
\draw [arrow] (top.south) -- (0.5, 0.53) -| (task2.north);

\node (theory1) [theory] at (0, 2.7) {\textbf{Direct product formula}\\\thm{commutator}\\ Commutator bounds for Trotter error};
\node (res1) [result] at (0, 4.4) {Trotter depth\\ ${\mathcal{O}}(\sqrt{N}\eps^{-1/2})$};

\draw [arrow] (task1.south) -- (theory1.north);
\draw [arrow] (theory1.south) -- (res1.north);

\node (theory2a) [theory] at (1, 1.93) {\textbf{BCH truncation bound}\\\thm{bch-pf-main}\\ Bypasses BCH divergence};
\node (theory2b) [theory] at (1, 3.23) {\textbf{Richardson extrapolation}\\\thm{general-extra-algo-main}\\ Commutator bounds and exponentially improved precision};
\node (res2) [result] at (1, 4.4) {Trotter depth\\ $\mathcal{O}(\sqrt{N}\mathrm{polylog}(N/\eps))$};

\draw [arrow] (task2.south) -- (theory2a.north);
\draw [arrow] (theory2a.south) -- (theory2b.north);
\draw [arrow] (theory2b.south) -- (res2.north);

\node (impl_title) at (0.5, 5.23) [font=\small\bfseries] {Circuit implementation of Trotter steps};

\node (impl1) [impl] at (0.02, 5.94) {\textbf{Stinespring dilation}\\ $\mathcal{O}(1)$-local Lindbladians\\ $\mathcal{O}(1)$ ancillas};
\node (impl2) [impl] at (0.98, 5.94) {\textbf{Block encoding}\\ $(\Gamma,k)$-local Lindbladians\\ $\mathrm{polylog}$ ancillas};

\draw [dashed, thick] (impl_title.west) -- (-0.48, 5.2) |- (1.48, 6.63) |- (impl_title.east);

\draw [thick] (res1.south) |- (0.5, 4.84);
\draw [thick] (res2.south) |- (0.5, 4.84);
\draw [arrow] (0.5, 4.84) -- (impl_title.north);

\end{tikzpicture}%
}
\caption{Flowchart of our simulation framework. }
\label{fig:flowchart}
\end{figure}

\section{Trotter error analysis}
We first derive Trotter error bounds for general Lindbladians, which determine the number of Trotter steps required for a target precision. We consider the second-order product formula
\begin{align}\label{eq:prod-formula}
\mathcal{S}(t) \coloneqq \prod_{j=m}^1 e^{t \mathcal{L}_j /2} \prod_{j=1}^m e^{t\mathcal{L}_j /2},
\end{align}
based on a decomposition $\mathcal{L} = \sum_{j=1}^m \mathcal{L}_j$ with $m := m_C + m_D$ summands. Each Lindbladian summand $\mathcal{L}_j$ represents either a coherent term $\mathcal{H}_\mu(\cdot) = -\mathrm{i}[H_\mu, \cdot]$ or a dissipator $\mathcal{D}_\nu(\cdot) = L_\nu (\cdot) L_\nu^\dagger - \frac{1}{2}\{L_\nu^\dagger L_\nu, \cdot\}$. We focus on the second-order product formula as it offers better error scaling compared to the first-order case while avoiding the negative-time evolution required by higher-order Suzuki–Trotter decompositions~\cite{blanes2005necessity} that are not experimentally feasible. While such issues can be addressed by applying higher-order formulas to the full system-bath Hamiltonian~\cite{yu2025optimizing} under specific environment and coupling assumptions, direct Trotterization of the Lindbladian remains limited to second-order.

Trotter errors in Hamiltonian simulation are often expressed as right-nested commutators of the operator summands~\cite{childs2021theory}. A right-nested commutator of $\{\mc L_j\}$ is defined as $[\mathcal{L}_{j_1}, \dots, \mathcal{L}_{j_q}] \coloneqq [\mathcal{L}_{j_1}, [\mathcal{L}_{j_2}, \dots, [\mathcal{L}_{j_{q-1}}, \mathcal{L}_{j_q}] \dots]]$. The number of operators $q$ is called the \emph{grade} of the commutator. 

We give new bounds on Trotter error and its higher-order remainders in Lindbladian simulation using the sum of the norms of \emph{doubly right-nested commutators} denoted by $\alpha_{\mathrm{comm}}^{(q_1, \dots, q_{d})}(\mathcal{L}_1,\ldots,\mathcal{L}_m)$, or $\alpha_{\mathrm{comm}}^{(q_1, \dots, q_{d})}$ for short:
\begin{align*}
\sum_{j_{1,1}, \dots, j_{d, q_{d}} } \big\| \big[ [\mathcal{L}_{j_{1,1}}, \dots, \mathcal{L}_{j_{1,q_1}}], \dots, [\mathcal{L}_{j_{d,1}}, \dots, \mathcal{L}_{j_{d,q_d}}] \big] \big\|_{\diamond}.
\end{align*}
For $d=1$, this reduces to the usual sum of grade-$q_1$ right-nested commutator norms. The general form enters the BCH truncation bound in \thm{bch-pf-main}.

\subsection{Channel approximation}
For approximating the evolution channel $e^{t\mathcal{L}}$, we give a commutator bound analogous to Hamiltonian simulation. We decompose the total time $t$ into $r$ segments of duration $\tau = t/r$ and approximate the evolution as $\mathcal{S}(\tau)^r$. Due to the contractivity of Lindbladian evolution, i.e., $\|e^{\tau\mathcal{L}}\|_\diamond = 1$ for $\tau \ge 0$, the total simulation error is bounded by the sum of one-step Trotter errors $r\|\mathcal{S}(\tau) - e^{\tau \mathcal{L}}\|_{\diamond}$. Trotter error bounds for general non-anti-Hermitian generators can grow exponentially with time~\cite{childs2021theory}. This growth typically arises because integral representations of the error contain inverse evolutions, which need not be contractive. For the second-order product formula in Eq.~\eq{prod-formula}, however, each inverse-evolution factor in the error representation combines with an adjacent forward evolution generated by the same Lindbladian, resulting in an evolution with a nonnegative time. Consequently, all propagators in the resulting integrand are quantum channels and have diamond norm one (see \append{comm-bound-product} for detailed derivation). This ensures the one-step Trotter error grows only polynomially with time $\tau$: \begin{align}
\label{eq:one-step-trotter-error}
    \|\mathcal{S}(\tau) - e^{\tau \mathcal{L}}\|_{\diamond} \le \alpha_{\mathrm{comm}}^{(3)} \tau^3,
\end{align} where $\alpha_{\mathrm{comm}}^{(3)}$ is the sum of all grade-$3$ right-nested commutators~\footnote{We provide a tighter bound in \append{comm-bound-product}.}.  To ensure the simulation error is bounded by $\eps$, we set $r$ to satisfy $r \alpha_{\mathrm{comm}}^{(3)} (t/r)^3 \le \eps$. This yields the following requirement for the Trotter number.
\begin{theorem}[Channel approximation]
\label{thm:commutator}
For any evolution time $t > 0$ and precision $\eps > 0$, the Lindbladian evolution $e^{t\mathcal{L}}$ can be approximated to precision $\eps$ in the diamond norm using
\begin{align}
r = \mathcal{O}\big( (\alpha_{\mathrm{comm}}^{(3)})^{1/2}t^{3/2}/\sqrt{\eps} \big) \label{eq:trotter_step}
\end{align}
 steps of the second-order product formula \eq{prod-formula}.
\end{theorem}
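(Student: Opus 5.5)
The argument has two parts: a telescoping reduction to a one-step error, and the one-step bound \eq{one-step-trotter-error}, which carries all the difficulty.

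\textbf{Reduction to one step.} Set $\tau=t/r$ and write
\[
\mathcal{S}(\tau)^r-e^{t\mathcal{L}}=\sum_{\ell=0}^{r-1}\mathcal{S}(\tau)^{r-1-\ell}\big(\mathcal{S}(\tau)-e^{\tau\mathcal{L}}\big)e^{\ell\tau\mathcal{L}} .
\]
Every $e^{\tau\mathcal{L}_j/2}$ is a quantum channel, so $\mathcal{S}(\tau)$ is a channel, and $e^{\ell\tau\mathcal{L}}$ is a channel as well. Both therefore have diamond norm one. Submultiplicativity of the diamond norm then gives $\|\mathcal{S}(\tau)^r-e^{t\mathcal{L}}\|_\diamond\le r\|\mathcal{S}(\tau)-e^{\tau\mathcal{L}}\|_\diamond$. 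If \eq{one-step-trotter-error} holds, the total error is at most $\alpha_{\mathrm{comm}}^{(3)}t^3/r^2$. Choosing $r=\lceil(\alpha_{\mathrm{comm}}^{(3)})^{1/2}t^{3/2}\eps^{-1/2}\rceil$ makes this at most $\eps$, which is the stated bound.

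\textbf{One-step bound.} I would follow the Childs et al.\ approach of building an integral representation of the error and then removing the inverse evolutions. Define $\mathcal{S}(\tau)=e^{\tau\mathcal{L}}+\mathcal{E}(\tau)$, and use the variation-of-constants identity
\[
\mathcal{S}(\tau)-e^{\tau\mathcal{L}}=\int_0^\tau e^{(\tau-s)\mathcal{L}}\big(\mathcal{S}'(s)-\mathcal{L}\mathcal{S}(s)\big)\,\d s .
\]
Differentiating the palindromic product gives $\mathcal{S}'(s)=\mathcal{T}(s)\mathcal{S}(s)$. Here $\mathcal{T}(s)-\mathcal{L}$ is a sum of terms $\tfrac12(\mathcal{P}\mathcal{L}_j\mathcal{P}^{-1}-\mathcal{L}_j)$, where $\mathcal{P}$ is a partial product of the $e^{s\mathcal{L}_i/2}$.

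Writing it as $\mathcal{T}(s)\mathcal{S}(s)$ would introduce inverses. I would avoid that by never forming $\mathcal{P}^{-1}$. Instead, I keep the integrand as a sum of products of the form (forward partial product)$\,\cdot\,[\mathcal{L}_i,\cdot]$-type insertions$\,\cdot\,$(remaining forward partial product). The idea is to write $\mathcal{P}\mathcal{L}_j-\mathcal{L}_j\mathcal{P}$ directly as $\int \mathcal{P}_1\,[\mathcal{L}_i,\mathcal{L}_j]\,\mathcal{P}_2$, where $\mathcal{P}_1$ and $\mathcal{P}_2$ are products of channels with nonnegative times. This works because, for a single exponential, $e^{a\mathcal{A}}\mathcal{B}-\mathcal{B}e^{a\mathcal{A}}=\int_0^a e^{x\mathcal{A}}[\mathcal{A},\mathcal{B}]e^{(a-x)\mathcal{A}}\d x$, in which both times are nonnegative. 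Doing this once gives a grade-2 expression. The symmetry of $\mathcal{S}$ makes the second-order terms cancel, and applying the same identity a second time then produces grade-3 commutators sandwiched between channels.

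\textbf{Main obstacle.} The step I expect to be hardest is the second-order cancellation. Symmetry guarantees that the $\tau^2$ coefficient vanishes, but extracting that cancellation while keeping only forward propagators requires care. I would first try pairing each inverse factor $e^{-x\mathcal{L}_i}$ with the adjacent $e^{s\mathcal{L}_i/2}$ in the palindrome, so that every propagator has a time $s/2-x\ge0$.

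Once that is done, each integrand is bounded by a grade-3 commutator norm times products of diamond-norm-one channels. Integrating over the simplex gives $\|\mathcal{S}(\tau)-e^{\tau\mathcal{L}}\|_\diamond\le c\,\alpha_{\mathrm{comm}}^{(3)}\tau^3$ with a constant $c\le1$ from the simplex volumes. That establishes \eq{one-step-trotter-error} and completes the proof.
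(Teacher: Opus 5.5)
Your reduction from $r$ steps to one step is correct and matches the paper's. The gap is the one-step bound \eq{one-step-trotter-error}, which is the entire content of the theorem. You assert two things without exhibiting them: that symmetry cancels the grade-2 terms, and that a second use of $[e^{a\mathcal A},\mathcal B]=\int_0^a e^{x\mathcal A}[\mathcal A,\mathcal B]e^{(a-x)\mathcal A}\,\mathrm{d}x$ leaves grade-3 commutators between channels. Your last paragraph concedes that the cancellation is still to be found.

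Knowing that the $O(s)$ coefficient of $\mathcal S'(s)-\mathcal L\mathcal S(s)$ vanishes is only a statement about Taylor coefficients. Converting it into an exact remainder via Taylor's theorem or BCH brings back one of two problems. Either you get $\sum_j\|\mathcal L_j\|_\diamond$-type bounds, which lose the commutator scaling, or you get conjugation by inverse propagators, which is the exponential growth you want to avoid.

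Your fallback of pairing $e^{-x\mathcal L_i}$ with an adjacent $e^{s\mathcal L_i/2}$ is the paper's key idea. However, it needs a representation in which each backward factor actually sits next to a longer forward factor of the same generator. The forward-only representation you set up contains no backward factors, so the suggestion does not attach to your construction. For general $m$ you also do not say which grade-2 pieces cancel against which; they sit at different positions of the palindrome with different integration ranges.

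The paper supplies both missing ingredients. First, it telescopes over $j_1$, so that each term is a channel-sandwiched two-generator error $e^{\tau\mathcal L_{j_1}/2}e^{\tau\mathcal B}e^{\tau\mathcal L_{j_1}/2}-e^{\tau(\mathcal L_{j_1}+\mathcal B)}$ with $\mathcal B=\sum_{j>j_1}\mathcal L_j$, itself a Lindbladian. Second, for two generators it uses the exact triple-integral representation of Childs et al.\ (their Eq.~(L4)). The only backward factors there come from $e^{\tau_3\ad_{\mathcal L_2}}$ and $e^{-\tau_3\ad_{\mathcal L_1}/2}$, and they are absorbed into the neighbouring $e^{\tau_1\mathcal L_2}$ and $e^{\tau_1\mathcal L_1/2}$ because $\tau_3\le\tau_1$.

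Your forward-only route also closes after the same two-generator reduction. Take $\mathcal S(s)=e^{s\mathcal L_1/2}e^{s\mathcal L_2}e^{s\mathcal L_1/2}$, $\mathcal L=\mathcal L_1+\mathcal L_2$ and $C=[\mathcal L_1,\mathcal L_2]$. Then
\begin{align*}
\mathcal S'(s)-\mathcal L\mathcal S(s)
={}&-\frac12\int_0^s e^{s\mathcal L_1/2}e^{y\mathcal L_2}\,C\,e^{(s-y)\mathcal L_2}e^{s\mathcal L_1/2}\,\mathrm{d}y\\
&+\int_0^{s/2}e^{x\mathcal L_1}\,C\,e^{(s/2-x)\mathcal L_1}e^{s\mathcal L_2}e^{s\mathcal L_1/2}\,\mathrm{d}x .
\end{align*}
The missing step is to move $C$ in both integrands to the common position $e^{s\mathcal L_1/2}Ce^{s\mathcal L_2}e^{s\mathcal L_1/2}$, using your commutator identity again. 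The two coefficients of this common term are $-s/2$ and $+s/2$, so they cancel exactly. The cost of the moves is forward-sandwiched $[\mathcal L_2,C]$ and $[\mathcal L_1,C]$ terms with weights $s^2/4$ and $s^2/8$. Integrating against the channel $e^{(\tau-s)\mathcal L}$ gives $\frac{\tau^3}{12}\|[\mathcal L_2,\mathcal L_2,\mathcal L_1]\|_\diamond+\frac{\tau^3}{24}\|[\mathcal L_1,\mathcal L_1,\mathcal L_2]\|_\diamond$, which are exactly the paper's constants. The triangle inequality over $j_1$ then yields \eq{one-step-trotter-error}.
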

For a $(\Gamma,k)$-local and $g$-extensive Lindbladian $\mathcal{L}$ on $[N]$, standard Trotter error bounds~\cite{childs2017efficient} depend on the total interaction strength, which is bounded by summing Eq.~\eq{def-extensive-main} over all sites:
\begin{align}
\label{eq:1-norm-bound}
\textstyle
\sum_v \|\mathcal{K}_v\|_{\diamond}\le\sum_{j=1}^N \sum_{v:\supp(\mc K_v)\ni j}\|\mc K_v\|_{\diamond} \le Ng.
\end{align}This leads to a Trotter number $r = \mathcal{O}(N^{3/2} (gt)^{3/2} \eps^{-1/2})$. In contrast,  our commutator-based estimate \begin{align}
\label{eq:comm-3}
    \alpha_{\mathrm{comm}}^{(3)} = \mathcal{O}(k^2 g^3 N)
\end{align} yields $r = \mathcal{O}(\sqrt{N} k (gt)^{3/2} \eps^{-1/2})$. For $k=\mathcal O(1)$, this reduces the number of Trotter steps by a factor of $N$ compared with the standard bound.

\subsection{Observable estimation}
For the task of estimating the expectation value $\tr[Oe^{t \mathcal{L}} \rho_0]$, \thm{commutator} guarantees that a direct Trotter implementation achieves bias at most $\eps\|O\|$ using $\mathcal{O}(1/\sqrt{\eps})$ Trotter depth. To improve this dependence to $\mathrm{polylog}(1/\eps)$, several Trotter error mitigation techniques based on extrapolation~\cite{endo2019mitigating, watson2025exponentially} and interpolation~\cite{rendon2024improved} have been developed for Hamiltonian simulation. Recent work also applies extrapolation to first-order Kraus and Hamiltonian-dilation approximations of Lindbladian evolution~\cite{mohammadipour2025reducing} and to linear differential equation solving~\cite{fang2026circuitdepthreductiononeancilla}. In this paper, we use extrapolation to suppress the Trotter error in Lindbladian simulation, achieving a $\operatorname{polylog}(1/\eps)$ Trotter depth while preserving the commutator complexity scaling.

We define the step-size-dependent expectation value $f(s) \coloneqq \tr[O \mathcal{S}(st)^{1/s} \rho_0]$ and apply the well-conditioned $p$-th order Richardson extrapolation of Refs.~\cite{low2019well,watson2025exponentially} to estimate the zero-step-size limit $\lim_{s\to 0} f(s) = \tr[O e^{t\mathcal{L}}\rho_0]$:
\begin{align*}
F^{(p)}=\sum_{j=1}^p b_jf(s_j),
\qquad \|\bm b\|_1=\mathcal O(\log p).
\end{align*}
For a fixed extrapolation order $p$, we write
\begin{align*}
f(s)=f(0)+\sum_{\ell=1}^{p-1}c_{2\ell}s^{2\ell}+R_{2p}(s),
\end{align*}
where $R_{2p}(s)$ denotes the remainder. The symmetry of the product formula implies that this expansion contains only even powers of $s$. The coefficients $\{b_j\}$ therefore cancel the first $p-1$ nonconstant terms, of degrees $2,4,\ldots,2p-2$, and hence
\begin{align*}
|F^{(p)}-f(0)|
\le \|\bm b\|_1\max_{j\in[p]}|R_{2p}(s_j)|.
\end{align*}
We denote the largest step size by $s_p$. The step sizes and coefficients are given explicitly in \append{richardson}.

\subsubsection{BCH truncation error}
To bound $R_{2p}(s)$, a standard approach starts from the formal BCH expansion $\mathcal{S}(st) = \exp\big(\sum_{q=1}^{\infty} \Phi_q (st)^q\big)$ and then expands $\mathcal{S}(st)^{1/s}$ in $f(s)$~\cite{aftab2024multi, watson2025exponentially, mizuta2026commutator}. Here $\Phi_q$ is a weighted sum of grade-$q$ right-nested commutators of the Lindbladian summands and is bounded by $\|\Phi_q\|_{\diamond} \le \alpha_{\rm comm}^{(q)}/q^2$~\footnote{See Ref.~\cite[Proposition~5]{aftab2024multi}. Although the original proposition is stated for the operator norm, the derivation relies only on the triangle inequality and extends directly to the diamond norm.}.

If $\sup_{q=3,5,\ldots}(\alpha_{\comm}^{(q)})^{1/q}<\infty$, this formal BCH series converges absolutely for sufficiently small $s$ and represents the product formula. The expansion of $f(s)$ derived in \append{extrapolation} then gives
\begin{align}
\label{eq:full-bch-remainder-main}
|R_{2p}(s)|
=\mathcal O\big(
\|O\|(1+2t\sup_{q=3,5,\ldots}
(\alpha_{\comm}^{(q)})^{1/q})^{3p}s^{2p}
\big).
\end{align}
For local lattice systems, however, the available locality estimate for $\alpha_{\comm}^{(q)}$ contains a factorial factor $q!$ (see Ref.~\cite[Lemma~7]{mizuta2026commutator}), whose $q$-th root grows with $q$ and therefore does not control the supremum in Eq.~\eq{full-bch-remainder-main}. Thus, the full-series argument does not give a useful locality-based bound on $R_{2p}(s)$.

To avoid assuming convergence of the full BCH series, Mizuta~\cite{mizuta2026commutator} showed that the exponential of the truncated BCH expansion $\exp\big(\sum_{q=1}^{q_0}\Phi_q(st)^q\big)$ approximates the product formula for Hamiltonian simulation, assuming that each Hamiltonian summand is a sum of mutually commuting local Hermitian operators. However, this condition does not hold for most practical Lindbladians.

Here, we give a general bound for the BCH truncation error in terms of the doubly right-nested commutator bound $\alpha_{\rm comm}^{(q_1, \ldots, q_d)}$.  
\begin{theorem}[BCH truncation error bound]
\label{thm:bch-pf-main}
    Let $\alpha_{\comm}^{(q_1,\ldots,q_d)}(x)$ denote the doubly right-nested commutator sum with each Lindbladian summand scaled by $x$. Define
    \begin{align}
        \label{eq:alpha-comm-q0-main}
        \alpha_{\comm,q_0}(x)
        ={}& \sum_{d=1}^{\infty}\frac{1}{d!}
        \sum_{\substack{1\le q_1, \ldots, q_{d}\le q_0\\
        q_1+\cdots+q_{d} \ge q_0+1}}
        \alpha_{\comm}^{(q_1, \ldots, q_{d})}(x).
    \end{align}
    If $\alpha_{\comm,q_0}(st)\le 1$, the truncation error of the $q_0$-th order BCH expansion for the product formula $\mathcal{S}(st)$ satisfies
          \begin{align*}
          \textstyle
          \big\|\exp\big(\sum_{q=1}^{q_0} \Phi_q (st)^q\big) -   \mc{S}(st)\big\|_{\diamond}\le e\alpha_{\comm,q_0}(st).
          \end{align*}
\end{theorem}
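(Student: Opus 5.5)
We build the truncated BCH generator step by step along the product formula and compare it with the actual product using a variation-of-parameters (Duhamel) argument. The structure of the bound, $\sum_d \frac{1}{d!}\alpha_{\comm}^{(q_1,\ldots,q_d)}$ with $q_1+\cdots+q_d\ge q_0+1$, suggests the error is a sum of nested commutators of \emph{truncated} BCH generators with single summands, expanded in powers. Throughout we write $x=st$.

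\textbf{Step 1 (recursive BCH along the product).} Write $\mathcal S(x)=\prod_{i=1}^{2m}e^{x\mathcal A_i}$ with $\mathcal A_i\in\{\mathcal L_j/2\}$. Let $\mathcal P_k(x)=e^{x\mathcal A_k}\cdots e^{x\mathcal A_1}$ be the partial products. Let $\Psi_k(x)$ be the BCH generator of $\mathcal P_k$ truncated at total degree $q_0$. It is a polynomial in $x$ whose degree-$q$ coefficient is a combination of grade-$q$ right-nested commutators. Its coefficients are exactly those of the formal BCH series, so $\Psi_{2m}=\sum_{q\le q_0}\Phi_q x^q$. We telescope
\begin{align*}
e^{\Psi_{2m}}-\mathcal S=\sum_{k}\big(e^{\Psi_k}-e^{x\mathcal A_k}e^{\Psi_{k-1}}\big)\,\mathcal P_{k-1}^{\,\prime},
\end{align*}
where $\mathcal P'_{k-1}$ is the tail product of forward exponentials. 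The tail is a channel of norm one, and this is why we telescope in this direction: no inverse evolutions appear.

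\textbf{Step 2 (one-step defect).} Each term is the difference between $e^{\Psi_k}$ and $e^{x\mathcal A_k}e^{\Psi_{k-1}}$. Since $\Psi_k$ is, by definition, the degree-$\le q_0$ truncation of $\mathrm{BCH}(x\mathcal A_k,\Psi_{k-1})$, we express the difference through the derivative-of-exponential formula
\[
\tfrac{d}{d\lambda}e^{X(\lambda)}=\int_0^1 e^{uX}X'e^{(1-u)X}du,
\]
with $\frac{e^{\ad}-1}{\ad}$ expansions. We expand the exponentials as power series $\sum_d \ad_{\Psi}^d/d!$ acting on single summands. Collecting the terms beyond degree $q_0$ gives exactly nested brackets $[[\cdot],\ldots,[\cdot]]$ of right-nested commutators of grades $q_1,\ldots,q_d\le q_0$ with $\sum q_i\ge q_0+1$, each carrying a $1/d!$ weight. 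Summing over the positions $k$ and the summand choices reproduces $\alpha_{\comm,q_0}(x)$.

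\textbf{Step 3 (controlling the exponentials).} The remaining propagators $e^{u\Psi_k}$ are not channels, because the truncated generator need not be a Lindbladian. We bound them crudely by $e^{\|\Psi_k\|_\diamond}$, or better by a bootstrap: write $e^{\Psi}=e^{\text{true generator}}+(\text{error})$. The error is at most $\alpha_{\comm,q_0}\le1$ by induction, which gives a norm at most $1+\alpha_{\comm,q_0}\le e$. Making this self-consistent, via a Grönwall-type argument that yields a factor $\sum_n \alpha^n/n!\le e^{\alpha}\le e$, is the source of the constant $e$.

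\textbf{Main obstacle.} The hard step is Step 3 together with the combinatorics of Step 2. We need to show that the only non-contractive factors can be absorbed into the $1/d!$ series, so that no $e^{\|\Psi\|}$ (which scales with $N$) appears. First we would try to expand everything fully into series in $\ad$, avoiding exponentials of $\Psi$ altogether except the final channel tails. We would then verify that the coefficient bounds of right-nested BCH terms (at most $1$, in line with $\|\Phi_q\|\le\alpha^{(q)}/q^2$) let every term be dominated termwise by $\alpha_{\comm}^{(q_1,\ldots,q_d)}/d!$.
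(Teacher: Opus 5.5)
Your architecture is the paper's: Duhamel against forward channels, the $1/(\ell+1)!$ weights from the derivative of the exponential map, cancellation of all grades $\le q_0$, and a Gr\"onwall bound that produces the factor $e$. However, Steps 2--3 never give a working bound on the one-step defect $e^{\Psi_k}-e^{x\mathcal A_k}e^{\Psi_{k-1}}$. (A minor point: the channel tail sits on the left, $e^{\Psi_{2m}}-\mathcal S=\sum_k e^{x\mathcal A_{2m}}\cdots e^{x\mathcal A_{k+1}}\big(e^{\Psi_k}-e^{x\mathcal A_k}e^{\Psi_{k-1}}\big)$, which is harmless.)

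You cannot use $e^{x\mathcal A_k}e^{\Psi_{k-1}}=e^{\mathrm{BCH}(x\mathcal A_k,\Psi_{k-1})}$. The norm $\|\Psi_{k-1}\|_\diamond$ can be of order $xNg$, so this two-operator series need not converge; that is exactly the divergence the theorem is meant to bypass. Nor can you use $\int_0^1 e^{uX}X'e^{(1-u)X}\,du$ and bootstrap the factors $e^{u\Psi_k}$. For $0<u<1$ there is no channel to compare $e^{u\Psi_k}$ against, and the crude bound $e^{\|\Psi_k\|_\diamond}$ is exponential in $N$. Avoiding exponentials of $\Psi$ altogether is also impossible, because any Duhamel comparison of the two evolutions contains the truncated evolution on one side.

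The missing idea is to interpolate inside each step. Let $\Psi_k(\lambda)$, $\lambda\in[0,1]$, be the degree-$\le q_0$ truncation of the formal BCH logarithm of $e^{\lambda x\mathcal A_k}\mathcal P_{k-1}$. This is exactly the paper's truncated Magnus expansion $\Omega_{(q_0)}(\tau)$ of the piecewise-constant ODE at $\tau=k-1+\lambda$, which is why the paper passes to continuous time. Then $\partial_\lambda e^{\Psi_k(\lambda)}=\mathcal G_k(\lambda)e^{\Psi_k(\lambda)}$ with $\mathcal G_k=\sum_{\ell\ge0}\frac{1}{(\ell+1)!}\ad^{\ell}_{\Psi_k(\lambda)}(\partial_\lambda\Psi_k(\lambda))$. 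This series converges absolutely and contains no fractional exponentials. Its degree-$\le q_0$ part equals that of the formal generator of the untruncated logarithm, namely $x\mathcal A_k$. Duhamel then gives
\[
e^{\Psi_k(1)}-e^{x\mathcal A_k}e^{\Psi_{k-1}}=\int_0^1 e^{(1-\lambda)x\mathcal A_k}\big(\mathcal G_k(\lambda)-x\mathcal A_k\big)e^{\Psi_k(\lambda)}\,d\lambda .
\]
Now the only non-contractive factor is $e^{\Psi_k(\lambda)}$, which approximates the channel $e^{\lambda x\mathcal A_k}\mathcal P_{k-1}$. Your bootstrap is legitimate for that factor and yields $\|e^{\Psi_k(\lambda)}\|_\diamond\le e^{\alpha_{\comm,q_0}(x)}$.

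What still has to be proved is the counting step, \lem{doubly-nested-bound-general} in the paper. Writing $\Psi_{k,q}(\lambda)$ for the degree-$q$ part, you need $\sum_k\int_0^1\|[\Psi_{k,q_1},\ldots,\Psi_{k,q_\ell},\partial_\lambda\Psi_{k,q_{\ell+1}}]\|_\diamond\,d\lambda\le\alpha_{\comm}^{(q_1,\ldots,q_{\ell+1})}(x)$. This requires expanding every factor multilinearly through the Magnus representation, with $|c_{\sigma,q}|\le 1/q^2$ after symmetrization. You must then check that each full index tuple receives weight at most $q_{\ell+1}\prod_i q_i^{-2}\le 1$ after summing over $k$. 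Bounding factors one at a time by $\|\Phi_q\|_\diamond\le\alpha_{\comm}^{(q)}/q^2$ would instead give products of single-grade sums, not the doubly nested quantity in the statement.
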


To prove \thm{bch-pf-main}, we map the product formula $\mathcal{S}(st)$ to the continuous evolution
\begin{align*}
&\frac{\d\mc Y(\tau)}{\d\tau}
=\mc L(\tau)\mc Y(\tau),\qquad \mc Y(0)=\mc I,\\
&\mc L(\tau)
=
\begin{cases}
\dfrac{st}{2}\mc L_v,
&\tau\in(v-1,v],\quad 1\le v\le m,\\[1mm]
\dfrac{st}{2}\mc L_{2m+1-v},
&\tau\in(v-1,v],\quad m+1\le v\le 2m.
\end{cases}
\end{align*}
By construction, $\mc Y(2m)=\mc S(st)$. The Magnus expansion for this evolution gives a sequence of terms $\Omega_q(\tau)$, where each $\Omega_q(\tau)$ is a weighted sum of time integrals of grade-$q$ right-nested commutators of $\mc L(\tau)$. Let $\mc Y_{(q_0)}(\tau):=\exp(\Omega_{(q_0)}(\tau))$ denote the exponential of the truncated Magnus expansion $\Omega_{(q_0)}(\tau):=\sum_{q=1}^{q_0}\Omega_q(\tau)$. For every finite $q$, $\Omega_q(2m)=\Phi_q (st)^q$. This finite-order identity follows from the BCH and Magnus formulas and does not require convergence of either infinite series. Hence, $\mc Y_{(q_0)}(2m)$ is the truncated BCH exponential in \thm{bch-pf-main}. The BCH truncation error is therefore $\|\mc Y_{(q_0)}(2m)-\mc Y(2m)\|_\diamond$.

By the derivative formula for the exponential map, $\mc Y_{(q_0)}(\tau)$ is generated by
\begin{align}
\label{eq:truncated-generator}
 \mc{L}_{(q_0)}(\tau) = \sum_{\ell = 0}^{\infty}\frac{1}{(\ell+1)!}\ad_{\Omega_{(q_0)}(\tau)}^\ell(\dot{\Omega}_{(q_0)}(\tau)).
\end{align}
The variation-of-constants formula and Gronwall's inequality bound $\|\mc Y_{(q_0)}(2m)-\mc Y(2m)\|_\diamond$ in terms of the integrated generator difference $\int_0^{2m}\|\mc L_{(q_0)}(\tau)-\mc L(\tau)\|_{\diamond}\,\d\tau$. The order conditions of Ref.~\cite{fang2025high} imply that $\mc L_{(q_0)}(\tau)$ matches $\mc L(\tau)$ through grade $q_0$, so the lower-grade terms in Eq.~\eq{truncated-generator} cancel. The remaining terms are right-nested commutators of $\Omega_q$ and $\dot\Omega_q$. Since each $\Omega_q$ is itself a weighted sum of right-nested commutators of the original Lindbladian summands, these terms have a doubly right-nested commutator structure. Bounding the terms of grade greater than $q_0$ gives $\alpha_{\comm,q_0}(st)$. The complete derivation is given in \append{bch-truncate}.

For a fixed truncation order $q_0$, we instead expand the exponential of the truncated BCH series $\exp\big(\sum_{q=1}^{q_0}\Phi_q(st)^q/s\big)$. For sufficiently small $s$ such that $\alpha_{\comm,q_0}(st)=\mathcal O(s)$, the two contributions to $R_{2p}(s)$ satisfy
\begin{align}
\label{eq:remainder-bound-main}
|R_{2p}(s)|
={}&\mathcal O\big(
\|O\|(1+2t\max_{q=3,5,\ldots,q_0}
(\alpha_{\comm}^{(q)})^{1/q})^{3p}s^{2p}
\big)\nonumber\\
&+\mathcal O\big(
\|O\|\alpha_{\comm,q_0}(st)/s
\big).
\end{align}
The first term is the tail of the power-series expansion of the truncated BCH exponential. The second is the BCH truncation error in \thm{bch-pf-main} accumulated over $1/s$ steps. Compared with Eq.~\eq{full-bch-remainder-main}, truncation replaces the supremum over all BCH grades by a maximum over the finite range $q\le q_0$, so factorial growth at higher grades no longer enters the first term. The price is the second term, which we bound for local Lindbladians next.

Although $\alpha_{\comm,q_0}(st)$ involves an infinite sum of nested-commutator norms, \cor{local-bch-control-main} will show that $\alpha_{\comm,q_0}(st)$ remains bounded for $(\Gamma,k)$-local Lindbladians at sufficiently small step sizes. We first establish the following estimate for doubly right-nested commutators of local superoperators.
\begin{theorem}
\label{thm:doubly-nested-bound-main}
Let $\mc L=\sum_{v=1}^{M}\mc K_v$ be a decomposition into superoperators on an $N$-site lattice, each supported on at most $2k$ sites, satisfying the $g$-extensiveness condition. For positive integers $q_1,\ldots,q_d$, define $P_r=\sum_{j=r}^d q_j$ for $r\in[d]$ and $P_{d+1}=1$. Then
\begin{align}
    \label{eq:double-comm-k-local-main}
        &\sum_{v_{1,1}, \ldots, v_{d, q_{d}}=1}^{M}
        \big\|\big[ [\mc K_{v_{1,1}}, \ldots, \mc K_{v_{1, q_1}}], \ldots,[\mc K_{v_{d,1}}, \ldots, \mc K_{v_{d, q_{d}}}] \big]
        \big\|_{\diamond}\nonumber\\
        &\le\frac{1}{4kq_d}
        \bigg(\prod_{r=1}^{d} P_{r+1} q_r! (4kg)^{q_r}\bigg) N.
\end{align}
\end{theorem}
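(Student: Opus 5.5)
We prove the estimate by a support-growth counting argument, in the spirit of the locality bounds for single right-nested commutators (cf. Ref.~\cite[Lemma~7]{mizuta2026commutator}). A nested commutator is nonzero only if each new operator overlaps the support already built, so we fix the innermost operator first and count choices outward. The base inputs are two facts. First, $\|[\mc A,\mc B]\|_\diamond\le 2\|\mc A\|_\diamond\|\mc B\|_\diamond$, and the commutator vanishes when the supports are disjoint. Second, if $\mc X$ is supported on a set $S$, then $g$-extensiveness gives
\begin{align*}
\sum_{v:\,\supp(\mc K_v)\cap S\neq\emptyset}\|\mc K_v\|_\diamond\le |S|\,g .
\end{align*}
Every support has size at most $2k$, and supports add under commutators. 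Hence the factors $4kg = 2\cdot 2k\cdot g$ appear naturally, with one factor of $2$ from the commutator and $2k$ from the support size.

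\textbf{Step 1 (single nested block).} Consider one block $[\mc K_{v_1},\ldots,\mc K_{v_q}]$ as a superoperator supported on at most $2kq$ sites. Sum first over the innermost index $v_q$, anchored at a site; this gives at most $Ng\cdot(\text{const})$. Then move outward. When the commutator is taken with $\mc K_{v_{i}}$, the inner part has support at most $2k(q-i)$, so summing over $v_i$ costs at most $2\cdot 2k(q-i)g$. This yields
\begin{align*}
\sum\|[\mc K_{v_1},\ldots,\mc K_{v_q}]\|_\diamond\le N g\,(4kg)^{q-1}(q-1)!,
\end{align*}
which equals $\frac{1}{4kq}q!(4kg)^qN$. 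This matches the claimed bound for $d=1$, since then $P_2=1$.

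\textbf{Step 2 (outer nesting).} Treat the doubly nested commutator as a right-nested commutator of blocks $B_1,\ldots,B_d$. The innermost block $B_d$ is handled by Step 1 and carries the prefactor $\frac{1}{4kq_d}$ and the factor $N$. We then proceed outward by induction on $r=d-1,\ldots,1$. The accumulated inner object $[B_{r+1},\ldots,B_d]$ has support of size at most $2kP_{r+1}$. The commutator with $B_r$ is nonzero only if some factor $\mc K_{v_{r,i}}$ of $B_r$ overlaps it. To bound the sum over block $B_r$, use $\|[B_r,\mc X]\|_\diamond\le 2\|B_r\|_\diamond\|\mc X\|_\diamond$ and expand $B_r$ as a nested commutator. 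The overlap requirement forces a chain structure. One index of $B_r$ must meet the support of $\mc X$, which costs at most $2kP_{r+1}g$. The remaining $q_r-1$ indices grow the support from there as in Step 1, costing at most $(q_r-1)!(4kg)^{q_r-1}$ up to the combinatorics of which index is the anchor. Together with the factor $2$, the cost of block $B_r$ should be at most $P_{r+1}\,q_r!\,(4kg)^{q_r}$. Multiplying over $r$ gives the stated product.

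\textbf{Main obstacle.} The hard step is the overlap counting inside an outer block $B_r$. The anchor can be any of its $q_r$ indices, not only the innermost one. In addition, the inner nested structure of $B_r$ is itself a chain grown from its own innermost element, and this element need not be the anchor. My first attempt is a crude bound: choose the anchor position ($q_r$ ways), then bound the growth of the rest using the fact that every index must overlap the union of $\supp\mc X$ and the factors already chosen. This should give $q_r\cdot(q_r-1)!\cdot(4kg)^{q_r-1}\cdot 2kP_{r+1}g\cdot 2 = P_{r+1}q_r!(4kg)^{q_r}$, with the constant factors absorbed exactly. If the cost of revealing the ordering exceeds this, the fallback is to expand $B_r$ fully into the $2^{q_r-1}$ ordered products. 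In that case the missing factor would have to be recovered by a sharper support estimate of $2k$ rather than $4k$ per commutator.
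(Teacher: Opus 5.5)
\textbf{There is a genuine gap, in exactly the step you flag as the main obstacle.} Your overall structure matches the paper's proof: backward induction over the blocks, $|\supp\mc X|\le 2kP_{r+1}$, a union bound over the $q_r$ positions of an anchor that overlaps $\supp\mc X$, the factor $2$ from $\|[B_r,\mc X]\|_\diamond\le 2\|B_r\|_\diamond\|\mc X\|_\diamond$, and $g$-extensiveness for the anchor sum.

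What remains unproved is the estimate the whole induction rests on. Fix an anchor $\mc K_u$ at position $j$ of $B_r$. You need
\[
\sum_{\text{other indices}}\big\|[\mc K_{u_1},\ldots,\mc K_{u_{j-1}},\mc K_u,\mc K_{u_{j+1}},\ldots,\mc K_{u_{q_r}}]\big\|_\diamond\le (q_r-1)!(4kg)^{q_r-1}\|\mc K_u\|_\diamond .
\]
Your Step 1 proves this only for $j=q_r$, where the anchor is innermost. Your ``crude'' route does not give it:
\begin{itemize}
\item If each new factor is only required to overlap $\supp\mc X$ together with the factors already chosen, every step costs $4k(P_{r+1}+i)g$ rather than $4kig$. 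So $P_{r+1}$ enters at every step instead of once.
\item If you drop $\supp\mc X$ and reveal $B_r$ outward from the anchor, a valid revealing order depends on the configuration. For instance, the innermost factor $\mc K_{u_{q_r}}$ need not touch the anchor. Union-bounding over orders then costs an extra $(q_r-1)!$.
\end{itemize}
The fallback via the $2^{q_r-1}$ ordered products is not an argument.

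\textbf{The missing ingredient is a position-independent locality bound.} This is the paper's superoperator form of Mizuta's Lemma~7. For any $\mc A$ supported on at most $2k$ sites and any $q'\in\{0,\ldots,q\}$,
\[
\sum_{v_1,\ldots,v_q}\|[\mc K_{v_1},\ldots,\mc K_{v_{q'}},\mc A,\mc K_{v_{q'+1}},\ldots,\mc K_{v_q}]\|_\diamond\le q!(4kg)^q\|\mc A\|_\diamond .
\]
One way to prove it is in two parts.
\begin{itemize}
\item \emph{Outer factors.} The $q'$ factors outside $\mc A$ grow from a support made of $\mc A$ and the inner factors. This costs $(4kg)^{q'}q!/(q-q')!$.
\item \emph{Inner part.} For $[\mc A,\mc K_{w_1},\ldots,\mc K_{w_n}]$ with $n=q-q'$, union-bound over an index $m$ with $\supp\mc K_{w_m}\cap\supp\mc A\neq\emptyset$. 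Summing over $w_m$ costs at most $2kg$. The factors $w_{m+1},\ldots,w_n$ form a chain whose union must meet $\supp\mc K_{w_m}$, which is the same problem at length $n-m$. The factors $w_1,\ldots,w_{m-1}$ grow outward at cost $\prod_{i<m}2k(n-i)g$.
\end{itemize}
Write the normalized weighted count as $c_n(2kg)^n$. The recursion is $c_n\le\sum_{m=1}^n\frac{(n-1)!}{(n-m)!}c_{n-m}$, and $c_n=n!$ solves it. Including the $2^n$ from the commutator norms gives $n!(4kg)^n\|\mc A\|_\diamond$. Multiplying by the outer cost gives $q!(4kg)^q\|\mc A\|_\diamond$.

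Apply this lemma with $q\to q_r-1$, $q'=j-1$ and $\mc A=\mc K_u$. Your accounting then closes exactly to $P_{r+1}\,q_r!\,(4kg)^{q_r}$ per block, as in the paper.
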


The proof proceeds by backward induction over the inner commutators. A right-nested commutator $[\mc K_{v_1},\ldots,\mc K_{v_q}]$ can be nonzero only if each $\mc K_{v_i}$ overlaps the union of the supports of the terms to its right. To apply this observation to Eq.~\eq{double-comm-k-local-main}, write each inner commutator as
\begin{align*}
\mc C_{\vec v_r}:=[\mc K_{v_{r,1}},\ldots,\mc K_{v_{r,q_r}}],
\end{align*}
and write the part to its right as
\begin{align*}
\mc W_{r+1}:=[\mc C_{\vec v_{r+1}},\ldots,\mc C_{\vec v_d}].
\end{align*}
Let $X_{r+1}=\supp(\mc W_{r+1})$. The commutator $[\mc C_{\vec v_r},\mc W_{r+1}]$ can be nonzero only if at least one local term $\mc K_v$ in $\mc C_{\vec v_r}$ overlaps $X_{r+1}$. For each choice of such an overlapping term $\mc K_v$, the local nested-commutator argument of Ref.~\cite[Lemma~7]{mizuta2026commutator} bounds the sum over the other $q_r-1$ local terms in $\mc C_{\vec v_r}$ by $(q_r-1)!(4kg)^{q_r-1}\|\mc K_v\|_\diamond$. Accounting for the $q_r$ possible positions of $\mc K_v$ then gives
\begin{align*}
&\sum_{\vec v_r}\|[\mc C_{\vec v_r},\mc W_{r+1}]\|_\diamond
\\
&\le 2q_r(q_r-1)!(4kg)^{q_r-1}\|\mc W_{r+1}\|_\diamond
\sum_{v:\,\supp(\mc K_v)\cap X_{r+1}\neq\emptyset}
\|\mc K_v\|_\diamond.
\end{align*}
The $g$-extensiveness condition controls the remaining sum through
\begin{align*}
\sum_{v:\,\supp(\mc K_v)\cap X_{r+1}\neq\emptyset}
\|\mc K_v\|_\diamond
&\le \sum_{j\in X_{r+1}}
\sum_{v:\,j\in\supp(\mc K_v)}\|\mc K_v\|_\diamond\\
&\le |X_{r+1}|g.
\end{align*}
Since $\mc W_{r+1}$ contains $P_{r+1}$ local terms, $|X_{r+1}|\le 2kP_{r+1}$. Combining this support bound with the $g$-extensiveness estimate yields
\begin{align*}
\sum_{\vec v_r}\|[\mc C_{\vec v_r},\mc W_{r+1}]\|_\diamond
\le P_{r+1}q_r!(4kg)^{q_r}\|\mc W_{r+1}\|_\diamond.
\end{align*}
At the base of the induction, the bound $\sum_v\|\mc K_v\|_\diamond\le Ng$, obtained by summing Eq.~\eq{def-extensive-main} over all sites, accounts for the factor $N$. Iterating from right to left proves Eq.~\eq{double-comm-k-local-main}.

Expanding each coarse summand $\mc L_j$ into the elementary superoperators $\mc K_v$ in Eq.~\eq{local-decompose} and using the triangle inequality bounds $\alpha_{\comm}^{(q_1,\ldots,q_d)}$ by the corresponding commutator sum over the $\mc K_v$. \thm{doubly-nested-bound-main} then gives, for $q_i\le q_0$ and $q=\sum_i q_i$,
\begin{align*}
\alpha_{\comm}^{(q_1,\ldots,q_d)}(st)
\le q^d(4q_0kgst)^q N.
\end{align*}
Here $q_i\le q_0$ bounds the factorial growth within each inner commutator by $q_i!\le q_0^{q_i}$, giving the factor $q_0^q$ over all inner commutators, while $P_{r+1}\le q$ gives the factor $q^d$.
In Eq.~\eq{alpha-comm-q0-main}, the growth $q^d$ with the outer-commutator grade $d$ is controlled by the weight $1/d!$, since $\sum_{d=1}^{\infty}q^d/d!\le e^q$. Summing over the total grade and applying \thm{bch-pf-main} gives the following corollary. The detailed summation is given in \append{k-local-example}.
\begin{corollary}
\label{cor:local-bch-control-main}
Let $\mc L$ be a $(\Gamma,k)$-local Lindbladian on an $N$-site lattice satisfying the $g$-extensiveness condition. If $8e^2q_0kgst\le 1$, then
\begin{align*}
\alpha_{\comm,q_0}(st)\le Ne^{-q_0}.
\end{align*}
Consequently, if $Ne^{-q_0}\le 1$, the BCH truncation error is at most $eNe^{-q_0}$.
\end{corollary}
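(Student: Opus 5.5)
We plug the per-term estimate derived just above, $\alpha_{\comm}^{(q_1,\ldots,q_d)}(st)\le q^d(4q_0kgst)^qN$ with $q=\sum_i q_i$ and every $q_i\le q_0$, into the definition \eq{alpha-comm-q0-main}. We then reorganize the sum by the total grade $q$ and the outer grade $d$, and check that the geometric tail starting at $q=q_0+1$ is at most $Ne^{-q_0}$ under the hypothesis $8e^2q_0kgst\le1$. The second claim is immediate: once $\alpha_{\comm,q_0}(st)\le Ne^{-q_0}\le 1$, \thm{bch-pf-main} applies and bounds the BCH truncation error by $e\alpha_{\comm,q_0}(st)\le eNe^{-q_0}$.

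The steps are as follows.

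\emph{Step 1: justify the per-term estimate.} First expand each $\mc L_j$ into the elementary $\mc K_v$ via Eq.~\eq{local-decompose}, using the triangle inequality and multilinearity of the commutator. Then apply \thm{doubly-nested-bound-main}. The factor $\frac{1}{4kq_d}\le 1$ can be dropped. We use $P_{r+1}\le q$ for each of the $d$ outer factors and $q_r!\le q_0^{q_r}$. Each $\mc K_v$ is scaled by $st$, so the factor $(4kg)^{q_r}$ becomes $(4kgst)^{q_r}$.

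\emph{Step 2: count compositions.} Fix the total grade $q\ge q_0+1$. The number of ordered tuples $(q_1,\ldots,q_d)$ of positive integers summing to $q$ is $\binom{q-1}{d-1}\le 2^{q-1}$. Dropping the constraint $q_i\le q_0$, which only enlarges the sum, we get
\begin{align*}
\alpha_{\comm,q_0}(st)\le N\sum_{q\ge q_0+1}(4q_0kgst)^q\,2^{q}\sum_{d\ge1}\frac{q^d}{d!}\le N\sum_{q\ge q_0+1}(8e\,q_0kgst)^q.
\end{align*}
Here we used $\sum_{d\ge 1} q^d/d!\le e^q$.

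\emph{Step 3: sum the geometric series.} Set $x:=8e\,q_0kgst$. The hypothesis gives $x\le e^{-1}$, so
\begin{align*}
\alpha_{\comm,q_0}(st)\le N\frac{x^{q_0+1}}{1-x}\le N\frac{e^{-(q_0+1)}}{1-e^{-1}}=\frac{Ne^{-q_0}}{e-1}\le Ne^{-q_0}.
\end{align*}

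The main obstacle is the combinatorial bookkeeping in Step 2. The bound $P_{r+1}\le q$ must be applied uniformly for all $d$ factors, so that it produces exactly $q^d$ and is absorbed by $1/d!$. The composition count $2^{q-1}$ is the remaining factor, and the constant $8e^2$ in the hypothesis is chosen to leave precisely this slack. If the counting turned out to be lossier than expected, I would instead bound $\sum_{d}\binom{q-1}{d-1}q^d/d!$ directly by $(1+q)^{q}/\ldots$. However, the crude estimate above already closes with room to spare, giving an extra factor $1/(e-1)$.
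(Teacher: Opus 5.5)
Your proposal is correct and is essentially the paper's own argument (the proof of \lem{bch-approx} in \append{k-local-example}). Both use the per-term bound $q^d(4q_0kgst)^qN$ from \thm{doubly-nested-bound-main}, combine the composition count $\binom{q-1}{d-1}\le 2^{q-1}$ with $\sum_{d}q^d/d!\le e^q$ to reach $N\sum_{q\ge q_0+1}(8eq_0kgst)^q$, sum the geometric tail under $8e^2q_0kgst\le 1$, and invoke \thm{bch-pf-main} for the second claim; the only cosmetic difference is that you absorb $st$ into $g$ rather than carrying an explicit factor $(st)^q$, which is equivalent.
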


\subsubsection{Extrapolation error and complexity}
Combining Eq.~\eq{remainder-bound-main} with $\|\bm b\|_1=\mathcal O(\log p)$, choosing, for a sufficiently small constant $c>0$,
\begin{align*}
s_p\le c\big(
(1+2t\max_{q=3,5,\ldots,q_0}
(\alpha_{\comm}^{(q)})^{1/q})^{-3/2}
\big)
\end{align*}
makes the first term decay exponentially with $p$, so a sufficiently large $p=\Theta(\log(1/\eps))$ suffices. Requiring the accumulated BCH truncation error to be of order $\eps\|O\|/\log p$ gives the second condition in the following theorem.

\begin{theorem}[Observable estimation via extrapolation]\label{thm:general-extra-algo-main}
There is a universal constant $c>0$ such that the following holds. Let $t>0$ be the simulation time and $\eps\in(0,1)$ the precision. Given a BCH truncation order $q_0\ge3$, choose a sufficiently large extrapolation order $p=\Theta(\log(1/\eps))$ and a maximum step size $s_p$ satisfying
\begin{align*}
    s_p
    \le{}& c\big(
    (1+2t\max_{q=3,5,\ldots,q_0}
    (\alpha_{\mathrm{comm}}^{(q)})^{1/q})^{-3/2}
    \big),\\
    \alpha_{\mathrm{comm}, q_0}(s_p t)
    \le{}&\frac{cs_p\eps}{\log p}.
\end{align*}
Then the extrapolation algorithm estimates $\tr[O e^{t\mathcal{L}}\rho_0]$ to precision $\eps\|O\|$ with probability at least $2/3$, using $\widetilde{\mathcal{O}}(1/\eps^2)$~\footnote{The notation {$\widetilde{\mathcal{O}}(\cdot)$} hides polylogarithmic factors.} circuit runs, each consisting of at most $\widetilde{\mathcal{O}}(1 / s_p)$ Trotter steps.
\end{theorem}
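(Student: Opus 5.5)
The proof combines three ingredients: the extrapolation bias bound, the remainder estimate Eq.~\eq{remainder-bound-main} under the two step-size conditions, and a standard sampling argument.

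\textbf{Bias bound.} Write the estimator as $F^{(p)}=\sum_j b_j f(s_j)$. By the even-power expansion of $f(s)$ and the cancellation property of the coefficients $\{b_j\}$, we have $|F^{(p)}-f(0)|\le \|\bm b\|_1\max_j |R_{2p}(s_j)|$ with $\|\bm b\|_1=\mathcal O(\log p)$. Two facts from \append{richardson} about the well-conditioned node set of Refs.~\cite{low2019well,watson2025exponentially} are needed: every $s_j\le s_p$, and each $1/s_j$ is an integer, so that $\mathcal S(s_jt)^{1/s_j}$ is a genuine product of Trotter steps. Then bound $R_{2p}(s_j)$ by Eq.~\eq{remainder-bound-main}, treating the two contributions separately.

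\textbf{Power-series tail.} Put $\beta:=1+2t\max_{q\le q_0}(\alpha_{\comm}^{(q)})^{1/q}$. The first condition gives $\beta^{3/2}s_j\le c$, so $\beta^{3p}s_j^{2p}\le c^{2p}$. For $c$ small this decays geometrically in $p$. Taking $p=C\log(1/\eps)$ with $C$ large then makes $\|\bm b\|_1$ times this term at most $\eps\|O\|/2$, since the $\log p$ factor is absorbed by geometric decay. The constants hidden in Eq.~\eq{remainder-bound-main} must be uniform in $p$ up to a $C_0^p$ factor, which can then be absorbed into $c$.

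\textbf{BCH truncation term.} This term is $\mathcal O(\|O\|\alpha_{\comm,q_0}(s_jt)/s_j)$. The main obstacle is that the hypothesis is stated only at $s_p$, while the bound is needed at every $s_j$. Every doubly nested sum $\alpha_{\comm}^{(q_1,\ldots,q_d)}(x)$ in Eq.~\eq{alpha-comm-q0-main} is homogeneous of degree $\sum_i q_i\ge q_0+1\ge 2$ in $x$. Hence $\alpha_{\comm,q_0}(s_jt)\le (s_j/s_p)^{q_0+1}\alpha_{\comm,q_0}(s_pt)$, and therefore
\begin{align*}
\frac{\alpha_{\comm,q_0}(s_jt)}{s_j}\le \frac{\alpha_{\comm,q_0}(s_pt)}{s_p}\le \frac{c\eps}{\log p}.
\end{align*}
This monotonicity also ensures $\alpha_{\comm,q_0}(s_jt)\le 1$, which is the hypothesis of \thm{bch-pf-main} used to derive Eq.~\eq{remainder-bound-main}. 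It likewise supplies the smallness $\alpha_{\comm,q_0}=\mathcal O(s)$ assumed there. Multiplying by $\|\bm b\|_1=\mathcal O(\log p)$ gives bias $\mathcal O(c\eps\|O\|)\le\eps\|O\|/2$.

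\textbf{Sampling and cost.} For each $j$, estimate $f(s_j)$ by preparing $\rho_0$, applying $1/s_j$ Trotter steps, and measuring $O$; each outcome is bounded by $\|O\|$. Allocate $n_j\propto |b_j|\|\bm b\|_1/\eps^2$ shots. By Hoeffding's inequality (or Chebyshev's inequality), the statistical error is at most $\eps\|O\|/2$ with probability at least $2/3$. The total shot count is $\mathcal O(\|\bm b\|_1^2/\eps^2)=\widetilde{\mathcal O}(1/\eps^2)$. Each circuit uses at most $\max_j 1/s_j=\widetilde{\mathcal O}(1/s_p)$ steps, because the smallest node exceeds $s_p$ by at most a $\mathrm{poly}(p)=\mathrm{polylog}(1/\eps)$ factor, as \append{richardson} should confirm for the chosen nodes.
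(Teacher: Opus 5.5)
Your proposal is correct and follows the same route as the paper's proof (\thm{general-extra-algo}): Richardson cancellation of the even-power expansion, a split of $R_{2p}$ into the tail of the truncated-BCH exponential plus the truncation error accumulated over $1/s$ steps, transfer of the second step-size condition from $s_p$ to every $s_j$ via the monotonicity of $\alpha_{\comm,q_0}(st)/s$, Hoeffding sampling, and the cost estimate $1/s_1=\Theta(p/s_p)$. The differences are minor. The paper re-derives Eq.~\eq{remainder-bound-main} inline using \lem{bound-E}, \lem{power-bound}, $\|e^{s_jt\mc G_{(q_0)}(s_j)}\|_\diamond\le 1+e\alpha_{\comm,q_0}(s_jt)$ and $(1+s_j)^{1/s_j}\le e$, which is exactly where your smallness condition enters. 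It also uses the same shot count at every node with a union bound instead of your $|b_j|$-weighted allocation. Finally, your three $\eps\|O\|/2$ budgets sum to $3\eps\|O\|/2$, so target each bias piece at $\eps\|O\|/4$ by adjusting $C$ and $c$.
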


For a $(\Gamma,k)$-local and $g$-extensive Lindbladian $\mc L$ on an $N$-site lattice, \thm{doubly-nested-bound-main} and \cor{local-bch-control-main} determine admissible parameters for \thm{general-extra-algo-main}. Specifically, the choices
\begin{align*}
q_0=\mathcal{O}(\log(Nkgt/\eps)),\quad
\frac{1}{s_p}=\widetilde{\mathcal{O}}\big(\sqrt N(kgt)^{3/2}\big)
\end{align*}
satisfy the two conditions in \thm{general-extra-algo-main} and give the following corollary.

\begin{corollary}
\label{cor:local-observable-estimation-main}
Let $\mc L$ be a $(\Gamma,k)$-local and $g$-extensive Lindbladian on an $N$-site lattice. For any $t>0$ and $\eps\in(0,1)$, there exists an extrapolation algorithm that estimates $\tr[Oe^{t\mc L}\rho_0]$ to precision $\eps\|O\|$ with probability at least $2/3$, using $\widetilde{\mathcal{O}}(1/\eps^2)$ circuit runs, each consisting of
\begin{align*}
\widetilde{\mathcal{O}}\big(\sqrt{N}(kgt)^{3/2}\big)
\end{align*}
Trotter steps.
\end{corollary}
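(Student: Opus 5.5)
The statement to prove is \cor{local-observable-estimation-main}. The plan is to instantiate \thm{general-extra-algo-main} with explicit parameters $q_0$, $p$ and $s_p$, and then verify its two hypotheses using \thm{doubly-nested-bound-main} and \cor{local-bch-control-main}.

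\emph{Step 1: the first condition.} We bound $\max_{q=3,5,\ldots,q_0}(\alpha_{\comm}^{(q)})^{1/q}$. Taking $d=1$ in \thm{doubly-nested-bound-main} and expanding each $\mc L_j$ into the $\mc K_v$ by the triangle inequality gives $\alpha_{\comm}^{(q)}\le q!(4kg)^q N/(4kq)\le (4q_0kg)^q N$. Hence $(\alpha_{\comm}^{(q)})^{1/q}\le 4q_0kg\,N^{1/q}$. For $q\ge 3$ we have $N^{1/q}\le N^{1/3}$, so $1+2t\max_q(\alpha_{\comm}^{(q)})^{1/q}=\mathcal O(q_0kgtN^{1/3})$. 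The first condition therefore holds with $s_p^{-1}=\Theta((q_0kgt)^{3/2}N^{1/2})$. The maximum is attained near $q=3$, and that is where the $\sqrt N$ comes from. Since $q_0$ will be polylogarithmic, this is $\widetilde{\mathcal O}(\sqrt N(kgt)^{3/2})$.

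\emph{Step 2: the second condition.} This is the main obstacle, because $\alpha_{\comm,q_0}(s_pt)$ must be bounded using the explicit small-step hypothesis of \cor{local-bch-control-main}. That hypothesis is $8e^2q_0kgs_pt\le1$. With the $s_p$ from Step 1 we get $q_0kgs_pt=\mathcal O((q_0kgt)^{-1/2}N^{-1/2})$. Decreasing $s_p$ by a constant factor if necessary, the hypothesis holds as long as $q_0kgtN\gtrsim 1$. If instead $q_0kgt$ is small, we replace $s_p$ by $\min\{s_p,\,1/(8e^2q_0kgt)\}$. This costs only a factor $\mathcal O(q_0kgt)$ in $1/s_p$, which is again within the $\widetilde{\mathcal O}$ bound; for $kgt\lesssim1$ a single step suffices up to polylog factors.

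\cor{local-bch-control-main} then gives $\alpha_{\comm,q_0}(s_pt)\le Ne^{-q_0}$. The second condition requires $Ne^{-q_0}\le cs_p\eps/\log p$. This holds with
\begin{align*}
q_0=\Big\lceil \log\frac{N\log p}{cs_p\eps}\Big\rceil=\mathcal O(\log(Nkgt/\eps)),
\end{align*}
using $\log(1/s_p)=\mathcal O(\log(Nq_0kgt))$. There is a mild circularity, since $s_p$ depends on $q_0$ polynomially and $q_0$ depends on $\log(1/s_p)$. It is resolved by fixing $q_0=C\log(Nkgt/\eps)$ for a large constant $C$ and checking that $q_0\ge \log(N\log p/(cs_p\eps))$ holds, since the right side is $\log(Nkgt/\eps)+\mathcal O(\log\log)$. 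We also ensure $q_0\ge3$ and $Ne^{-q_0}\le1$.

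\emph{Step 3: conclusion.} With $p=\Theta(\log(1/\eps))$ and these choices, \thm{general-extra-algo-main} yields $\widetilde{\mathcal O}(1/\eps^2)$ circuit runs. Each run uses $\widetilde{\mathcal O}(1/s_p)=\widetilde{\mathcal O}(\sqrt N(kgt)^{3/2})$ Trotter steps, where the polylog factors $q_0^{3/2}$ and those hidden in \thm{general-extra-algo-main} are absorbed into $\widetilde{\mathcal O}$.
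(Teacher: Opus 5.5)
Your proposal is correct and follows the same route as the paper's proof in \thm{local-extrapolation}. That route is: instantiate the general extrapolation theorem, bound $\mu_{\comm,q_0}$ through the $d=1$ commutator estimate, cap the step size by $(8e^2q_0kgt)^{-1}$ so that \cor{local-bch-control-main} applies, and take $q_0$ logarithmic. The differences are minor. The paper uses the sharper bound $\mu_{\comm,q_0}\le 4kg\max\{3N^{1/3},eq_0\}$, giving $(kgt)^{3/2}(\sqrt N+\log^{3/2}(Nkgt/\eps))$ instead of your multiplicative $q_0^{3/2}$. It also checks the truncation condition at $s_{p,1}=(8e^2q_0kgt)^{-1}$ and transfers it to smaller $s_p$ via monotonicity of $\alpha_{\comm,q_0}(st)/s$, which removes your $q_0$--$s_p$ circularity. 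In your version $q_0$ must be at least about $\log(N/\eps)$ even when $kgt\ll1$, so $C\log(N(1+kgt)/\eps)$ is the safe choice.
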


The Trotter depth therefore depends polylogarithmically on the precision and scales as $\widetilde{\mathcal{O}}(\sqrt{N})$ with the system size. The detailed parameter choices and complexity analysis are given in \append{k-local-example}.
\section{Circuit implementation}
We consider two ways to implement the local channels in each Trotter step.

\emph{Stinespring dilation.}
For a $g$-extensive $k$-local Lindbladian with $k=\mathcal{O}(1)$, each $\mc L_j$ is supported on at most $k$ qubits. By Stinespring dilation~\cite{stinespring1955positive}, $e^{\tau\mc L_j}$ can be realized by a unitary $U_j$ acting on the $k$ system qubits and at most $2k$ ancillas, followed by tracing out the ancillas. The ancillas can be reset and reused for the next local channel. Approximating $U_j$ to accuracy $\delta$ costs $\operatorname{polylog}(1/\delta)$ elementary gates by the Solovay--Kitaev theorem~\cite{dawson2006solovay}. Choosing $\delta$ so that the synthesis errors accumulated over the full circuit are at most $\eps$ changes the complexity only by polylogarithmic factors. Since $\mc S(st)$ contains $2m$ local channels, one Trotter step uses $\widetilde{\mathcal O}(m)$ gates and $\mathcal O(1)$ reusable ancillas. Combining this cost with \thm{commutator} and \cor{local-observable-estimation-main}, respectively, gives the per-run gate complexities
\begin{align*}
\widetilde{\mathcal O}\big(m\sqrt N\,(gt)^{3/2}\eps^{-1/2}\big)
\quad\text{and}\quad
\widetilde{\mathcal O}\big(m\sqrt N\,(gt)^{3/2}\big)
\end{align*}
for channel approximation and observable estimation. The latter uses $\widetilde{\mathcal O}(1/\eps^2)$ independent circuit runs, giving a total gate count of $\widetilde{\mathcal O}(m\sqrt N\,(gt)^{3/2}/\eps^2)$.

\emph{Block encoding.}
For a $(\Gamma,k)$-local and $g$-extensive Lindbladian, suppose that we have access to $(\alpha_\mu,a)$-block encodings of the Hamiltonian terms $H_\mu$ and $(\alpha_{\nu,\gamma},a)$-block encodings of the local jump components $d_{\nu,\gamma}$, with $\alpha_\mu=\Theta(\|H_\mu\|)$ and $\alpha_{\nu,\gamma}=\Theta(\|d_{\nu,\gamma}\|)$. The LCU construction gives a block encoding of $L_\nu=\sum_\gamma d_{\nu,\gamma}$ with normalization $\sum_\gamma\alpha_{\nu,\gamma}$ using $\mathcal O(\Gamma)$ queries and gates~\cite{childs2012hamiltonian,gilyen2019quantum}. The dissipative evolution generated by $L_\nu$ for time $\tau$ can then be implemented using $\widetilde{\mathcal O}\big(\tau\big(\sum_\gamma\|d_{\nu,\gamma}\|\big)^2+1\big)$ queries to this block encoding~\cite{li2023simulating}. Similarly, the evolution generated by $H_\mu$ uses $\widetilde{\mathcal O}(\tau\|H_\mu\|+1)$ queries~\cite{low2019hamiltonian}. Since $\|A(\cdot)B\|_\diamond=\|A\|\|B\|$, the decomposition in Eq.~\eq{local-decompose} and Eq.~\eq{1-norm-bound} imply
\begin{align*}
\sum_\mu\|H_\mu\|
+\sum_\nu\Big(\sum_\gamma\|d_{\nu,\gamma}\|\Big)^2
\le \sum_v\|\mc K_v\|_\diamond
\le Ng.
\end{align*}
Accounting for the cost of constructing the block encoding of each $L_\nu$, one Trotter step of duration $st$ therefore costs
\begin{align*}
&\widetilde{\mathcal O}\Big(
st\sum_\mu\|H_\mu\|+m\Gamma
+st\Gamma\sum_\nu\Big(\sum_\gamma\|d_{\nu,\gamma}\|\Big)^2
\Big)\\
\le{}&\widetilde{\mathcal O}\big(
st\Gamma\sum_v\|\mc K_v\|_\diamond+m\Gamma
\big)\\
\le{}&\widetilde{\mathcal O}\big(\Gamma(Ngst+m)\big)
\end{align*}
queries and gates. A circuit with $1/s$ Trotter steps thus costs $\widetilde{\mathcal O}(\Gamma Ngt+m\Gamma/s)$. Using the step count in \cor{local-observable-estimation-main}, the per-run complexity for observable estimation is
\begin{align}
\label{eq:block-encoding-cost-main}
\widetilde{\mathcal O}\big(
\Gamma Ngt+m\Gamma\sqrt N\,(kgt)^{3/2}
\big).
\end{align}
Multiplying Eq.~\eqref{eq:block-encoding-cost-main} by the $\widetilde{\mathcal O}(1/\eps^2)$ circuit runs gives the total query and gate complexity
\begin{align*}
\widetilde{\mathcal O}\Big(
\frac{\Gamma Ngt+m\Gamma\sqrt N\,(kgt)^{3/2}}{\eps^2}
\Big).
\end{align*}
It remains to account for errors in implementing the local channels. A circuit with step size $s_j$ contains $\mathcal O(m/s_j)$ such channels, so approximating each one to diamond-norm error $\mathcal O(\eps s_j/(m\log p))$ limits the error of that circuit to $\mathcal O(\eps/\log p)$. Since $\|\bm b\|_1=\mathcal O(\log p)$, the error after Richardson extrapolation is $\mathcal O(\eps)$. This precision requirement contributes only polylogarithmic factors to the query and gate counts, which are absorbed in $\widetilde{\mathcal O}$. The work registers can be reused, so the implementation requires $a+\mathcal O(\log\Gamma)+\operatorname{polylog}(m\Gamma Ngt/\eps)$ ancillas, which is polylogarithmic when $a$ is polylogarithmic.

Table~\ref{tab:complexity_comparison} separates channel approximation from observable estimation and reports the gate or query cost of one circuit and the number of circuit runs. To compare the system-size dependence, we specialize to finite-range models on bounded-degree interaction graphs, including the spin chains in Sec.~V. Each Hamiltonian term and jump operator then acts on $\mathcal O(1)$ sites, and each site belongs to $\mathcal O(1)$ terms, so $\Gamma=1$, $k=\mathcal O(1)$, and $m=\mathcal O(N)$. In this setting, our per-circuit gate count scales as $\widetilde{\mathcal O}(N^{3/2})$ for both tasks, compared with $\widetilde{\mathcal O}(N^2)$ for channel simulation by the linear combination of unitaries (LCU)~\cite{li2023simulating,peng2025quantum} and for observable estimation by randomized dissipation (RD)~\cite{kato2024exponentially}, and $\widetilde{\mathcal O}(N^3)$ for the incoherent linear combination of superoperators (LCS)~\cite{yu2025lindbladian}. With the same gate-counting assumptions, we estimate a per-circuit cost $\widetilde{\mathcal O}(m(Ngt)^2)=\widetilde{\mathcal O}(N^3(gt)^2)$ for the step-size extrapolation schemes of Mohammadipour and Li~\cite{mohammadipour2025reducing}, with the same $N$ and $t$ dependence as LCS.

The improvement is therefore primarily in the system-size dependence. For channel approximation, LCU methods have better time and precision dependence but use polylogarithmically many ancillas, whereas our method uses $\mathcal O(1)$ reusable ancillas. The higher-order ($q>2$) dilated-Hamiltonian (DH) method~\cite{ding2024simulating} likewise has better time and precision exponents, while its ancilla cost grows with the approximation order and the number of terms. For the DH entry in Table~\ref{tab:complexity_comparison}, we specialize the local-component factor to the present input model; the time, precision, and ancilla dependences follow the resource accounting of Ref.~\cite{kato2024exponentially} under standard LCU block encodings and QSVT simulation of the dilated Hamiltonian. For observable estimation, our per-circuit time dependence is $t^{3/2}$, compared with $t^2$ for RD, LCS, and the step-size extrapolation schemes, while all require $\widetilde{\mathcal O}(1/\eps^2)$ circuit runs. In the $\Gamma=1$ setting, our method, LCS, and RD all use $\mathcal O(1)$ ancillas (two for LCS). For general $\Gamma$, our block-encoding ancilla cost is polylogarithmic when the input block encodings have polylogarithmic ancilla cost.
\begin{table}[t]
\centering
\caption{Complexity and ancilla requirements for $(\Gamma,\mathcal{O}(1))$-local Lindbladian simulation. The top and bottom sections denote channel approximation and observable estimation, respectively. All observable estimation methods require $\widetilde{\mathcal{O}}(1/\eps^2)$ circuit runs.}
\label{tab:complexity_comparison}
\begin{ruledtabular}
\begin{tabular}{l l c}
Scheme & Complexity per circuit run & Ancilla \\
\hline
LCU~\cite{li2023simulating,peng2025quantum} & $\widetilde{\mathcal{O}}(m\Gamma Ngt)$ & $\mathrm{polylog}(m\Gamma Ngt/\eps)$ \\
$q$-order DH \cite{ding2024simulating} & $\widetilde{\mathcal{O}}((m\Gamma)^q Ngt (Ngt/\eps)^{1/q})$ & $\Omega(q\log(m\Gamma))$ \\
\textbf{Ours ($\Gamma=1$)} & $\widetilde{\mathcal{O}}(m\sqrt{N}(gt)^{3/2}\eps^{-1/2})$ & $\mathcal{O}(1)$ \\
\hline
LCS~\cite{yu2025lindbladian} & $\widetilde{\mathcal{O}}(m\Gamma(Ngt)^2)$ & $2$ \\
RD \cite{kato2024exponentially} & $\widetilde{\mathcal{O}}(\Gamma(Ngt)^2)$ & $\mathcal{O}(\log \Gamma)$ \\
\textbf{Ours ($\Gamma=1$)} & $\widetilde{\mathcal{O}}(m\sqrt{N}(gt)^{3/2})$ & $\mathcal{O}(1)$ \\
\textbf{Ours } & $\widetilde{\mathcal{O}}(\Gamma Ngt + m\Gamma\sqrt{N}(gt)^{3/2})$ & $\mathrm{polylog}(m\Gamma Ngt/\eps)$ \\
\end{tabular}
\end{ruledtabular}
\end{table}
\section{Numerical demonstration}
We use two representative open spin-chain models to examine the predicted
commutator scaling. The transverse-field Ising model, a standard model of
quantum magnetism~\cite{coldea2010quantum}, provides a simple benchmark with
two noncommuting Hamiltonian components and local amplitude
damping~\cite{overbeck2017multicritical}. 
In addition, motivated by quantum simulations of one-dimensional
Heisenberg-type chains~\cite{joshi2023exploring}, we consider an isotropic
Heisenberg chain with a uniform $X$ field as a complementary benchmark. Its
$XX$, $YY$, and $ZZ$ exchange terms give a more varied commutator structure.
For this model, we consider both
amplitude damping (AD) and phase damping (PD, equivalently pure dephasing),
which are standard non-unital and unital Markovian noise mechanisms,
respectively~\cite{gardiner2004quantum,
breuer2002theory}.

Specifically, for the Ising chain, the system Hamiltonian is decomposed as $H=H_X+H_Z$, with
\begin{align*}
    H_X=-J\sum_{j=1}^{N-1}X_{j}X_{j+1},\quad H_Z=-h\sum_{j=1}^NZ_j.
\vspace{-.2em}
\end{align*}

For the Heisenberg benchmark, we similarly decompose the Hamiltonian into its
$XX$, $YY$, and $ZZ$ exchange components together with the uniform $X$-field
term. 
\begin{align*}
H=J\sum_{j=1}^{N-1}(X_jX_{j+1}+Y_jY_{j+1}+Z_jZ_{j+1})+h\sum_{j=1}^N X_j .
\end{align*}

To model local Markovian noise, the Ising benchmark uses amplitude
damping, whereas the Heisenberg benchmark is studied separately under
amplitude damping and phase damping. We label the corresponding jump
operators by AD and PD:
\begin{align*}
L_\nu^{(\mathrm{AD})}
 =\sqrt{\gamma}\ket{0}_\nu\bra{1}_\nu,\quad
L_\nu^{(\mathrm{PD})}
 =\sqrt{\gamma}Z_\nu,\quad \nu=1,\ldots,N.
\end{align*}
Here, $\gamma$ is the dissipation strength. Amplitude damping relaxes each
spin toward $\ket{0}$, whereas phase damping suppresses coherences without
changing populations.

For channel approximation, we report the fixed-input trace-norm error
$\|(e^{t\mathcal{L}}-\mathcal{S}(t/r)^r)\rho_0\|_1$, which is bounded above
by the corresponding diamond-norm error. The reference evolution is computed
by high-precision numerical integration. For observable estimation, we compute
the expectation value $\tr[O\mathcal{S}(t/r)^r\rho_0]$. We consider four
initial states to examine the input dependence: the pure all-ones state
$\ket{1^N}\bra{1^N}$, the pure all-zeros state
$\ket{0^N}\bra{0^N}$, the pure uniform superposition state
$\ketbra{+^N}{+^N}$, and the maximally mixed state $I^{\otimes N}/2^N$.
Further numerical details are provided in \append{numerics}.

\emph{Commutator Scaling.} We first examine the system-size dependence of the
fixed-input Trotter error. For a $g$-extensive $(\Gamma,k)$-local Lindbladian,
Eqs.~\eq{one-step-trotter-error} and \eq{comm-3} give an
$\mathcal{O}(N)$ upper bound on the diamond-norm Trotter error. As shown in
\fig{commutator} and \fig{heisenberg-noise}, we plot the simulation error versus the qubit number
$N\in[4,10]$ for various inputs on a log-log scale and fit the data by linear
regression. All fitted exponents are below $1$, consistent with this upper
bound. In particular, \fig{c} has the largest fitted exponent, $0.962$.

\begin{figure*}[htbp]
    \centering
    \begin{subfigure}[ht]{.5\linewidth}
        \includegraphics[width=\linewidth]{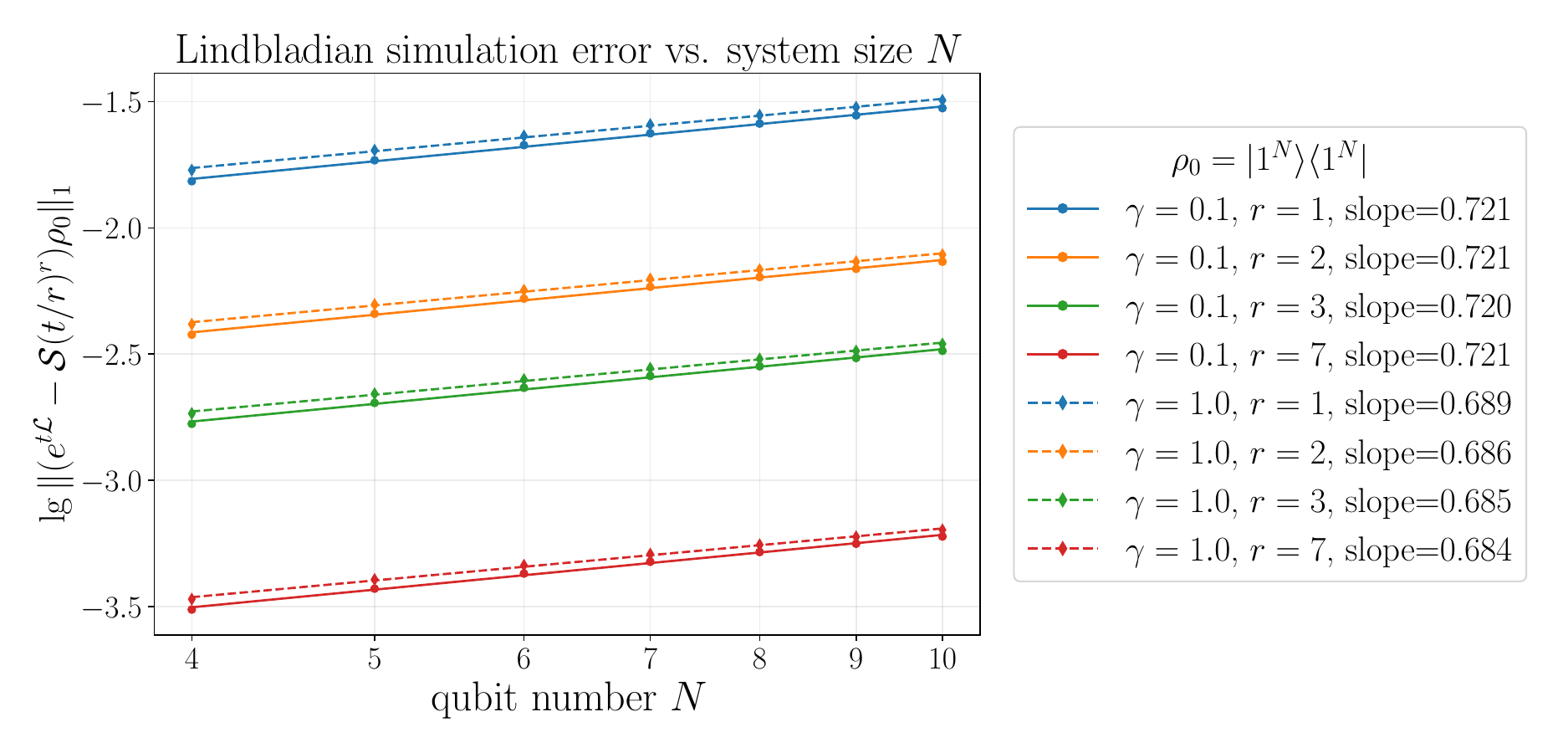}
        \caption{$\rho_0=\ket{1^N}\bra{1^N}$\hspace*{4em}}
        \label{fig:a}
    \end{subfigure}\hfill
    \begin{subfigure}[ht]{.5\linewidth}
        \includegraphics[width=\linewidth]{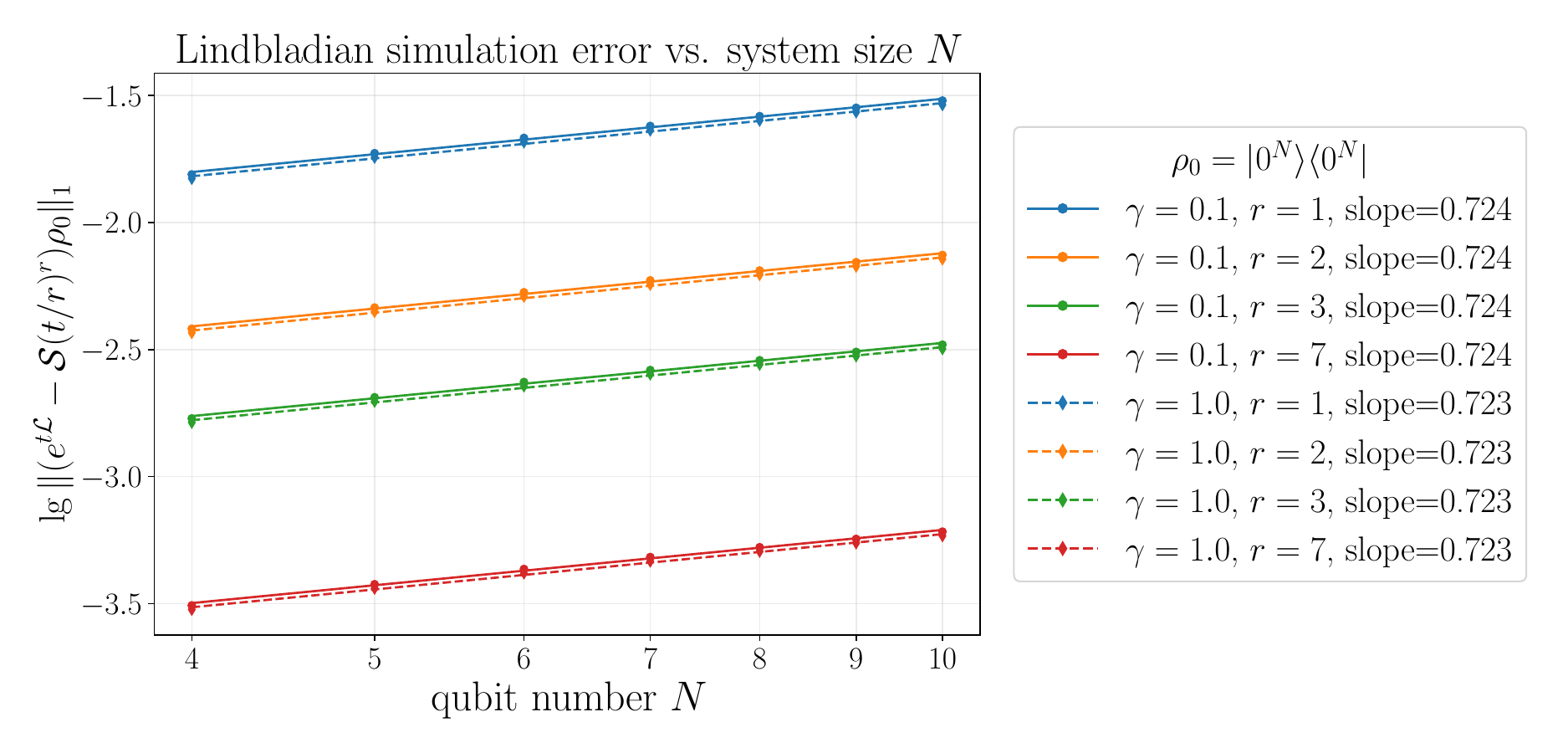}
        \caption{$\rho_0=\ket{0^N}\bra{0^N}$\hspace*{4em}}
        \label{fig:b}
    \end{subfigure}\\[1em]
    \begin{subfigure}[ht]{.5\linewidth}
        \includegraphics[width=\linewidth]{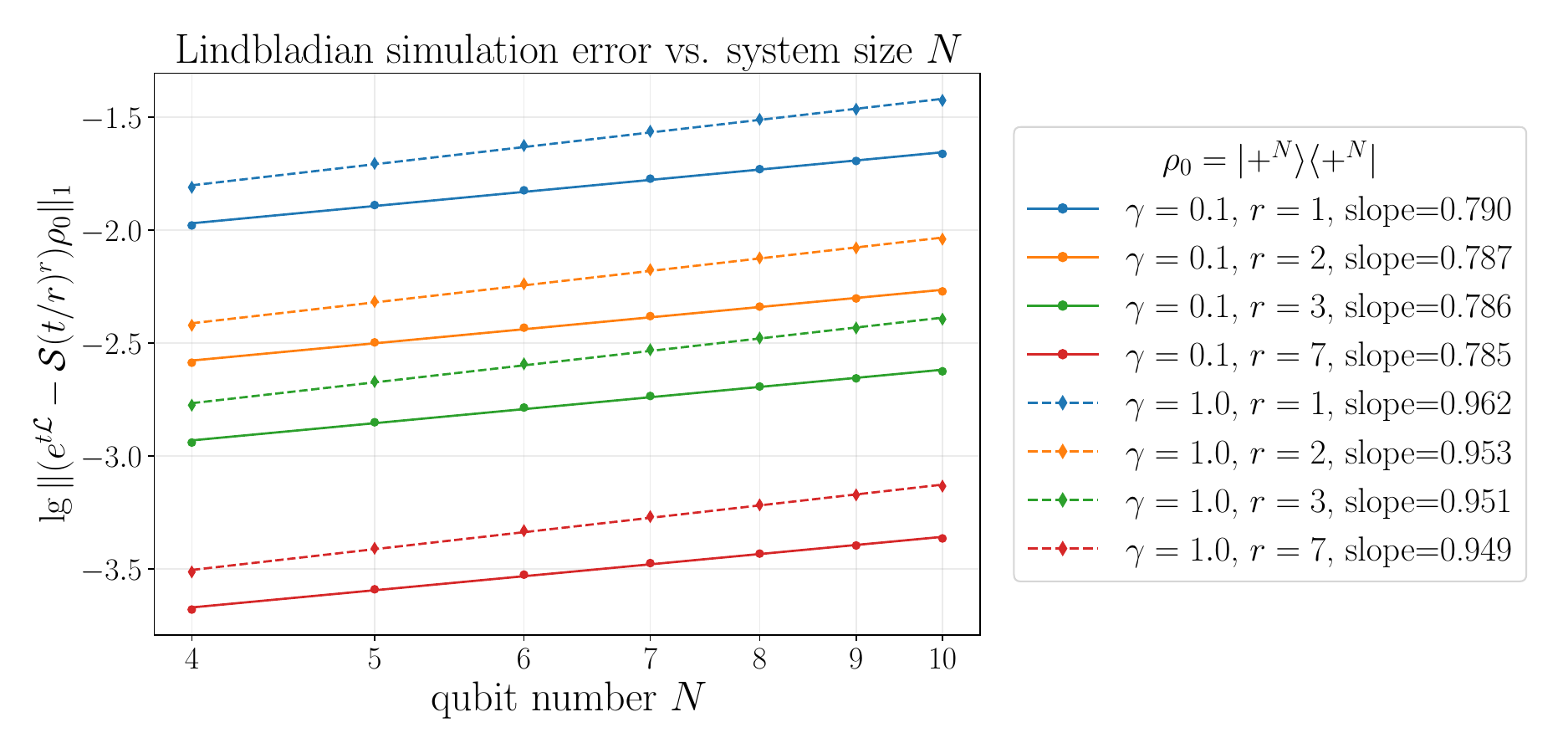}
        \caption{$\rho_0=\ket{+^N}\bra{+^N}$\hspace*{4em}}
        \label{fig:c}
    \end{subfigure}\hfill
    \begin{subfigure}[ht]{.5\linewidth}
        \includegraphics[width=\linewidth]{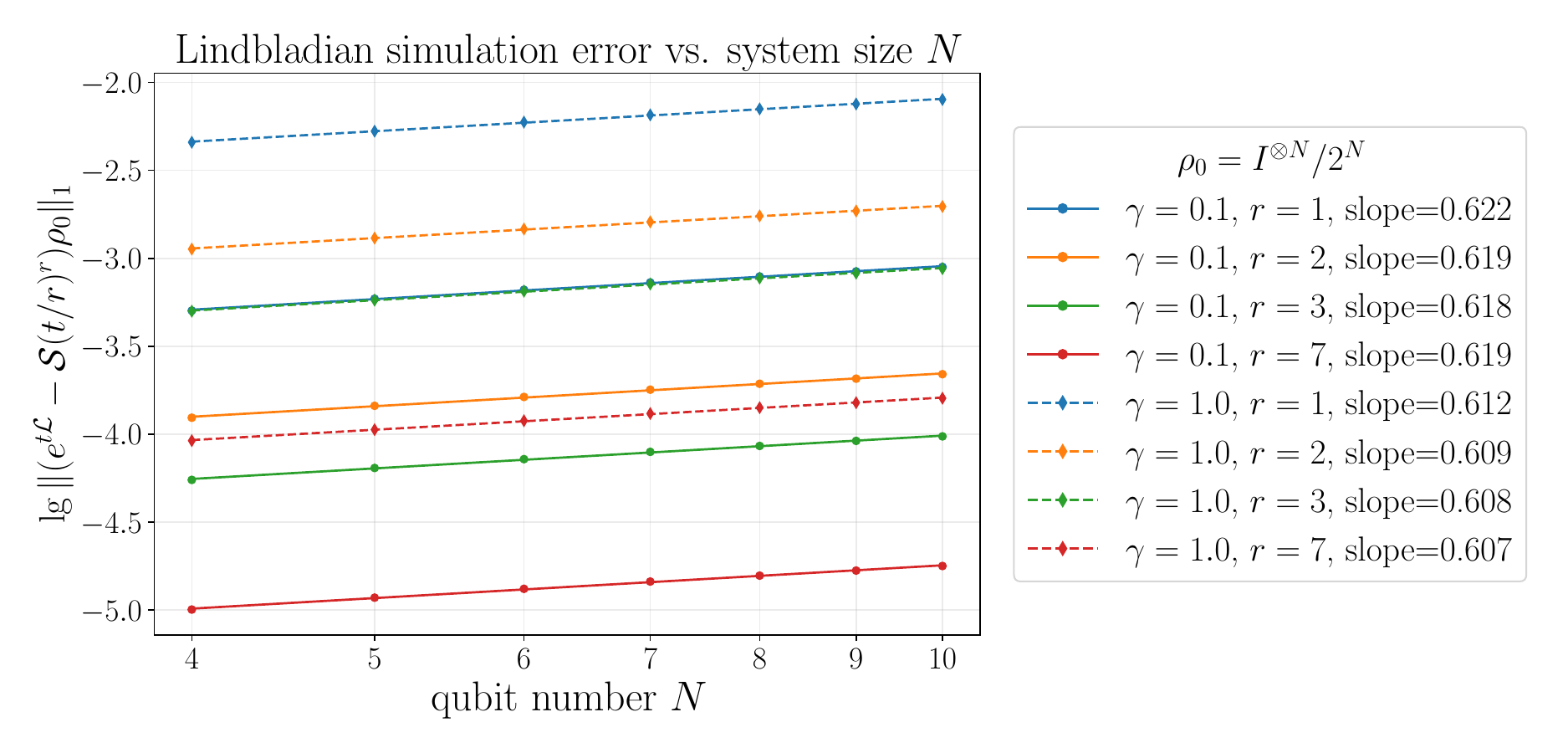}
        \caption{$\rho_0=I^{\otimes N}/2^N$\hspace*{5em}}
        \label{fig:d}
    \end{subfigure}
    \caption{\textbf{Trotter error for Ising-chain Lindbladian simulation with local relaxation versus system size $N$ with various initial states $\rho_0$.} We fix the parameters $J=1.0$, $h=0.5$, and $t=0.2$. The data points are plotted on a log-log scale and fitted by linear regression. The solid lines with circle markers denote dissipation strength $\gamma=0.1$, whereas the dashed lines with diamond markers denote $\gamma=1.0$. The colors distinguish the number of Trotter steps $r$. (a) For $\gamma=0.1$, the errors scale approximately as $\mathcal{O}(N^{0.721})$. For $\gamma=1.0$, they scale approximately as $\mathcal{O}(N^{0.689})$. (b) For $\gamma=0.1$, the errors scale approximately as $\mathcal{O}(N^{0.724})$. For $\gamma=1.0$, they scale approximately as $\mathcal{O}(N^{0.723})$. (c) The fitted exponents are the largest among the tested initial states. For $\gamma=0.1$, the errors scale approximately as $\mathcal{O}(N^{0.790})$. For $\gamma=1.0$, they scale approximately as $\mathcal{O}(N^{0.962})$. (d) The fitted exponents are the smallest, and the errors are approximately one order of magnitude smaller than for the other initial state, which might be explained by the fact that high von Neumann entropy accelerates quantum simulation~\cite{Zhao_2025}. For $\gamma=0.1$, the errors scale approximately as $\mathcal{O}(N^{0.622})$. For $\gamma=1.0$, they scale approximately as $\mathcal{O}(N^{0.612})$.}
    \label{fig:commutator}
\end{figure*}

\begin{figure*}[htbp]
    \centering
    \begin{subfigure}[ht]{.5\linewidth}
        \includegraphics[width=\linewidth]{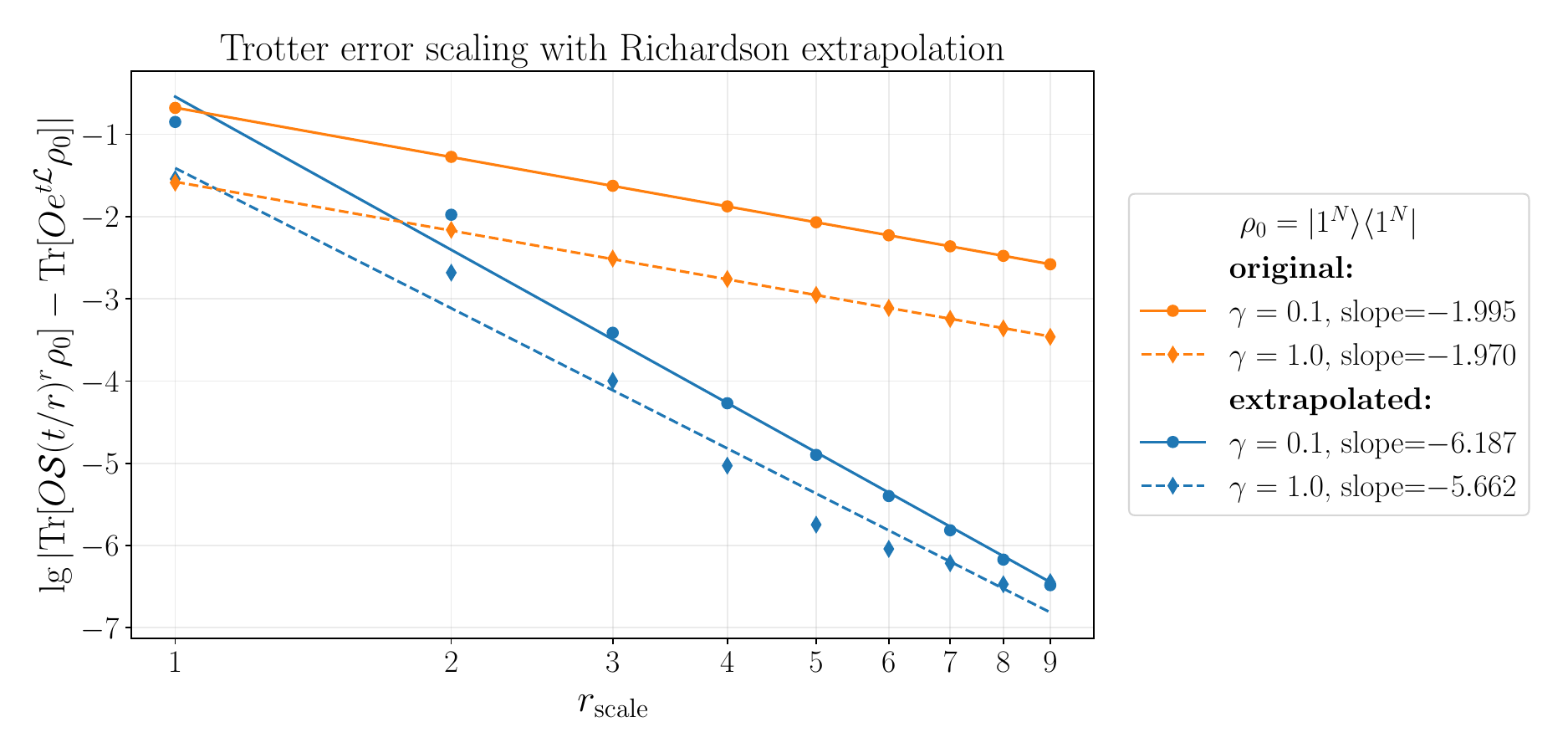}
        \caption{$O=\sum_{j=1}^N Z_j$\hspace*{4em}}
    \end{subfigure}\hfill
    \begin{subfigure}[ht]{.5\linewidth}
        \includegraphics[width=\linewidth]{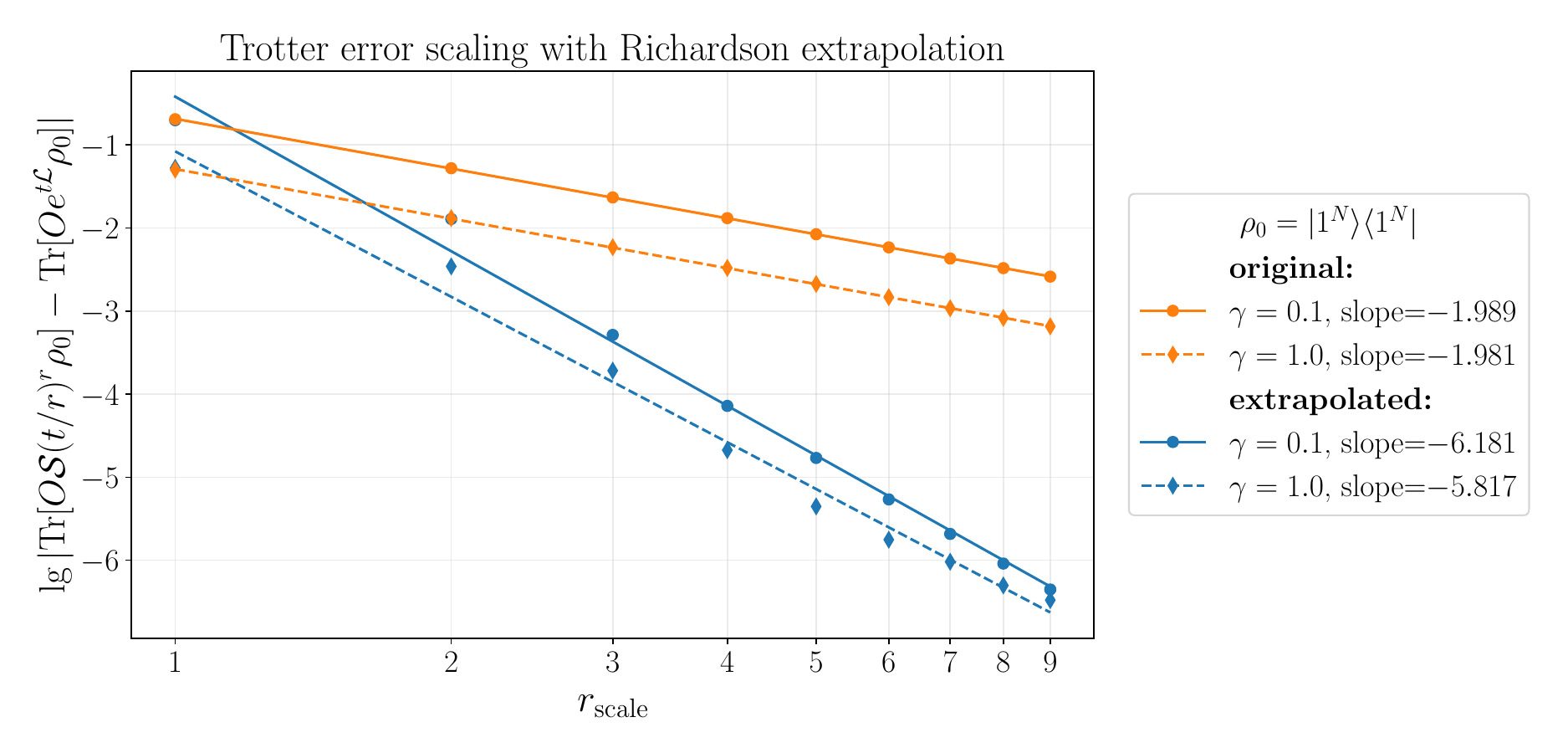}
        \caption{$O=\sum_{j=1}^{N-1} Z_jZ_{j+1}$\hspace*{4em}}
    \end{subfigure}
    \caption{\textbf{Error scaling of Ising-chain observable expectations with and without Richardson extrapolation ($N=5$).} The orange lines denote raw Trotter errors with $r=4r_{\text{scale}}$, whereas the blue lines represent extrapolated results using $r\in\{r_{\text{scale}},2r_{\text{scale}},4r_{\text{scale}}\}$. The solid and dashed lines correspond to different dissipation strengths $\gamma$.}
    \label{fig:extrapolation}
\end{figure*}

\begin{figure*}[htbp]
    \centering
    \begin{subfigure}[ht]{.5\linewidth}
        \includegraphics[width=\linewidth]{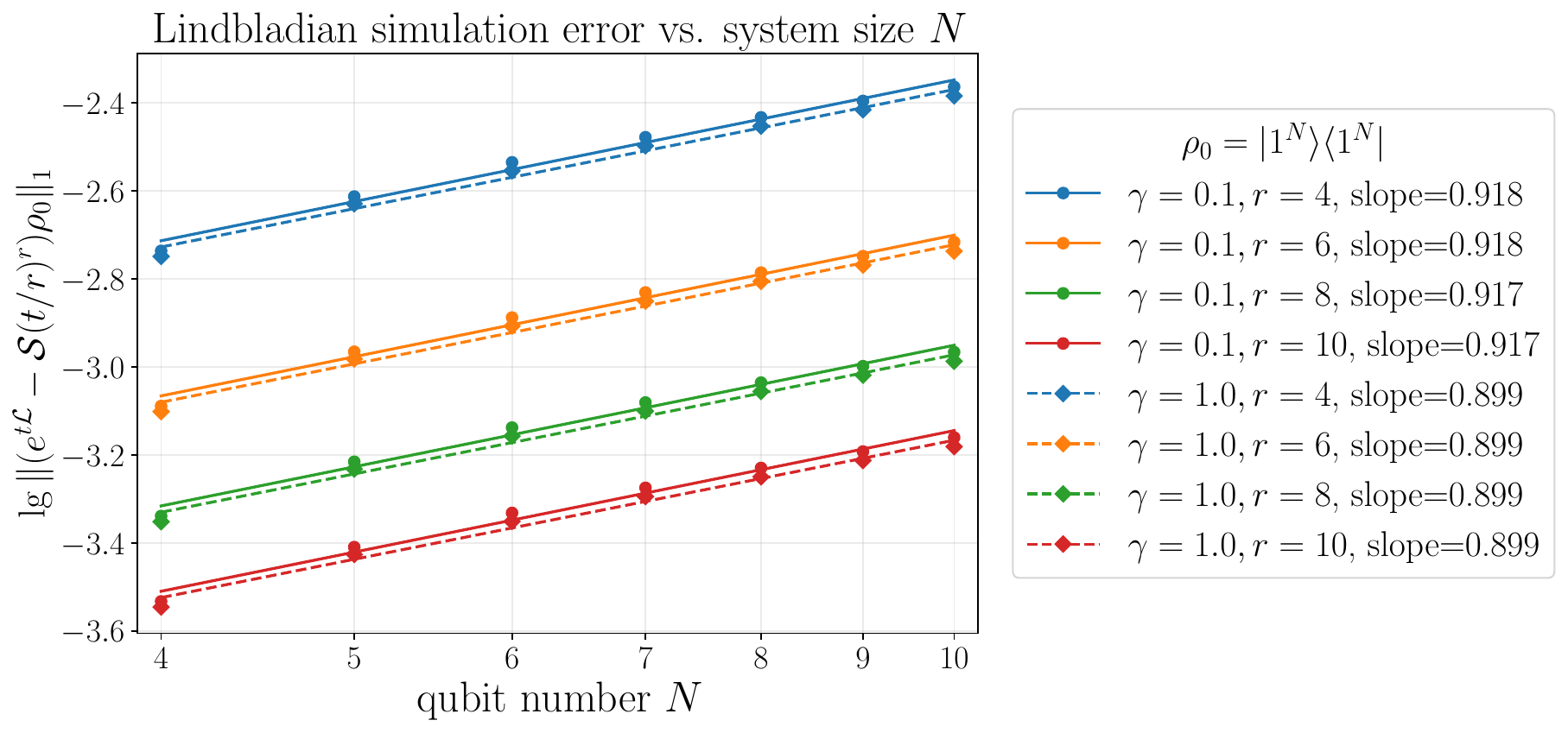}
        \caption{$\rho_0=\ket{1^N}\bra{1^N}$, $L_\nu^{(\mathrm{AD})}$}
        \label{fig:heisenberg-noise-a}
    \end{subfigure}\hfill
    \begin{subfigure}[ht]{.5\linewidth}
        \includegraphics[width=\linewidth]{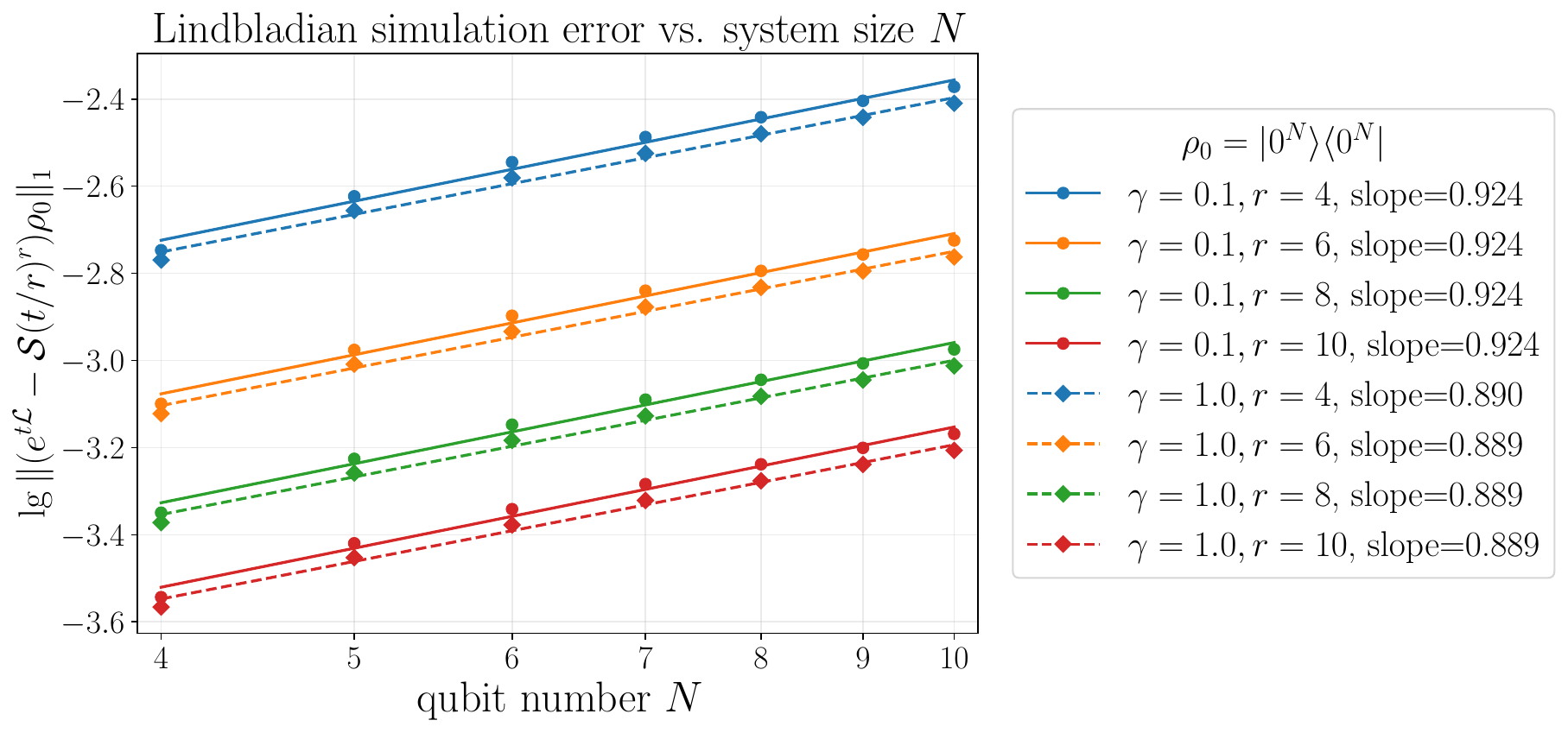}
        \caption{$\rho_0=\ket{0^N}\bra{0^N}$, $L_\nu^{(\mathrm{PD})}$}
        \label{fig:heisenberg-noise-b}
    \end{subfigure}\\[1em]
    \begin{subfigure}[ht]{.5\linewidth}
        \includegraphics[width=\linewidth]{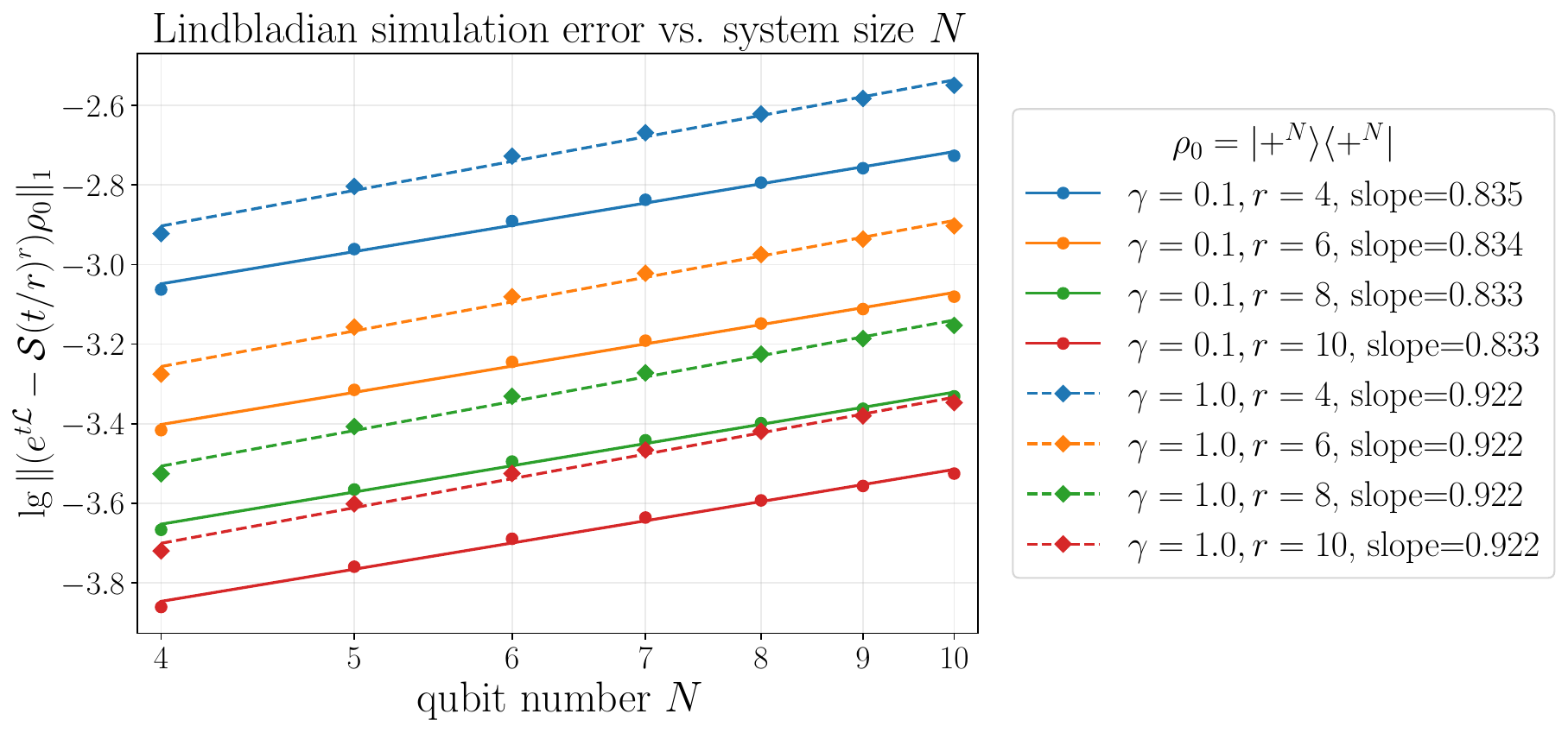}
        \caption{$\rho_0=\ket{+^N}\bra{+^N}$, $L_\nu^{(\mathrm{PD})}$}
        \label{fig:heisenberg-noise-c}
    \end{subfigure}\hfill
    \begin{subfigure}[ht]{.5\linewidth}
        \includegraphics[width=\linewidth]{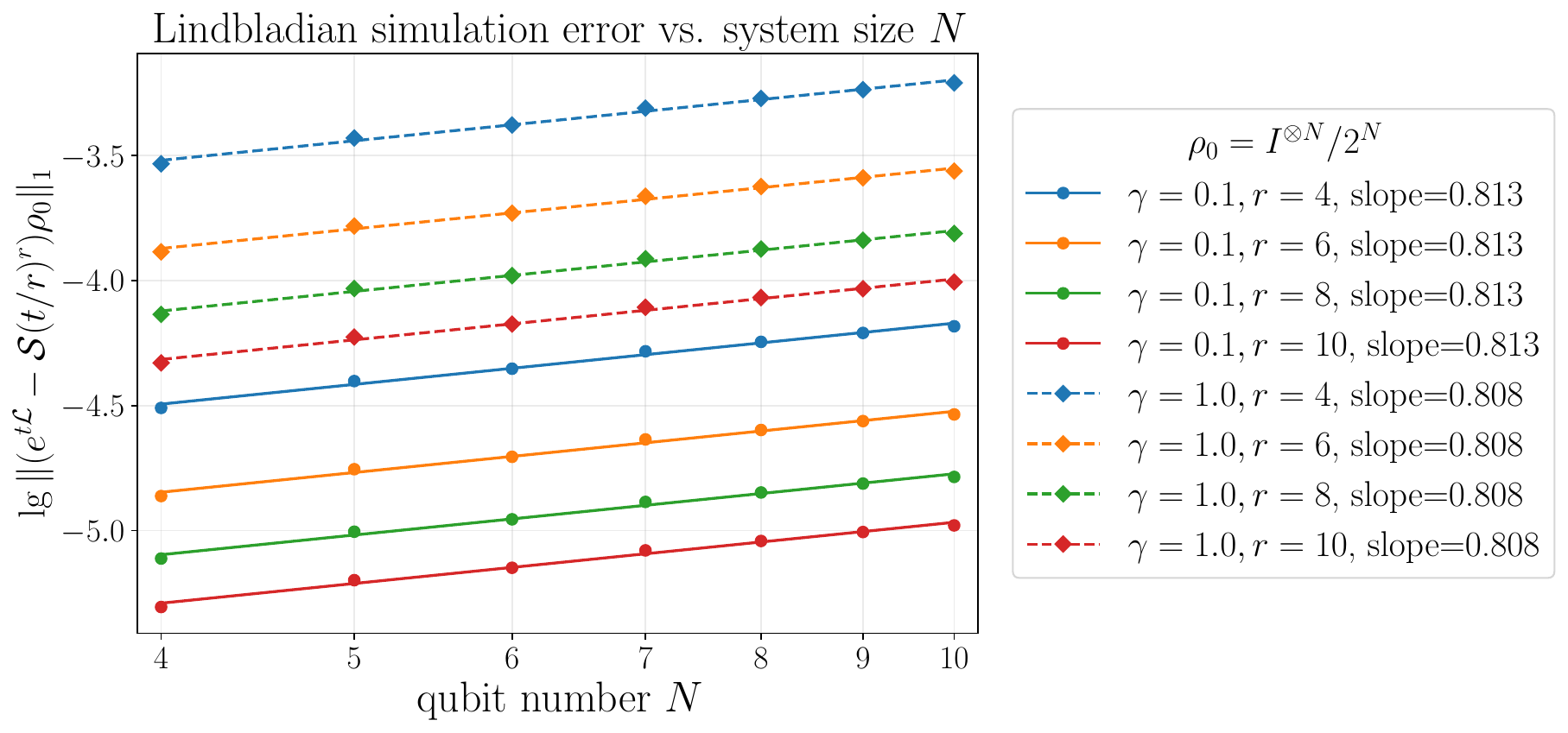}
        \caption{$\rho_0=I^{\otimes N}/2^N$, $L_\nu^{(\mathrm{AD})}$}
        \label{fig:heisenberg-noise-d}
    \end{subfigure}
    \caption{\textbf{Trotter error for Heisenberg-chain Lindbladian simulation with local amplitude damping and local phase damping versus system size $N$.}
    We fix the parameters $J=1.0$, $h=0.5$, and $t=0.15$. Panels (a) and (d) use local amplitude damping, whereas panels (b) and (c) use local phase damping. The data points are plotted on a log-log scale and fitted by linear regression. The solid lines with circle markers denote dissipation strength $\gamma=0.1$, whereas the dashed lines with diamond markers denote $\gamma=1.0$. The colors distinguish the number of Trotter steps $r$. 
    The fitted exponents in panels (a)--(d) are, respectively,
    $0.918$, $0.924$, $0.835$, and $0.813$ for $\gamma=0.1$,
    and $0.899$, $0.890$, $0.922$, and $0.808$ for $\gamma=1.0$.
    Thus, across the tested initial states and both noise mechanisms, all fitted
    exponents remain below $1$, consistent with the predicted $\mathcal{O}(N)$
    upper bound.}
    \label{fig:heisenberg-noise}
\end{figure*}

\begin{figure*}[htbp]
    \centering
    \begin{subfigure}[ht]{.5\linewidth}
        \includegraphics[width=\linewidth]{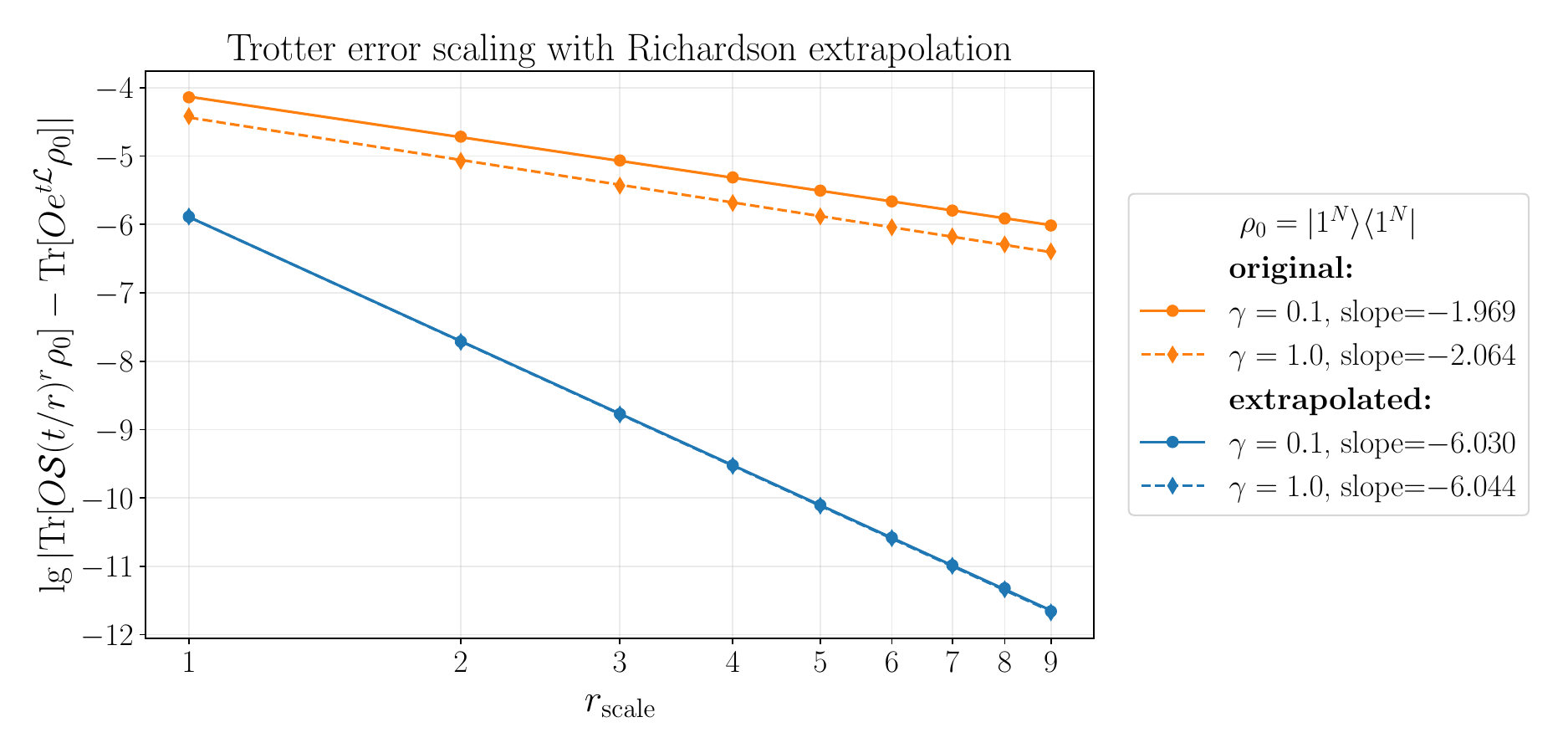}
        \caption{$O=\sum_{j=1}^N Z_j$\hspace*{4em}}
        \label{fig:heisenberg-extrap-z}
    \end{subfigure}\hfill
    \begin{subfigure}[ht]{.5\linewidth}
        \includegraphics[width=\linewidth]{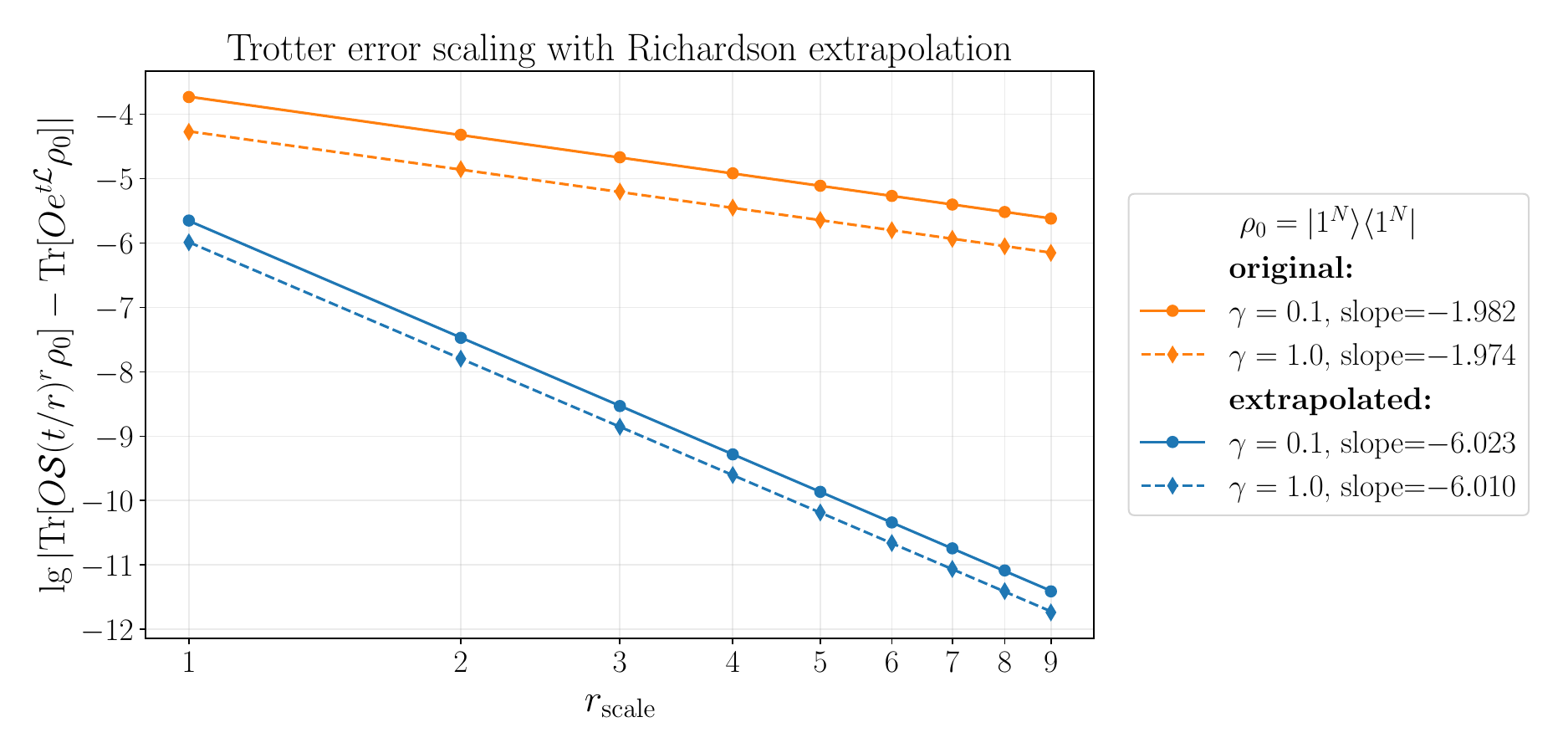}
        \caption{$O=\sum_{j=1}^{N-1} Z_jZ_{j+1}$\hspace*{4em}}
        \label{fig:heisenberg-extrap-zz}
    \end{subfigure}
    \caption{\textbf{Error scaling of Heisenberg-chain observable expectations with and without Richardson extrapolation ($N=5$).} We consider local amplitude damping and the all-ones initial state $\rho_0=\ket{1^N}\bra{1^N}$. The orange lines denote raw Trotter errors with $r=4r_{\mathrm{scale}}$, whereas the blue lines represent extrapolated results using $r\in\{r_{\mathrm{scale}},2r_{\mathrm{scale}},4r_{\mathrm{scale}}\}$. The solid and dashed lines correspond to different dissipation strengths $\gamma$.}
    \label{fig:heisenberg-extrapolation}
\end{figure*}

\emph{Richardson Extrapolation.} We next examine the suppression of Trotter bias via Richardson extrapolation. To resolve the higher-order scaling, we simulate a small system ($N=5$) whose evolution channels can be computed directly via superoperator exponentials.
As shown in \fig{extrapolation} and \fig{heisenberg-extrapolation}, after extrapolation with $p=3$ step sizes, the errors decrease by several orders of magnitude. The extrapolated errors scale approximately as $r_{\mathrm{scale}}^{-6}$ for both the magnetization observable $O=\sum_{j=1}^N Z_j$ and the two-point correlator $O=\sum_{j=1}^{N-1} Z_jZ_{j+1}$, consistent with the predicted $2p$-th order extrapolation remainder for $p=3$.

Taken together, these two numerical studies examine complementary consequences of our analysis. The system-size sweeps probe the system-size dependence of the fixed-input Trotter error: across the TFIM and Heisenberg benchmarks, all nontrivial state-dependent errors remain consistent with the $\mathcal{O}(N)$ diamond-norm upper bound. By contrast, the step-size sweeps probe the cancellation underlying Richardson extrapolation, changing the observed leading error from approximately second to sixth order.
The system-size tests cover two models, four initial states, and two noise mechanisms, while the Richardson tests cover two models and two observables.

\section{Discussion}
In this paper, we analyze two Trotter-based quantum algorithms for Lindbladian simulation and show that their complexities scale with the nested commutators of the operator summands. The first approximates the time-evolution channel within a small diamond distance, while the second uses Richardson extrapolation so that the number of Trotter steps per circuit depends polylogarithmically on $1/\eps$. To derive these commutator bounds, we develop a general truncation bound for the BCH expansion that bypasses common convergence issues in analyzing high-precision Trotter methods for both open and closed systems~\cite{aftab2024multi,watson2025exponentially, chakraborty2025quantum,mizuta2026commutator}. Our analysis shows that product formulas achieve improved system-size complexity scaling for simulating $g$-extensive $(\Gamma,k)$-local Lindbladians relative to prior bounds. While we use two specific circuit implementations as examples, the Trotter error bounds are independent of these details. This suggests that the cost per Trotter step could be further reduced using methods tailored to specific systems. Finally, we conduct numerical simulations to support our theory.

\section*{Acknowledgements}
Xinzhao Wang, Shuo Zhou, and Tongyang Li are supported by the National Natural Science Foundation of China (Grant Numbers 62372006 and 92365117). Xiaoyang Wang is supported by the RIKEN TRIP initiative (RIKEN Quantum) and the UTokyo Quantum Initiative.

\section*{Data Availability}
The data and code that support the findings of this article are openly available~\cite{data}.

\clearpage
\appendix

\section{Preliminaries}\label{append:pre}
\subsection{Notations}
For a sequence of operators $\A_1, \ldots, \A_{\Gamma}$, we denote $\prod_{\gamma=1}^\Gamma \A_{\gamma}= A_{\Gamma}\A_{\Gamma-1} \ldots \A_1$ and $\prod_{\gamma=\Gamma}^1 \A_{\gamma}= A_{1}\A_{2} \ldots \A_\Gamma$. We denote the right-nested commutator $[X_1, [X_2, [\cdots[X_{n-1}, X_n]\cdots]]]$ by $[X_1, X_2, \ldots, X_n]$. We use $\supp(X)$ to denote the set of sites that $X$ nontrivially acts on.  
Define the time-ordering operator $\mc T\{\cdot\}$ as \begin{align*}
    &\mc T\big\{[O_1(t_1),O_2(t_2),\ldots, O_q(t_q) ]\big\} \\
    &\quad= [O_{\pi(1)}(t_{\pi(1)}),O_{\pi(2)}(t_{\pi(2)}),\ldots,
        O_{\pi(q)}(t_{\pi(q)})],
\end{align*}
where $\pi\colon [q]\to[q]$ is a permutation of $[q]$ such that \[
    t_{\pi(1)}\ge t_{\pi(2)}\ge \cdots \ge t_{\pi(q)}.
    \]
For any permutation $\sigma\colon [q]\to [q]$, define the reordering operator $\mc R_{\sigma}\{\cdot\}$ as \begin{align*}
    &\mc R_{\sigma}\big\{[O_1(t_1),O_2(t_2),\ldots, O_q(t_q)]\big\}\\
    &\quad= [O_{\sigma(1)}(t_{\sigma(1)}),O_{\sigma(2)}(t_{\sigma(2)}),\ldots,
        O_{\sigma(q)}(t_{\sigma(q)})].
\end{align*}
\subsection{Magnus expansion}
Consider a time-dependent linear differential equation \begin{align*}
    \frac{\d Y(t)}{\d t} = A(t) Y(t), \quad Y(0) = I.
\end{align*} Formally, the Magnus expansion expresses $Y(t)$ as the exponential of a time-dependent operator $Y(t) = \exp(\Omega(t))$, where the exponent $\Omega(t)$ admits a series expansion~\cite{blanes2009magnus} \begin{align*}
    \Omega(t) = \sum_{q = 1}^{\infty} \Omega_q(t).
\end{align*} 
The $p$-th order Magnus expansion is denoted by \begin{align}
    \Omega_{(p)} = \sum_{q=1}^{p} \Omega_q. 
\end{align}
Each $\Omega_q(t)$ can be written as an integral of nested commutators of $A(t)$:
\begin{align}
\label{eq:magnus-formula}
    \Omega_q(t)
    &= \sum_{\sigma\in S_q} c_{\sigma, q}
        \int_{0}^t \d \tau_1 \int_{0}^{\tau_1}\d \tau_2
        \cdots \int_{0}^{\tau_{q-1}}\d \tau_q \nonumber\\
    &\quad \mc R_{\sigma}\big\{[A(\tau_1), A(\tau_2),
        \ldots, A(\tau_q)] \big\},
\end{align}
where $S_q$ denotes the set of permutations of $[q]$, \begin{align}
\label{eq:def-c-sigma}
    c_{\sigma, q}=\frac{1}{q^2}\frac{(-1)^{d_{\sigma}}}{\binom{q-1}{d_\sigma}} 
\end{align} satisfying $|c_{\sigma, q}|\le 1/q^2$, and $d_{\sigma}$ denotes the number of $i\in[q-1]$ such that $\sigma(i)>\sigma(i+1)$~\cite{Arnal_2018}. The derivative of $\Omega_q(t)$ is
\begin{align}
    \dot{\Omega}_q(t)
    ={}& \sum_{\sigma\in S_q} c_{\sigma, q}
        \int_{0}^{t}\d \tau_2 \int_0^{\tau_2} \d\tau_3
        \cdots \int_{0}^{\tau_{q-1}}\d \tau_q \nonumber\\
    &\qquad \mc R_{\sigma}\big\{[A(t), A(\tau_2),
        \ldots, A(\tau_q)] \big\}.
\end{align}
\subsection{Richardson extrapolation}\label{append:richardson}

\begin{lemma}[{see \cite[Lemma~5]{watson2025exponentially}}]\label{lem:richardson-even}
    Let $f(x)\in C^{2p+2}([-1,1])$ be an  even function, and let $Q_j$ and $R_j$ be a degree-$(j-1)$ Taylor polynomial and Taylor remainder, respectively, such that $f(x)=Q_j(x)+R_j(x)$. 
    Let $F^{(p)}(s)$ be the $p$-th order Richardson extrapolation of $f(x)$ at points $s_j = s/r_j$ for $j \in [p]$, defined as
    \[
    F^{(p)}(s) = \sum_{j=1}^p b_j f(s_j),
    \]
    where the repetition numbers $r_j$ and coefficients $b_j$ are given by
    \begin{align*}
        r_j = \left\lceil \frac{\sqrt{8} p}{\pi \sin (\pi(2 j-1) / 8 p)} \right\rceil, \quad 
        b_j = \prod_{\ell \neq j} \frac{1}{1 - r_\ell^2/r_j^2}.
    \end{align*}
    Then, the extrapolation error satisfies
    \[
    |F^{(p)}(s) - f(0)| \le \|\bm{b}\|_1 \max_{j \in [p]} |R_{2p}(s_j)|,
    \]
    where $\|\bm{b}\|_1 \le C \log p$ for some absolute constant $C > 0$.
\end{lemma}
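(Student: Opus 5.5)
The plan is to read the coefficients $b_j$ as Lagrange interpolation weights in the variable $y=x^2$, evaluated at $y=0$. The identity $\sum_j b_j Q_{2p}(s_j)=f(0)$ is then exact, and the whole error comes from the Taylor remainders.

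\emph{Step 1: the Taylor polynomial is even.} Since $f$ is even and $C^{2p+2}$ on $[-1,1]$, all odd derivatives vanish at $0$. Hence $Q_{2p}(x)=\sum_{\ell=0}^{p-1}a_{\ell}x^{2\ell}$ with $a_0=f(0)$. Writing $y=x^2$, this becomes a polynomial $\tilde Q(y)$ of degree at most $p-1$ in $y$.

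\emph{Step 2: the $b_j$ are Lagrange weights.} Set $y_j=s_j^2=s^2/r_j^2$. The nodes are distinct because the $r_j$ are distinct: $\sin(\pi(2j-1)/8p)$ is strictly increasing in $j$, and the ceiling preserves distinctness for the relevant $p$, which I would verify. The Lagrange basis at $y=0$ is
\[
\ell_j(0)=\prod_{l\neq j}\frac{0-y_l}{y_j-y_l}=\prod_{l\neq j}\frac{1}{1-y_j/y_l}=\prod_{l\neq j}\frac{1}{1-r_l^2/r_j^2}=b_j .
\]
Lagrange interpolation is exact for polynomials of degree at most $p-1$, so $\sum_j b_j y_j^{\ell}=\delta_{\ell 0}$ for $0\le\ell\le p-1$. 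Therefore $\sum_j b_j Q_{2p}(s_j)=\tilde Q(0)=f(0)$.

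\emph{Step 3: the error bound.} From Steps 1 and 2,
\[
F^{(p)}(s)-f(0)=\sum_j b_j\big(f(s_j)-Q_{2p}(s_j)\big)=\sum_j b_j R_{2p}(s_j),
\]
and the triangle inequality gives $|F^{(p)}(s)-f(0)|\le\|\bm b\|_1\max_j|R_{2p}(s_j)|$.

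\emph{Step 4 (main obstacle): $\|\bm b\|_1\le C\log p$.} This is where the specific choice of $r_j$ matters. Up to the ceiling, $1/r_j\propto\sin(\pi(2j-1)/8p)$, so the nodes $y_j\propto\sin^2(\pi(2j-1)/8p)=\tfrac12\big(1-\cos(\pi(2j-1)/4p)\big)$. These are affine images of Chebyshev-type angles on a quarter arc. The quantity $\sum_j|\ell_j(0)|$ is then a Lebesgue-function value at the endpoint $0$, just outside the node interval, for Chebyshev-like nodes. The standard estimate gives logarithmic growth; this is exactly the well-conditioned Richardson analysis of Low, Kliuchnikov and Wiebe~\cite{low2019well}.

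I would not redo that estimate in detail. Instead I would invoke~\cite{low2019well} (equivalently~\cite[Lemma~5]{watson2025exponentially}) for the bound $\|\bm b\|_1=\mathcal O(\log p)$. The one remaining check is that the ceiling perturbation changes each ratio $r_l/r_j$ only by a factor $1+\mathcal O(1/p)$. Under that check the product formula for $b_j$ changes by at most a constant factor, so the $\mathcal O(\log p)$ bound survives the rounding.
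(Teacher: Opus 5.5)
Your Steps 1--3 are correct and follow the paper's approach. The paper gives no separate proof of this lemma; it quotes it from \cite[Lemma~5]{watson2025exponentially}. In the proof of \thm{general-extra-algo} it uses exactly your identity $\sum_j b_j s_j^{2\ell}=\delta_{\ell,0}$ for $0\le\ell\le p-1$, together with the evenness of $Q_{2p}$. It also takes $\|\bm b\|_1\le C\log p$ from \cite{low2019well,watson2025exponentially}, as you propose.

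Your ``remaining check'' in Step~4 is not valid as justified, and it is also unnecessary.

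\emph{Why it fails.} It is true that the ceiling changes each ratio $r_l/r_j$ only by a factor $1+\mathcal O(1/p)$, since $r_j\gtrsim 4p/\pi$. But this does not carry over to the factor $1/(1-r_l^2/r_j^2)$, which is ill-conditioned when $r_l$ is close to $r_j$. Write $1-r_l^2/r_j^2=(r_j-r_l)(r_j+r_l)/r_j^2$. The ceiling can change the difference $r_j-r_l$ by an absolute amount of up to $1$. The $\sqrt8$ in the definition makes consecutive unrounded values near $j=p$ differ by only $1+\mathcal O(1/p)$. So rounding can change the nearest-neighbour factors by a factor of about $2$, and in general the $l=j\pm d$ factors by $1+\mathcal O(1/d)$. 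Bounding these factor by factor over $l\neq j$ gives only a multiplicative loss $\prod_d(1+\mathcal O(1/d))=p^{\mathcal O(1)}$, not a constant. That would destroy the $\mathcal O(\log p)$ bound.

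\emph{Why it is unnecessary.} The bound you cite is already stated for the ceiled integers $r_j$, as in the lemma itself. In \cite{low2019well} these are repetition counts of a product formula, so they must be integers. Invoke the cited bound directly for the rounded nodes, as the paper does, rather than transferring it from unrounded Chebyshev-type nodes.

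\emph{Distinctness.} The distinctness of the $r_j$, which you deferred, follows from the same spacing computation. By the mean value theorem, consecutive unrounded values differ by $\cos\xi/(\sqrt2\sin^2\xi)$ for some $\xi\in(0,\pi/4)$. This quantity exceeds $1$, so the ceilings are distinct.
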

\subsection{Technical lemmas}
Let $\mc A(\tau)$ and $\mc B(\tau)$ be two bounded piecewise-continuous operator-valued functions. For any bounded piecewise-continuous generator $\mc G(\tau)$, define its time-ordered propagator by
\begin{align*}
\mc U_{\mc G}(t,s)
:=\exp_{\mc T}\biggl(\int_s^t \mc G(\tau)\,\d\tau\biggr).
\end{align*}
We will repeatedly use the following standard identity for linear differential equations, known as the variation-of-constants formula (or Duhamel's principle)~\cite{pazy1983semigroups}.
\begin{lemma}[Variation of constants formula]
\label{lem:variation-constants}
Let $\mc H(\tau)=\mc A(\tau)+\mc B(\tau)$ be a time-dependent operator with bounded piecewise-continuous summands. Then
\begin{align*}
\mc U_{\mc H}(t,0)
={}& \mc U_{\mc A}(t,0)+\int_0^t \mc U_{\mc A}(t,\tau)
\mc B(\tau)\mc U_{\mc H}(\tau,0)\,\d\tau.
\end{align*}
\end{lemma}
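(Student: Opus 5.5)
The plan is to prove the identity by differentiating an interpolating product and then applying the fundamental theorem of calculus. Fix $t\ge 0$ and define, for $\tau\in[0,t]$,
\begin{align*}
\mc V(\tau):=\mc U_{\mc A}(t,\tau)\,\mc U_{\mc H}(\tau,0).
\end{align*}
At the endpoints this gives $\mc V(t)=\mc U_{\mc H}(t,0)$ and $\mc V(0)=\mc U_{\mc A}(t,0)$, because $\mc U_{\mc G}(s,s)=\mc I$ for any generator. The claimed formula is therefore exactly $\mc V(t)-\mc V(0)=\int_0^t \frac{\d}{\d\tau}\mc V(\tau)\,\d\tau$, once we show that the derivative equals $\mc U_{\mc A}(t,\tau)\mc B(\tau)\mc U_{\mc H}(\tau,0)$.

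First I will record the basic properties of the time-ordered propagator of a bounded piecewise-continuous generator $\mc G$. Take $\mc U_{\mc G}(t,s)$ to be the Dyson series. It converges absolutely in norm, with the bound $\|\mc U_{\mc G}(t,s)\|\le \exp\bigl(\int_s^t\|\mc G\|\bigr)$. It has the composition law $\mc U_{\mc G}(t,\tau)\,\mc U_{\mc G}(\tau,s)=\mc U_{\mc G}(t,s)$, and it is jointly continuous in both arguments. It satisfies the forward equation $\partial_t\mc U_{\mc G}(t,s)=\mc G(t)\mc U_{\mc G}(t,s)$ at every continuity point $t$ of $\mc G$. Differentiating the composition law $\mc U_{\mc G}(t,\tau)\mc U_{\mc G}(\tau,s)=\mc U_{\mc G}(t,s)$ in $\tau$ then yields the backward equation $\partial_\tau\mc U_{\mc G}(t,\tau)=-\mc U_{\mc G}(t,\tau)\mc G(\tau)$. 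This step uses the fact that $\mc U_{\mc G}(\tau,s)$ is invertible for $\tau$ near $s$. Equivalently, the backward equation can be read off directly from the Dyson series by differentiating in its lower limit.

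With these facts, the product rule at every continuity point $\tau$ of both $\mc A$ and $\mc B$ gives
\begin{align*}
\frac{\d\mc V}{\d\tau}
&=-\mc U_{\mc A}(t,\tau)\mc A(\tau)\mc U_{\mc H}(\tau,0)+\mc U_{\mc A}(t,\tau)\bigl(\mc A(\tau)+\mc B(\tau)\bigr)\mc U_{\mc H}(\tau,0)\\
&=\mc U_{\mc A}(t,\tau)\mc B(\tau)\mc U_{\mc H}(\tau,0).
\end{align*}
The $\mc A$ contributions cancel, and only the perturbation $\mc B$ survives.

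The only delicate point is regularity, and I expect this to be the main obstacle. The generators are merely piecewise continuous, so $\mc V$ is differentiable only away from finitely many (or at most countably many isolated) jump points. I will handle this by splitting $[0,t]$ at the discontinuities of $\mc A$ and $\mc B$. On each closed subinterval $\mc V$ is continuous, because the propagators are continuous in both arguments, and it is differentiable in the interior with a bounded, continuous derivative. The fundamental theorem of calculus therefore applies on each piece. Summing the pieces telescopes to $\mc V(t)-\mc V(0)$ and reproduces the integral over $[0,t]$. If one prefers to avoid splitting, the same conclusion follows because $\mc V$ is Lipschitz, hence absolutely continuous, with the stated derivative almost everywhere.
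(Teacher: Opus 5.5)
Your proof is correct. The paper does not prove this lemma; it cites it as the standard Duhamel / variation-of-constants identity from Pazy. Your argument is the standard proof behind that citation: differentiate the interpolant $\mathcal U_{\mathcal A}(t,\tau)\,\mathcal U_{\mathcal H}(\tau,0)$, let the $\mathcal A$ terms from the backward and forward equations cancel, and integrate piecewise between discontinuities. Two small points. For bounded generators $\mathcal U_{\mathcal G}(\tau,s)$ is invertible for all $\tau,s$, not only for $\tau$ near $s$. And if the jump set could accumulate, use your Lipschitz/absolute-continuity route rather than finite splitting; it is valid here because the superoperators act on a finite-dimensional space.
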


We also rely on the following interaction-picture representation of the evolution generated by $\mc H(\tau)=\mc A(\tau)+\mc B(\tau)$.
\begin{lemma}[Interaction picture representation, see {\cite[Page 21]{dollard1984product}}]
\label{lem:interaction-pic}
Let $\mc H(\tau)=\mc A(\tau)+\mc B(\tau)$ be a time-dependent operator with bounded piecewise-continuous summands. Then
\begin{align*}
\mc U_{\mc H}(t,0)
={}& \mc U_{\mc A}(t,0)\,
\exp_{\mathcal T}\bigg(\int_0^t
\mc U_{\mc A}(0,\tau)\mc B(\tau)
\mc U_{\mc A}(\tau,0)\,\d\tau\bigg).
\end{align*}
\end{lemma}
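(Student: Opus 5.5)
We prove Lemma~\ref{lem:interaction-pic} by the standard "move to the interaction frame and check the differential equation" argument. Since all generators are bounded and piecewise continuous, every propagator $\mc U_{\mc G}(t,s)$ is given by a norm-convergent Dyson series. It is therefore defined for all $s,t$, including $s>t$, i.e.\ backward in time. The following properties will be used throughout:
\begin{align*}
\mc U_{\mc G}(t,s)\mc U_{\mc G}(s,u)&=\mc U_{\mc G}(t,u),\\
\mc U_{\mc G}(t,t)&=\mc I,\\
\partial_t\mc U_{\mc G}(t,s)&=\mc G(t)\mc U_{\mc G}(t,s),\\
\partial_s\mc U_{\mc G}(t,s)&=-\mc U_{\mc G}(t,s)\mc G(s).
\end{align*}
The derivative identities hold at every continuity point of $\mc G$, and the propagators are absolutely continuous in each argument. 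In particular $\mc U_{\mc A}(0,\tau)=\mc U_{\mc A}(\tau,0)^{-1}$. Contractivity is not needed: the inverse evolutions appearing here are merely bounded operators.

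The key step is to define the interaction-picture propagator $\mc V(t):=\mc U_{\mc A}(0,t)\,\mc U_{\mc H}(t,0)$ and differentiate it with the product rule. At points where $\mc A$ and $\mc B$ are continuous,
\begin{align*}
\dot{\mc V}(t)&=-\mc U_{\mc A}(0,t)\mc A(t)\mc U_{\mc H}(t,0)+\mc U_{\mc A}(0,t)\big(\mc A(t)+\mc B(t)\big)\mc U_{\mc H}(t,0)\\
&=\mc U_{\mc A}(0,t)\mc B(t)\mc U_{\mc H}(t,0).
\end{align*}
The $\mc A$-terms cancel. Inserting $\mc I=\mc U_{\mc A}(t,0)\mc U_{\mc A}(0,t)$ gives
\begin{align*}
\dot{\mc V}(t)=\widetilde{\mc B}(t)\mc V(t),\qquad \widetilde{\mc B}(t):=\mc U_{\mc A}(0,t)\mc B(t)\mc U_{\mc A}(t,0),
\end{align*}
with $\mc V(0)=\mc I$. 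The operator $\widetilde{\mc B}(t)$ is again bounded and piecewise continuous, as a product of continuous bounded factors and a piecewise continuous bounded one.

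By uniqueness of solutions of linear equations with bounded piecewise-continuous generators, $\mc V(t)=\exp_{\mc T}\big(\int_0^t\widetilde{\mc B}(\tau)\,\d\tau\big)$. Uniqueness follows from a Gronwall estimate on the difference of two solutions, or equivalently from the fact that the Dyson series is the unique fixed point of the integral equation $\mc V(t)=\mc I+\int_0^t\widetilde{\mc B}(\tau)\mc V(\tau)\,\d\tau$. Multiplying on the left by $\mc U_{\mc A}(t,0)$ then yields the claimed identity $\mc U_{\mc H}(t,0)=\mc U_{\mc A}(t,0)\mc V(t)$.

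The only point requiring care is piecewise continuity. The derivative identities hold only away from the finitely many discontinuities, so I would pass to the integrated form. Equivalently, I would argue on each subinterval of continuity and glue the pieces using the composition property $\mc U(t,u)=\mc U(t,s)\mc U(s,u)$ for both $\mc V$ and the time-ordered exponential. This is routine; I expect no real obstacle beyond stating these regularity facts cleanly.
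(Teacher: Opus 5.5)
Your argument is correct: differentiating $\mc V(t)=\mc U_{\mc A}(0,t)\,\mc U_{\mc H}(t,0)$ cancels the $\mc A$-terms. Uniqueness for the resulting linear equation then gives the result, argued in integrated form to handle piecewise continuity, and this is the standard proof of this interaction-picture (``sum rule'') identity. The paper does not prove the lemma itself but only cites Dollard--Friedman. Your remark that the backward propagators $\mc U_{\mc A}(0,\tau)$ need only be bounded inverses, not contractions, matches how the paper later applies the lemma with $\mc A=\mc L$ and factors $e^{-\tau\mc L}$.
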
 
We use the following lemma to bound the difference between powers of a superoperator and a quantum channel. 
\begin{lemma}
\label{lem:power-bound}
    Let $\mc N_1, \mc N_2$ be two superoperators, where $\mc N_1$ is a quantum channel. For any positive integer $k$, the difference between $\mc N_1^k$ and $\mc N_2^{k}$ can be bounded by
    \begin{align*}
        \|\mc N_1^k-\mc N_2^k\|_{\diamond} \le k \max\{1, \|\mc N_2\|_{\diamond}\}^{k-1} \|\mc N_1-\mc N_2\|_{\diamond}.
    \end{align*}
\end{lemma}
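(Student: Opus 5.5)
We prove the claim by telescoping. Writing $\mc E:=\mc N_2-\mc N_1$, we expand the difference of powers as
\begin{align*}
\mc N_2^k-\mc N_1^k=\sum_{j=0}^{k-1}\mc N_2^{\,k-1-j}\,(\mc N_2-\mc N_1)\,\mc N_1^{\,j}.
\end{align*}
This identity is verified by noting that consecutive terms cancel. Concretely, $\mc N_2^{k-1-j}\mc N_2\mc N_1^j=\mc N_2^{k-j}\mc N_1^j$ and $\mc N_2^{k-1-j}\mc N_1\mc N_1^{j}=\mc N_2^{k-1-j}\mc N_1^{j+1}$, so the sum collapses to $\mc N_2^k-\mc N_1^k$. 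The identity is purely algebraic and holds for arbitrary (noncommuting) linear maps.

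Next we take diamond norms and use the triangle inequality together with submultiplicativity of the diamond norm, $\|\mc A\mc B\|_\diamond\le\|\mc A\|_\diamond\|\mc B\|_\diamond$. Submultiplicativity holds because the diamond norm is the induced trace norm of $\mc A\otimes\mathrm{id}$ with ancilla dimension fixed at the system dimension. Since $\mc N_1$ is a quantum channel, $\|\mc N_1^j\|_\diamond\le\|\mc N_1\|_\diamond^{\,j}=1$. Each summand is therefore bounded by
\begin{align*}
\|\mc N_2\|_\diamond^{\,k-1-j}\,\|\mc N_1-\mc N_2\|_\diamond\le \max\{1,\|\mc N_2\|_\diamond\}^{k-1}\|\mc N_1-\mc N_2\|_\diamond,
\end{align*}
where the last step uses $k-1-j\le k-1$ and the monotonicity of $x\mapsto \max\{1,\|\mc N_2\|_\diamond\}^x$. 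Summing the $k$ terms gives the stated bound.

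The one point needing care is that $\|\mc N_2\|_\diamond$ may be smaller than one. In that case $\|\mc N_2\|_\diamond^{k-1-j}$ is not monotone in the exponent in the needed direction, which is exactly why the bound uses $\max\{1,\cdot\}$. With that replacement the estimate holds uniformly in $j$, and the rest of the argument is routine.
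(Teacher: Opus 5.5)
Your proof is correct and follows essentially the same approach as the paper. Both use a telescoping sum, submultiplicativity of the diamond norm, $\|\mc N_1\|_\diamond = 1$, and the bound $\|\mc N_2\|_\diamond^{k-1-j} \le \max\{1,\|\mc N_2\|_\diamond\}^{k-1}$. The only difference is that you place the powers of $\mc N_1$ on the right of the telescoped difference rather than on the left, which does not affect the argument.
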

\begin{proof}
    We use the telescoping sum identity for the difference between powers of two operators:
    \begin{align*}
        \mc N_1^k - \mc N_2^k = \sum_{j=0}^{k-1} \mc N_1^{k-1-j} (\mc N_1 - \mc N_2) \mc N_2^j.
    \end{align*}
    Taking the diamond norm on both sides, we obtain
    \begin{align*}
        \|\mc N_1^k - \mc N_2^k\|_{\diamond} &\le \sum_{j=0}^{k-1} \|\mc N_1\|_{\diamond}^{k-1-j} \|\mc N_1 - \mc N_2\|_{\diamond} \|\mc N_2\|_{\diamond}^j \\
        &= \|\mc N_1 - \mc N_2\|_{\diamond} \sum_{j=0}^{k-1} \|\mc N_2\|_{\diamond}^j \\
        &\le k\max\{1, \|\mc N_2\|_{\diamond}\}^{k-1}\|\mc N_1-\mc N_2\|_{\diamond},
    \end{align*}
    where the second line follows from the fact that $\mc N_1$ is a quantum channel, so $\|\mc N_1\|_{\diamond}=1$. 
\end{proof}

\section{Commutator bounds for the product formula of Lindbladians}
\label{append:comm-bound-product}
In this section, we derive commutator-based error bounds for the first- and second-order product formulas of Lindbladians. The following theorem establishes the bound for the second-order case.
\begin{theorem}
    Let $\mathcal{L} = \sum_{j=1}^m \mathcal{L}_j$ be a Lindbladian consisting of $m$ summands and let $t > 0$. 
    Let $\mathcal{S}(t) = \prod_{j=m}^1 e^{t \mathcal{L}_j /2} \prod_{j=1}^m e^{t \mathcal{L}_j /2}$ be the second-order product formula.
    The additive Trotter error is bounded by
    \begin{align*}
        \|\mathcal{S}(t) - e^{t\mathcal{L}}\|_{\diamond}
        \le {}& \frac{t^3}{12}\sum_{j_1 =1}^m
        \bigg\|\bigg[\sum_{j_3=j_1+1}^{m} \mc L_{j_3},
        \sum_{j_2=j_1+1}^{m} \mc L_{j_2}, \mc L_{j_1}\bigg]\bigg\|_{\diamond}\\
        &\quad+\frac{t^3}{24}\sum_{j_1=1}^m
        \bigg\|\bigg[\mc L_{j_1}, \mc L_{j_1},
        \sum_{j_2=j_1+1}^{m} \mc L_{j_2} \bigg]\bigg\|_{\diamond}.
    \end{align*}
\end{theorem}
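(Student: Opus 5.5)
The plan is to reduce the $m$-term formula to a symmetric two-term (Strang) error by telescoping, and then to analyze the two-term case with an integral representation in which every propagator is a forward-time Lindbladian evolution, hence a quantum channel of diamond norm one.

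\emph{Step 1 (telescoping).} Set $\mc B_j:=\sum_{i=j+1}^m\mc L_i$ (so $\mc B_m=0$) and define recursively
\[
\mc S_{m+1}(t)=\mc I,\qquad \mc S_j(t)=e^{t\mc L_j/2}\,\mc S_{j+1}(t)\,e^{t\mc L_j/2}.
\]
With the paper's product convention this gives $\mc S_1(t)=\mc S(t)$. Then
\[
\mc S_j-e^{t(\mc L_j+\mc B_j)}=e^{t\mc L_j/2}\big(\mc S_{j+1}-e^{t\mc B_j}\big)e^{t\mc L_j/2}+\big(e^{t\mc L_j/2}e^{t\mc B_j}e^{t\mc L_j/2}-e^{t(\mc L_j+\mc B_j)}\big).
\]
The flanking factors are channels, so iterating this identity gives
\[
\|\mc S(t)-e^{t\mc L}\|_\diamond\le\sum_{j}\big\|e^{t\mc L_j/2}e^{t\mc B_j}e^{t\mc L_j/2}-e^{t(\mc L_j+\mc B_j)}\big\|_\diamond .
\]
It therefore suffices to prove, for Lindbladians $\mc A,\mc B$, the two-term bound
\[
\|e^{t\mc A/2}e^{t\mc B}e^{t\mc A/2}-e^{t(\mc A+\mc B)}\|_\diamond\le \tfrac{t^3}{12}\|[\mc B,\mc B,\mc A]\|_\diamond+\tfrac{t^3}{24}\|[\mc A,\mc A,\mc B]\|_\diamond .
\]
Substituting $\mc A=\mc L_j$ and $\mc B=\mc B_j$ reproduces exactly the two sums in the statement.

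\emph{Step 2 (variation of constants).} Let $\mc U(\tau)=e^{\tau\mc A/2}e^{\tau\mc B}e^{\tau\mc A/2}$ and define the defect $\mc R(\tau):=\dot{\mc U}(\tau)-(\mc A+\mc B)\mc U(\tau)$. By \lem{variation-constants},
\[
\mc U(t)-e^{t(\mc A+\mc B)}=\int_0^t e^{(t-\tau)(\mc A+\mc B)}\mc R(\tau)\,\d\tau .
\]
The prefactor is a channel. Differentiating directly and using that $e^{\tau\mc A/2}$ commutes with $\mc A$, I expect
\[
\mc R(\tau)=[e^{\tau\mc A/2},\mc B]\,e^{\tau\mc B}e^{\tau\mc A/2}+\tfrac12 e^{\tau\mc A/2}[e^{\tau\mc B},\mc A]e^{\tau\mc A/2}.
\]
This contains no inverse evolutions. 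Each group commutator is then expanded as
\[
[e^{s\mc X},\mc Y]=\int_0^s e^{(s-\sigma)\mc X}[\mc X,\mc Y]e^{\sigma\mc X}\,\d\sigma,
\]
which again involves only forward propagators. This step is exactly where the Hamiltonian-style conjugation $e^{\tau\mc X}\mc Y e^{-\tau\mc X}$ is avoided.

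\emph{Step 3 (order cancellation; main obstacle).} The defect $\mc R$ is a priori $\mathcal O(\tau)$. The Strang symmetry must cancel both its $\mathcal O(\tau)$ and $\mathcal O(\tau^2)$ parts, leaving $\|\mc R(\tau)\|_\diamond\le \tau^2\big(\tfrac14\|[\mc B,\mc B,\mc A]\|_\diamond+\tfrac18\|[\mc A,\mc A,\mc B]\|_\diamond\big)$; integrating $\tau^2$ then yields $t^3/12$ and $t^3/24$. The plan is to Taylor-expand each factor to second order using integral remainders of the form $e^{s\mc X}=\mc I+s\mc X+\int_0^s(s-\sigma)\mc X^2 e^{\sigma\mc X}\,\d\sigma$, all with nonnegative times. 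The first-order pieces $\tfrac{\tau}{2}[\mc A,\mc B]+\tfrac{\tau}{2}[\mc B,\mc A]$ cancel. The second-order pieces must then be regrouped into the double commutators $[\mc A,\mc A,\mc B]$ and $[\mc B,\mc B,\mc A]$, with leftover remainders of the same commutator form multiplied by channels.

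The difficulty is doing this regrouping so that each surviving term appears as a double commutator, rather than as products like $\mc A[\mc A,\mc B]$ whose norms are not commutator-controlled. To handle this, I would first apply the interaction-picture idea of \lem{interaction-pic} to one layer: differentiate $\mc R$ once more, observe that $\mc R(0)=0$ and $\dot{\mc R}(0)=0$ by symmetry, and write $\mc R(\tau)=\int_0^\tau(\tau-\sigma)\ddot{\mc R}(\sigma)\,\d\sigma$. Then I would expand $\ddot{\mc R}$ into sums of channels sandwiching $[\mc X,\mc X,\mc Y]$ by repeated use of the group-commutator identity above. The precise constants $1/12$ and $1/24$ come from the resulting simplex integrals, and that bookkeeping is where I expect most of the effort.
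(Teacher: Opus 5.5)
Your Steps 1 and 2 are sound. The telescoping reduction is the one the paper uses, with $\mathcal A=\mathcal L_{j_1}$ and $\mathcal B=\sum_{j>j_1}\mathcal L_j$. Your formula for the defect $\mathcal R(\tau)$ is correct. The target $\|\mathcal R(\tau)\|_\diamond\le\tau^2\big(\tfrac14\|[\mathcal B,\mathcal B,\mathcal A]\|_\diamond+\tfrac18\|[\mathcal A,\mathcal A,\mathcal B]\|_\diamond\big)$ is true and gives the stated constants.

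The gap is Step 3, which carries the whole content of the theorem, and the mechanism you propose for it cannot work. Writing $\mathcal R(\tau)=\int_0^\tau(\tau-\sigma)\ddot{\mathcal R}(\sigma)\,\mathrm{d}\sigma$ forces you to differentiate the flanking propagators $e^{\tau\mathcal A/2}$ and $e^{\tau\mathcal B}$. This brings down bare factors of $\mathcal A$ and $\mathcal B$, so $\ddot{\mathcal R}(\sigma)$ is \emph{not} a combination of channels sandwiching double commutators. To see this, add to $\mathcal A$ a Lindbladian $\lambda\mathcal A_0$ acting on an extra tensor factor. Every double commutator is unchanged, but $\mathcal R(\sigma)=e^{\sigma\lambda\mathcal A_0}\otimes\mathcal R_1(\sigma)$, where $\mathcal R_1$ is the original defect. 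Hence $\ddot{\mathcal R}(\sigma)$ contains $\lambda^2\mathcal A_0^2e^{\sigma\lambda\mathcal A_0}\otimes\mathcal R_1(\sigma)$, whose diamond norm grows like $\lambda^2$. Any expression of the form you want would instead be bounded independently of $\lambda$. The same objection applies to Taylor-expanding ``each factor'' with remainders $\mathcal X^2e^{\sigma\mathcal X}$: that route gives norm-type terms such as $\|\mathcal A\|_\diamond\|[\mathcal A,\mathcal B]\|_\diamond$. You may only expand the conjugation parameter inside a group commutator, never the outer time.

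The missing step is short within your own setup. Apply your group-commutator identity once, set $\mathcal C:=[\mathcal A,\mathcal B]$, and split $e^{\tau\mathcal A/2}=e^{(\tau-\sigma)\mathcal A/2}e^{\sigma\mathcal A/2}$ and $e^{\tau\mathcal B}=e^{(\tau-\sigma)\mathcal B}e^{\sigma\mathcal B}$. The two first-order terms then differ only in where $\mathcal C$ sits, and their difference is again a group commutator:
\[
\mathcal R(\tau)=-\frac12\int_0^\tau e^{(\tau-\sigma)\mathcal A/2}\Big([e^{\sigma\mathcal A/2},\mathcal C]\,e^{\tau\mathcal B}+e^{\sigma\mathcal A/2}[e^{(\tau-\sigma)\mathcal B},\mathcal C]\,e^{\sigma\mathcal B}\Big)e^{\tau\mathcal A/2}\,\mathrm{d}\sigma .
\]
A second application of the identity produces forward channels sandwiching $\tfrac12[\mathcal A,\mathcal A,\mathcal B]$ over length $\sigma$ and $[\mathcal B,\mathcal C]=-[\mathcal B,\mathcal B,\mathcal A]$ over length $\tau-\sigma$. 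This gives
\[
\|\mathcal R(\tau)\|_\diamond\le\tfrac12\int_0^\tau\big(\tfrac{\sigma}{2}\|[\mathcal A,\mathcal A,\mathcal B]\|_\diamond+(\tau-\sigma)\|[\mathcal B,\mathcal B,\mathcal A]\|_\diamond\big)\,\mathrm{d}\sigma,
\]
which is exactly your target.

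The paper reaches the same cancellation differently. It starts from the representation of Childs et al., in which $e^{\tau_3\ad_{\mathcal L_2}}$ and $e^{-\tau_3\ad_{\mathcal L_1}/2}$ act on double commutators. It then absorbs each inverse evolution into the adjacent forward evolution of the same generator. Completed as above, your route would avoid inverse evolutions from the outset, but as written the decisive step is not established.
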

\begin{proof}
    The proof generalize the derivation in \cite[Appendix L]{childs2021theory} from Hamiltonians to Lindbladians. We first consider the $m=2$ case. Following \cite[Eq.~(L4)]{childs2021theory}, the additive Trotter error can be expressed as
    \begin{widetext}
    \begin{align*}
        &\|e^{t\mc L_1/2}e^{t\mc L_2}e^{t\mc L_1/2}
        - e^{t(\mc L_1+\mc L_2)}\|_{\diamond}
        \\
        ={}&\left\|
        \int_{0}^t \d \tau_1\int_{0}^{\tau_1}\d \tau_2
        \int_{0}^{\tau_2}\d \tau_3\,
        e^{(t-\tau_1)\mc L}e^{\tau_1 \mc L_1/2}\Big[e^{\tau_3 \ad_{\mc L_2}}\ad^2_{\mc L_2}(\mc L_1 / 2 )
        +e^{-\tau_3 \ad_{\mc L_1}/2}\ad_{-\mc L_1/2}^2(\mc L_2)\Big]
        e^{\tau_1\mc L_2}e^{\tau_1\mc L_1 /2}
        \right\|_{\diamond}.
    \end{align*}
    The adjoint exponentials can be expanded and combined with the adjacent evolution factors as
    \begin{align*}
        e^{\tau_3\ad_{\mc L_2}}\ad^2_{\mc L_2}(\mc L_1/2)e^{\tau_1\mc L_2}
        ={}&e^{\tau_3\mc L_2}\ad^2_{\mc L_2}(\mc L_1/2)e^{(\tau_1-\tau_3)\mc L_2},
        \\
        e^{\tau_1\mc L_1/2}e^{-\tau_3\ad_{\mc L_1}/2}\ad_{-\mc L_1/2}^2(\mc L_2)
        ={}&e^{(\tau_1-\tau_3)\mc L_1/2}\ad_{-\mc L_1/2}^2(\mc L_2)e^{\tau_3\mc L_1/2}.
    \end{align*}
    \end{widetext}
    Since $0\le \tau_3\le \tau_2\le \tau_1\le t$, all evolution times on the right-hand sides are nonnegative. Hence the exponential factors in the expanded integrand are quantum channels and have diamond norm one. The additive Trotter error can therefore be bounded by \begin{align*}
        &\|e^{t\mc L_1/2}e^{t\mc L_2}e^{t\mc L_1/2}
        - e^{t(\mc L_1+\mc L_2)}\|_{\diamond} \\
        \le {}& \int_{0}^t \d \tau_1\int_{0}^{\tau_1}\d \tau_2
        \int_{0}^{\tau_2}\d \tau_3 \Big(\big\|\ad^2_{\mc L_2}(\mc L_1/2)\big\|_{\diamond}
        +\big\|\ad_{\mc L_1/2}^2(\mc L_2)\big\|_{\diamond}\Big) \\
        ={}& \int_{0}^t \d \tau_1\int_{0}^{\tau_1}\d \tau_2
        \int_{0}^{\tau_2}\d \tau_3 \Big(\frac{1}{2}\|[\mc L_2, \mc L_2, \mc L_1]\|_{\diamond}
        +\frac{1}{4}\|[\mc L_1, \mc L_1, \mc L_2]\|_{\diamond}\Big) \\
        ={}& \frac{t^3}{12}\|[\mc L_2, \mc L_2, \mc L_1]\|_{\diamond}+\frac{t^3}{24}\|[\mc L_1, \mc L_1, \mc L_2]\|_{\diamond}. 
    \end{align*}
    For a general Lindbladian $\mc L = \sum_{j=1}^m \mc L_j$, applying the triangle inequality via a telescoping sum argument gives \begin{align*}
        &\bigg\|\prod_{j=m}^1 e^{t \mathcal{L}_j /2} \prod_{j=1}^m e^{t \mathcal{L}_j /2}-e^{t\sum_{j=1}^m \mc L_j}\bigg\|_{\diamond} \\
        \le {}& \sum_{j_1 =1}^m
        \begin{aligned}[t]
        \bigg\|&\prod_{j_2=j_1}^1 e^{t\mc L_{j_2}/2}
        e^{t\sum_{j_2=j_1+1}^m \mc L_{j_2}}
        \prod_{j_2=1}^{j_1}e^{t\mc L_{j_2}/2}\\
        &-\prod_{j_2=j_1-1}^1 e^{t\mc L_{j_2}/2}
        e^{t\sum_{j_2=j_1}^m \mc L_{j_2}}
        \prod_{j_2=1}^{j_1-1}e^{t\mc L_{j_2}/2}\bigg\|_{\diamond} 
        \end{aligned}
        \\
        \le {}& \sum_{j_1 =1}^m
        \begin{aligned}[t]
        \bigg\|&e^{t\mc L_{j_1}/2} e^{t\sum_{j_2=j_1+1}^m \mc L_{j_2}}
        e^{t\mc L_{j_1}/2}- e^{t\sum_{j_2=j_1}^m \mc L_{j_2}}
        \bigg\|_{\diamond}
        \end{aligned} \\
        \le {}& \frac{t^3}{12}\sum_{j_1 =1}^m
        \bigg\|\bigg[\sum_{j_3=j_1+1}^{m} \mc L_{j_3},
        \sum_{j_2=j_1+1}^{m} \mc L_{j_2}, \mc L_{j_1}\bigg]\bigg\|_{\diamond}\\
        &\quad+\frac{t^3}{24}\sum_{j_1=1}^m
        \bigg\|\bigg[\mc L_{j_1}, \mc L_{j_1},
        \sum_{j_2=j_1+1}^{m} \mc L_{j_2} \bigg]\bigg\|_{\diamond},  
    \end{align*}
    where the last inequality follows from the Trotter error bound derived above for the $m=2$ case. 
\end{proof}
This error bound leads to the following requirement on the Trotter number to achieve a target precision $\eps$
\begin{corollary}
    For any evolution time $t > 0$ and target precision $\eps > 0$, the Lindbladian evolution $e^{t\mathcal{L}}$ can be approximated to precision $\eps$ in the diamond norm by $\mathcal{S}(t/r)^r$, where
    \begin{align*}
        r = \mathcal{O}\Bigg( \frac{t^{3/2}}{\sqrt{\eps}}
        \Bigg(
        &\sum_{j_1 =1}^m \bigg\|\bigg[\sum_{j_3=j_1+1}^{m} \mc L_{j_3},
        \sum_{j_2=j_1+1}^{m} \mc L_{j_2}, \mc L_{j_1}\bigg]\bigg\|_{\diamond}\\
        &+ \sum_{j_1=1}^m\bigg\|\bigg[\mc L_{j_1}, \mc L_{j_1},
        \sum_{j_2=j_1+1}^{m} \mc L_{j_2} \bigg]\bigg\|_{\diamond}
        \Bigg)^{1/2} \Bigg).
    \end{align*}
\end{corollary}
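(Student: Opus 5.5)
The corollary follows by combining the preceding theorem's one-step bound with contractivity of Lindbladian evolution. Write $\beta$ for the bracketed sum of commutator norms, so that $\beta$ is the quantity inside the parentheses of the displayed expression for $r$. The theorem, applied with time $\tau=t/r$, gives
\begin{align*}
\|\mathcal{S}(t/r)-e^{(t/r)\mathcal{L}}\|_\diamond\le \tfrac{1}{12}\beta (t/r)^3 .
\end{align*}
Here I use that the coefficients $1/12$ and $1/24$ are both at most $1/12$.

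Next I convert the one-step error into an $r$-step error. The map $\mathcal{S}(t/r)$ is a composition of the channels $e^{(t/r)\mathcal{L}_j/2}$, so it is itself a quantum channel. Likewise $e^{(t/r)\mathcal{L}}$ is a channel, since the evolution time is nonnegative. I apply \lem{power-bound} with $\mc N_1=e^{(t/r)\mathcal{L}}$, $\mc N_2=\mathcal{S}(t/r)$ and $k=r$. Because $\|\mc N_2\|_\diamond=1$, the prefactor $\max\{1,\|\mc N_2\|_\diamond\}^{r-1}$ equals $1$. Using $(e^{(t/r)\mathcal{L}})^r=e^{t\mathcal{L}}$, this yields
\begin{align*}
\|e^{t\mathcal{L}}-\mathcal{S}(t/r)^r\|_\diamond\le r\cdot \tfrac{1}{12}\beta\, t^3/r^3=\frac{\beta t^3}{12 r^2}.
\end{align*}

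To make the error at most $\eps$, it suffices that $r^2\ge \beta t^3/(12\eps)$. I therefore choose $r=\lceil \sqrt{\beta t^3/(12\eps)}\,\rceil$. This is $\mathcal{O}(t^{3/2}\beta^{1/2}/\sqrt{\eps})$ as claimed, with $r\ge1$ absorbed in the $\mathcal{O}$ for nondegenerate $\beta$.

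I expect no real obstacle. The only point needing care is that $\mathcal{S}(t/r)$ has diamond norm exactly $1$, which is what keeps the factor in \lem{power-bound} from growing exponentially in $r$. This holds because each factor $e^{(t/r)\mathcal{L}_j/2}$ has a nonnegative time argument and is therefore a CPTP map.
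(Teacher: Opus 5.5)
Your proposal is correct and follows the paper's own proof: you apply the preceding one-step bound with step $t/r$, telescope the error (you cite \lem{power-bound}, the paper says ``triangle inequality'') using that both maps are channels, and then solve $\beta t^3/(12r^2)\le\eps$ for $r$. Your explicit constant $1/12$ and your remark that contractivity keeps the power-bound prefactor at $1$ are accurate refinements of the paper's $\mathcal{O}(r(t/r)^3)$ step.
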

\begin{proof}
    By the triangle inequality, the total approximation error is bounded by $\|\mathcal{S}(t/r)^r - e^{t\mathcal{L}}\|_{\diamond} \le r \|\mathcal{S}(t/r) - e^{t\mathcal{L}/r}\|_{\diamond}$. Applying the bound from the preceding theorem with step size $t/r$ yields a total error of $\mathcal{O}(r (t/r)^3) = \mathcal{O}(t^3/r^2)$. Equating this bound to $\eps$ gives the result.
\end{proof}
If $\mc L$ is a $(\Gamma,k)$-Lindbladian, we have  \begin{align*}
    &\sum_{j_1 =1}^m \bigg\|\bigg[\sum_{j_3=j_1+1}^{m} \mc L_{j_3},
    \sum_{j_2=j_1+1}^{m} \mc L_{j_2}, \mc L_{j_1}\bigg]\bigg\|_{\diamond} \\
    &\qquad
    + \sum_{j_1=1}^m\bigg\|\bigg[\mc L_{j_1}, \mc L_{j_1},
    \sum_{j_2=j_1+1}^{m} \mc L_{j_2} \bigg]\bigg\|_{\diamond} \\
    \le {}& 2\sum_{j_1,j_2,j_3 =1}^m
    \bigg\|\bigg[ \mc L_{j_3}, \mc L_{j_2}, \mc L_{j_1}\bigg]\bigg\|_{\diamond} \\
    \le {}& 2\sum_{v_1,v_2,v_3 =1}^M
    \bigg\|\bigg[ \mc K_{v_3}, \mc K_{v_2}, \mc K_{v_1}\bigg]\bigg\|_{\diamond} \\
    ={}& \mathcal{O}(k^2 g^3 N),
\end{align*}
where the third line follows from decomposing the Lindbladians $\{\mc L_j\}$ into local superoperators $\{\mc K_v\}$ each supported on at most $2k$ sites, and applying the triangle inequality. The last line follows from \cor{commu-bound-sum}. The number of Trotter steps then equals
\begin{align*}
    r =  \mathcal{O}(\sqrt{N} k(gt)^{3/2}\eps^{-1/2}).
\end{align*}

Similarly, for the first-order case, we have the following theorem.
\begin{theorem}
    Let $\mathcal{L} = \sum_{j=1}^m \mathcal{L}_j$ be a Lindbladian consisting of $m$ summands and let $t > 0$. 
    Let $\mathcal{S}_1(t) = \prod_{j=1}^m e^{t \mathcal{L}_j}$ be the first-order Lie-Trotter formula.
    The additive Trotter error is bounded by
    \begin{align*}
        \|\mathcal{S}_1(t) - e^{t\mathcal{L}}\|_{\diamond}
        \le {}& \frac{t^2}{2}\sum_{j_1 =1}^m
        \bigg\|\bigg[\sum_{j_2=j_1+1}^{m} \mc L_{j_2},
        \mc L_{j_1}\bigg]\bigg\|_{\diamond}.
    \end{align*}
\end{theorem}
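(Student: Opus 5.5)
We prove the first-order bound in two steps: first the two-term case, then the general case by a telescoping argument over the summands. This mirrors the second-order proof above.

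\emph{Two-term case.} Take $\mc L=\mc A+\mc B$. We would like an error representation in which every propagator evolves for a nonnegative time. Apply \lem{variation-constants} to the map $\tau\mapsto e^{\tau\mc A}e^{\tau\mc B}$, whose derivative is
\begin{align*}
\mc A e^{\tau\mc A}e^{\tau\mc B}+e^{\tau\mc A}\mc B e^{\tau\mc B}
=\mc L\,e^{\tau\mc A}e^{\tau\mc B}+[e^{\tau\mc A},\mc B]e^{\tau\mc B}.
\end{align*}
Duhamel's formula then gives
\begin{align*}
e^{t\mc A}e^{t\mc B}-e^{t\mc L}=\int_0^t e^{(t-\tau)\mc L}\,[e^{\tau\mc A},\mc B]\,e^{\tau\mc B}\,\d\tau .
\end{align*}
Next we write the inner commutator without inverse evolutions:
\begin{align*}
[e^{\tau\mc A},\mc B]=\int_0^\tau \frac{\d}{\d\sigma}\big(e^{\sigma\mc A}\mc B e^{(\tau-\sigma)\mc A}\big)\d\sigma=\int_0^\tau e^{\sigma\mc A}[\mc A,\mc B]e^{(\tau-\sigma)\mc A}\,\d\sigma .
\end{align*}
The usual representation $e^{\tau\ad_{\mc A}}(\mc B)$ would introduce $e^{-\tau\mc A}$. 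Here, by contrast, each exponent is $\sigma\ge0$ or $\tau-\sigma\ge0$. This is the analogue of recombining inverse evolutions with forward ones in the second-order proof.

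Every factor $e^{(t-\tau)\mc L}$, $e^{\sigma\mc A}$, $e^{(\tau-\sigma)\mc A}$ and $e^{\tau\mc B}$ is then a quantum channel with diamond norm one. Submultiplicativity of the diamond norm bounds the integrand by $\|[\mc A,\mc B]\|_\diamond$, and integrating over $0\le\sigma\le\tau\le t$ gives
\begin{align*}
\|e^{t\mc A}e^{t\mc B}-e^{t\mc L}\|_\diamond\le \frac{t^2}{2}\|[\mc A,\mc B]\|_\diamond=\frac{t^2}{2}\|[\mc B,\mc A]\|_\diamond .
\end{align*}

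\emph{General case.} Using the paper's ordering convention, $\mathcal S_1(t)=e^{t\mc L_m}\cdots e^{t\mc L_1}$. For $j_1=0,\dots,m$, define the hybrid channels
\begin{align*}
\mc V_{j_1}=e^{t\sum_{j>j_1}\mc L_j}\prod_{j=1}^{j_1}e^{t\mc L_j}.
\end{align*}
Then $\mc V_0=e^{t\mc L}$ and $\mc V_m=\mathcal S_1(t)$. Consecutive hybrids differ only in their first two factors:
\begin{align*}
\mc V_{j_1}-\mc V_{j_1-1}=\Big(e^{t\sum_{j>j_1}\mc L_j}e^{t\mc L_{j_1}}-e^{t\sum_{j\ge j_1}\mc L_j}\Big)\prod_{j=1}^{j_1-1}e^{t\mc L_j}.
\end{align*}
The trailing product is a channel with diamond norm one, so it can be dropped when taking norms. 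Apply the two-term bound with $\mc A=\sum_{j_2>j_1}\mc L_{j_2}$ and $\mc B=\mc L_{j_1}$; note that $\mc A$ is itself a Lindbladian, so $e^{\sigma\mc A}$ is still a channel. This bounds each difference by $\frac{t^2}{2}\big\|\big[\sum_{j_2=j_1+1}^m\mc L_{j_2},\mc L_{j_1}\big]\big\|_\diamond$. Summing over $j_1$ by the triangle inequality yields the stated bound.

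The main obstacle is the first step. One must choose a representation of $[e^{\tau\mc A},\mc B]$ that avoids non-contractive backward evolutions. Once the $\sigma$-integral form above is used, the remaining steps are routine norm bookkeeping.
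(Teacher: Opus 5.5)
Your proof is correct and is essentially the paper's argument. The paper starts from the representation of Childs et al.\ involving $e^{-\tau_2\ad_{\mc L_2}}([\mc L_2,\mc L_1])$ and absorbs the backward factor $e^{-\tau_2\mc L_2}$ into $e^{\tau_1\mc L_2}$, which gives the integrand $e^{(t-\tau_1)\mc L}e^{(\tau_1-\tau_2)\mc L_2}[\mc L_2,\mc L_1]e^{\tau_2\mc L_2}e^{\tau_1\mc L_1}$; this is exactly your integrand with $\mc A=\mc L_2$, $\mc B=\mc L_1$, $\sigma=\tau_1-\tau_2$, so your direct Duhamel derivation just never introduces the inverse evolution, and your telescoping over hybrid channels is identical to the paper's.
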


\begin{proof}
    The proof generalizes the derivation in \cite[Eq.~(117)]{childs2021theory} from Hamiltonians to Lindbladians. We first consider the $m=2$ case. Following \cite[Eq.~(117)]{childs2021theory}, the additive Trotter error can be expressed as 
    \begin{align*}
        &\|e^{t\mc L_2}e^{t\mc L_1} - e^{t(\mc L_1+\mc L_2)}\|_{\diamond} \\
        ={}& \bigg\|\int_{0}^t \d \tau_1 \int_{0}^{\tau_1} \d \tau_2\,
        e^{(t-\tau_1)\mc L} e^{\tau_1 \mc L_2} e^{-\tau_2\ad_{\mc L_2}}([\mc L_2, \mc L_1])
        e^{\tau_1 \mc L_1}\bigg\|_{\diamond}.
    \end{align*}
    Expanding the adjoint action $e^{-\tau_2\ad_{\mc L_2}}([\mc L_2, \mc L_1]) = e^{-\tau_2\mc L_2}[\mc L_2, \mc L_1]e^{\tau_2\mc L_2}$, the integrand becomes
    \begin{align*}
        e^{(t-\tau_1)\mc L} e^{(\tau_1-\tau_2)\mc L_2} [\mc L_2, \mc L_1] e^{\tau_2\mc L_2} e^{\tau_1 \mc L_1}.
    \end{align*}
    Observe that the only Lindbladian evolution with negative time in the adjoint expansion, $e^{-\tau_2\mc L_2}$, is absorbed by the preceding term $e^{\tau_1\mc L_2}$, leaving $e^{(\tau_1-\tau_2)\mc L_2}$. Since $\tau_2 \le \tau_1 \le t$, all the evolution times $t-\tau_1$, $\tau_1-\tau_2$, $\tau_2$, and $\tau_1$ are non-negative. Therefore, since $\|e^{\tau \mc L'}\|_{\diamond} = 1$ for any Lindbladian $\mc L'$ and $\tau \ge 0$, the diamond norm of the additive Trotter error can be bounded by 
    \begin{align*}
        \|e^{t\mc L_2}e^{t\mc L_1} - e^{t(\mc L_1+\mc L_2)}\|_{\diamond}
        \le {}& \int_{0}^t \d \tau_1 \int_{0}^{\tau_1} \d \tau_2
        \|[\mc L_2, \mc L_1]\|_{\diamond}\\
        ={}& \frac{t^2}{2} \|[\mc L_2, \mc L_1]\|_{\diamond}.
    \end{align*}
    For a general Lindbladian $\mc L = \sum_{j=1}^m \mc L_j$, applying the triangle inequality via a telescoping sum argument gives 
    \begin{align*}
        &\bigg\|\prod_{j=1}^m e^{t \mathcal{L}_j}
        - e^{t\sum_{j=1}^m \mc L_j}\bigg\|_{\diamond} \\
        \le {}& \sum_{j_1 =1}^m
        \begin{aligned}[t]
        \bigg\|&e^{t\sum_{j_2=j_1+1}^m \mc L_{j_2}}
        \prod_{j_2=1}^{j_1} e^{t\mc L_{j_2}}- e^{t\sum_{j_2=j_1}^m \mc L_{j_2}}
        \prod_{j_2=1}^{j_1-1} e^{t\mc L_{j_2}}
        \bigg\|_{\diamond}
        \end{aligned} \\
        \le {}& \sum_{j_1 =1}^m
        \begin{aligned}[t]
        \bigg\|&\bigg(e^{t\sum_{j_2=j_1+1}^m \mc L_{j_2}} e^{t\mc L_{j_1}}
        - e^{t\sum_{j_2=j_1}^m \mc L_{j_2}}\bigg)\prod_{j_2=1}^{j_1-1} e^{t\mc L_{j_2}}
        \bigg\|_{\diamond}
        \end{aligned} \\
        \le {}& \sum_{j_1 =1}^m
        \begin{aligned}[t]
        \bigg\|&e^{t\sum_{j_2=j_1+1}^m \mc L_{j_2}} e^{t\mc L_{j_1}}- e^{t(\sum_{j_2=j_1+1}^m \mc L_{j_2} + \mc L_{j_1})}
        \bigg\|_{\diamond}
        \end{aligned} \\
        \le {}& \frac{t^2}{2}\sum_{j_1 =1}^m
        \bigg\|\bigg[\sum_{j_2=j_1+1}^{m} \mc L_{j_2},
        \mc L_{j_1}\bigg]\bigg\|_{\diamond},  
    \end{align*}
    where the last inequality follows from the Trotter error bound derived above for the $m=2$ case. 
\end{proof}
In the subsequent sections, we focus on the second-order product formula due to its better error scaling compared to the first-order case. Nevertheless, our analytical framework generalizes naturally to the first-order formula.

\section{Truncation error of the BCH formula}
\label{append:bch-truncate}
Let $\mc L_1, \ldots, \mc L_M$ be general Lindbladians. The formal BCH expansion of the product $e^{\mc L_M}\cdots e^{\mc L_2}e^{\mc L_1}$ can be written as
\begin{align}
\label{eq:def-bch}
 	e^{\mc L_M}\cdots e^{\mc L_2}e^{\mc L_1} = \exp(\sum_{q = 1}^{\infty} \Phi_q),
\end{align}
where each $\Phi_q$ is a sum of the weighted right-nested commutators of $\mc L_j$ (see \citet{arnal2021note}):
\begin{align}
\label{eq:bch}
 	\Phi_q
    &= \sum_{\substack{p_v \ge 0,\\p_1+\cdots+p_M = q}}
    \frac{1}{p_1!\cdots p_M!} \sum_{\sigma\in S_q}
    c_{\sigma, q} \nonumber\\
    &\quad \mc R_{\sigma}\{[\underbrace{\mc L_M, \ldots, \mc L_M}_{p_M},
    \ldots, \underbrace{\mc L_1, \ldots, \mc L_1}_{p_1}]\},
\end{align}
and $c_{\sigma,q}$ is the same coefficient as in Eq.~\eq{def-c-sigma}. Define the sum of the $q$-fold right-nested commutator norms as
\begin{align}
\label{eq:def-comm}
 	\alpha_{\rm comm}^{(q)} =\sum_{v_1, \ldots, v_q = 1}^{M} \left\|[\mc L_{v_1}, \ldots, \mc L_{v_q}]\right\|_{\diamond}.
\end{align}
The norm of $\Phi_q$ can be bounded by
\begin{align*}
 	\|\Phi_q\|_{\diamond}\le \frac{1}{q^2} 	\alpha_{\rm comm}^{(q)},
\end{align*}
for any $q\ge 1$ (see, e.g.,  \cite[Proposition 5]{aftab2024multi}). Consequently, the series expansion in Eq.~\eq{def-bch} converges if there exist two constants $J,C\ge 0$ such that
\begin{align}
\label{eq:converg-bch}
 	\sup_{q\ge J}\alpha_{\rm comm}^{(q)} \le C.
\end{align} 
A trivial bound for $\alpha_{\rm comm}^{(q)}$ is
\begin{align}
\label{eq:trivial-alpha}
 	\alpha_{\rm comm}^{(q)} \le \bigg(2\sum_{v = 1}^M \|\mc L_v\|_{\diamond}\bigg)^q,
\end{align}
by $\|[A,B]\|_{\diamond}\le 2\|A\|_{\diamond}\|B\|_{\diamond}$ and the triangle inequality. The quantity $\alpha_{\rm comm}^{(q+1)}$ also appears in the complexity of $q$-th order Trotterization in Hamiltonian simulation~\cite{childs2021theory}, where each $\mc L_v$ is an Hermitian operator in the decomposition of the simulated Hamiltonian. For some physical Hamiltonians, such as $k$-local Hamiltonians on a lattice of size $N$, where each $\mc L_v$ can be decomposed as a sum of terms acting nontrivially on at most $k$ sites, $\alpha_{\rm comm}^{(q)}$ admits a sharper bound than Eq.~\eq{trivial-alpha}. This leads to the commutator scaling of the $q$-th order Trotterization. Specifically, suppose that the maximum interaction strength of $\sum_{v = 1}^M \mc L_v$ on each site is $g$. Eq.~\eq{trivial-alpha} yields $(2Ng)^q$, while the bound considering the commutator structure of $\mc L_v$ gives $Nq!(2kg)^q = \mathcal{O}(N(qkg)^q)$. The latter bound provides a better complexity scaling for local Hamiltonians where $k \ll N$ and $q$ is constant.

However, in the error analysis of advanced Trotter methods aiming for high precision \cite{aftab2024multi, watson2025exponentially}, bounds on $\alpha_{\rm comm}^{(q)}$ for a single $q$ are insufficient. These methods typically rely on the BCH expansion of the product formula $\mc P(st)$ to characterize high-order Trotter remainders. This requires the BCH series $\sum_{q=1}^{\infty}\Phi_q$ to be convergent. Since the interaction strength of the generators in $\mc P(st)$ is proportional to the step size $st$, the sum of $q$-th order nested commutators for $k$-local lattice systems is bounded by $\alpha_{\rm comm}^{(q)} = \mathcal{O}(Nq! (2kgst)^q)$. Due to this factorial growth, this bound cannot guarantee convergence of the full BCH series for any fixed $st>0$. A similar difficulty arises when the generators are $k$-local Lindbladians, as the parameter $\alpha_{\rm comm}^{(q)}$ admits an analogous factorial growth (see \cor{commu-bound-sum}).

To bypass this divergence issue of the BCH formula, \citet{mizuta2026commutator} showed that for $k$-local Hamiltonians on a lattice, the truncated BCH formula could approximate the exponential product even outside its convergence radius. This permits them to anlyze the truncated BCH formula instead of the full series, thereby avoiding the convergence requirement. Their proof relies on a subsystem Trotterization technique specific to lattices, which is difficult to extend to general Hamiltonians. To analyze the BCH truncation error in more general settings, we instead construct an equivalent continuous-time evolution and use its Magnus expansion to derive universal truncation bounds.

We first define a continuous-time differential equation constructed to match the discrete product in Eq.~\eq{def-bch}. Let
\begin{align}
\label{eq:df-eq-general}
&\frac{\d \mc Y(\tau)}{\d \tau}
= \mc L(\tau) \mc Y(\tau), \quad
\mc Y(0)= \mc I, \nonumber\\
&\mc L(\tau)
= \mc L_j
\quad \text{if } \tau \in (j-1, j] \text{ for } j=1,\ldots,M.
\end{align}
The solution to this ODE at time $\tau=M$ is exactly the discrete product in Eq.~\eq{def-bch}:
\[
\mc Y(M)  = e^{\mc L_M} \cdots e^{\mc L_1}.
\]
For the ODE in Eq.~\eq{df-eq-general}, let $\Omega_q(M)$ be the $q$-th Magnus term obtained from Eq.~\eqref{eq:magnus-formula} by setting $A(\tau)=\mc L(\tau)$ and $t=M$. We use these terms only order by order, without assuming convergence of the full Magnus series.

We now show that each term in the BCH formula ($\Phi_q$) and the Magnus expansion ($\Omega_q(M)$) is identical. The proof is based on the formal expression of these terms in Eq.~\eq{bch} and Eq.~\eq{magnus-formula} and does not require the convergence of the BCH formula and the Magnus expansion.
\begin{lemma}[Equivalence of BCH and Magnus Generators]
\label{lem:magnus-bch-general}
The $q$-th order term $\Phi_q$ of the BCH series in Eq.~\eq{def-bch} is identical to the $q$-th order Magnus expansion term $\Omega_q(M)$ in Eq.~\eq{magnus-formula} evaluated at time $M$ for the continuous system defined in Eq.~\eq{df-eq-general}.
\end{lemma}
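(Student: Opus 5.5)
The plan is to compare the two formal expressions term by term. Eq.~\eq{bch} and Eq.~\eq{magnus-formula} both involve the same coefficients $c_{\sigma,q}$ and the same reordering operators $\mc R_\sigma$. So it suffices to show one thing: once the generator $\mc L(\tau)$ is substituted into the $q$-fold simplex integral of Eq.~\eq{magnus-formula}, the integral collapses to the weighted sum over multiplicities $(p_1,\ldots,p_M)$ that appears in Eq.~\eq{bch}. Every object involved is a finite integral of a bounded, piecewise-constant integrand, so neither the BCH series nor the Magnus series needs to converge.

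\emph{Step 1: decompose the simplex.} Set $t=M$ and take the simplex $\Delta_q=\{M\ge\tau_1\ge\tau_2\ge\cdots\ge\tau_q\ge0\}$. Up to a null set (points where some $\tau_i$ is an integer, or where two times coincide), $\Delta_q$ is a disjoint union of cells indexed by the interval labels $(j_1,\ldots,j_q)$, where $\tau_i\in(j_i-1,j_i]$. Because the times are ordered, the labels must satisfy $j_1\ge j_2\ge\cdots\ge j_q$.

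Such a non-increasing label sequence is determined by its multiplicities $p_j=\#\{i: j_i=j\}$, which satisfy $p_j\ge0$ and $\sum_j p_j=q$. The corresponding cell is a product over $j$ of ordered $p_j$-simplices inside the unit interval $(j-1,j]$. Its volume is therefore $\prod_j 1/p_j!$.

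\emph{Step 2: evaluate the integrand on a cell.} On the cell with multiplicities $(p_1,\ldots,p_M)$, $\mc L(\tau_i)=\mc L_{j_i}$ is constant. The ordered tuple $(\mc L(\tau_1),\ldots,\mc L(\tau_q))$ is thus exactly $(\underbrace{\mc L_M,\ldots,\mc L_M}_{p_M},\ldots,\underbrace{\mc L_1,\ldots,\mc L_1}_{p_1})$.

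The operator $\mc R_\sigma$ acts on positions $1,\ldots,q$ of this tuple. Its action is therefore the same whether it is applied to the time-dependent operators or to the constant tuple. Consequently the integrand of Eq.~\eq{magnus-formula} is constant on the cell and equals $\sum_{\sigma}c_{\sigma,q}\mc R_\sigma\{[\mc L_M,\ldots,\mc L_M,\ldots,\mc L_1,\ldots,\mc L_1]\}$.

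\emph{Step 3: sum over cells.} Multiplying each cell's constant integrand by its volume and summing over all multiplicity vectors reproduces Eq.~\eq{bch} term by term. This gives $\Omega_q(M)=\Phi_q$ for every finite $q$.

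I expect the main obstacle to be the bookkeeping of conventions rather than any analysis. Two points need care. First, the time ordering $\tau_1\ge\cdots\ge\tau_q$ must place the \emph{latest} factor $\mc L_M$ leftmost, matching the order of the product $e^{\mc L_M}\cdots e^{\mc L_1}$ and the ordering in Eq.~\eq{bch}. Second, the permutation convention for $\mc R_\sigma$ (which index $d_\sigma$ counts descents of) must agree between Arnal et al.'s BCH formula and the Magnus formula, since both are quoted with the same $c_{\sigma,q}$.

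I would settle both points by checking $q=2$, $M=2$ directly. On the cell with $p_2=p_1=1$ (volume $1$), the surviving permutation gives $\tfrac12[\mc L_2,\mc L_1]$. On the cells with $p_2=2$ or $p_1=2$, the integrand involves $[\mc L_j,\mc L_j]=0$. The total is $\tfrac12[\mc L_2,\mc L_1]$, which is the standard second BCH term for $e^{\mc L_2}e^{\mc L_1}$. Finally, I would observe that the equal-time and integer-time boundaries have measure zero, so they do not affect the integrals.
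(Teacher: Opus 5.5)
Your proposal is correct and follows essentially the same route as the paper: substitute the piecewise-constant $\mc L(\tau)$ into the $q$-fold Magnus integral, split the domain into unit-interval cells on which the integrand is constant, and regroup by the multiplicities $(p_1,\ldots,p_M)$ to recover Eq.~\eq{bch}. The only difference is bookkeeping. You read off the weight $\prod_j 1/p_j!$ directly as the volume of a product of ordered sub-simplices, whereas the paper first symmetrizes to the hypercube $[0,M]^q$ using the time-ordering $\mathcal{T}$ and obtains the same weight as $\frac{1}{q!}\binom{q}{p_1,\ldots,p_M}$.
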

\begin{proof}
    By definition, the $q$-th order Magnus expansion term $\Omega_q(M)$ for the continuous system in Eq.~\eq{df-eq-general} is given by the integral formula (Eq.~\eq{magnus-formula} with $t=M$)
    \begin{align}
         \Omega_q(M)
         &= \sum_{\sigma\in S_q} c_{\sigma, q}
         \int_{0}^{M} \d \tau_1 \int_{0}^{\tau_1}\d \tau_2
         \cdots \int_{0}^{\tau_{q-1}}\d \tau_q \nonumber\\
         &\quad \mc R_{\sigma}\big\{[\mc L(\tau_1), \ldots,
         \mc L(\tau_q)] \big\}. \nonumber
    \end{align}
    We first symmetrize the time-ordered integral by extending the integration domain to the hypercube $[0, M]^q$ and discretize the integral, which yields
    \begin{align*}
        \Omega_q(M)
        ={}& \frac{1}{ q!}\sum_{\sigma\in S_q} c_{\sigma, q}
        \int_{0}^{M} \d \tau_1 \cdots \int_{0}^{M}\d \tau_q \nonumber\\
        &\quad \mc R_{\sigma} \circ \mc T\big\{[\mc L(\tau_1), \ldots,
        \mc L(\tau_q)]\big \} \\
        ={}& \frac{1}{ q!}\sum_{\sigma\in S_q} c_{\sigma, q}
        \sum_{1\le v_1,\ldots, v_q \le M}
        \int_{v_1-1}^{v_1} \d \tau_1 \cdots
        \int_{v_q-1}^{v_q}\d \tau_q \nonumber\\
        &\quad \mc R_{\sigma} \circ \mc T\big\{[\mc L(\tau_1), \ldots,
        \mc L(\tau_q)] \big\}.
    \end{align*}
    Since $\mc L(\tau)$ is constant on each interval $(v_j-1, v_j]$ the integral simplifies to the sum
    \begin{align}
        \Omega_q(M)
        &= \frac{1}{ q!}\sum_{\sigma\in S_q} c_{\sigma, q}
        \sum_{1\le v_1,\ldots, v_q \le M}
        \mc R_{\sigma} \circ \mc T\big\{[\mc L_{v_1}, \ldots,
        \mc L_{v_q}] \big\}.
        \label{eq:magnus-general-midstep}
    \end{align}
    Now, we regroup the inner sum over the $M^q$ sequences $(v_1, \ldots, v_q)$ according to the number of times each generator $\mc L_j$ appears. Let $p_j \ge 0$ be the count of index $j$ in a sequence, such that $p_1 + \cdots + p_M = q$. The number of sequences $(v_1, \ldots, v_q)$ corresponding to a specific set of counts $(p_1, \ldots, p_M)$ is given by the multinomial coefficient $\binom{q}{p_1, \ldots, p_M}$.
    
    Regrouping the sum in Eq.~\eq{magnus-general-midstep} by these counts yields
    \begin{align*}
         \Omega_q(M)
         ={}& \frac{1}{ q!}\sum_{\sigma\in S_q} c_{\sigma, q}
         \sum_{\substack{p_j \ge 0,\\ p_1+\cdots+p_M = q}}
         \binom{q}{p_1, \ldots, p_M} \nonumber\\
         &\quad \mc R_{\sigma}\{[\underbrace{\mc L_M, \ldots, \mc L_M}_{p_M},
         \ldots, \underbrace{\mc L_1, \ldots, \mc L_1}_{p_1}]\} \\
         ={}& \sum_{\substack{p_j \ge 0,\\ p_1+\cdots+p_M = q}}
         \frac{1}{p_1!\cdots p_M!}
         \sum_{\sigma\in S_q} c_{\sigma, q} \nonumber\\
         &\quad \mc R_{\sigma}\{[\underbrace{\mc L_M, \ldots, \mc L_M}_{p_M},
         \ldots, \underbrace{\mc L_1, \ldots, \mc L_1}_{p_1}]\},
    \end{align*}
    which matches $\Phi_q$ in Eq.~\eq{bch}.
\end{proof}

Establishing this equivalence allows us to bound the BCH truncation error by studying the corresponding Magnus expansion. Before the analysis, we first introduce the necessary concepts using the general notation from our preliminaries.
We define the  $q_0$-th order truncated generator  as
\begin{align*}
 \Omega_{(q_0)}(\tau) := \sum_{q=1}^{q_0} \Omega_{q}(\tau).
\end{align*}
The corresponding $q_0$-th order approximate evolution is
\begin{align*}
 \mc Y_{(q_0)}(\tau):=e^{ \Omega_{(q_0)}(\tau) }.
\end{align*}
Following the derivative formula for the exponential map, the approximate evolution $\mathcal{Y}_{(q_0)}(\tau)$ is itself the solution to a differential equation \begin{align*}
 \frac{\d \mc Y_{(q_0)}(\tau)}{\d \tau} = \mc{L}_{(q_0)}(\tau) \mc Y_{(q_0)}(\tau).
\end{align*} The modified generator $\mathcal{L}_{(q_0)}(\tau)$ is given by 
\begin{align}
&\mathcal{L}_{(q_0)}(\tau)\notag\\
={}& \frac{\mathrm{d}}{\mathrm{d}\tau}\left(e^{\Omega_{(q_0)}(\tau)}\right)
e^{-\Omega_{(q_0)}(\tau)} \notag \\
={}& \left( \int_0^1 e^{x\Omega_{(q_0)}(\tau)}
\dot{\Omega}_{(q_0)}(\tau)
e^{(1-x)\Omega_{(q_0)}(\tau)} \mathrm{d}x \right)
e^{-\Omega_{(q_0)}(\tau)} \notag \\
={}& \int_0^1 e^{x\Omega_{(q_0)}(\tau)}
\dot{\Omega}_{(q_0)}(\tau)
e^{-x\Omega_{(q_0)}(\tau)} \mathrm{d}x \notag \\
={}& \int_0^1 \sum_{\ell=0}^{\infty} \frac{x^\ell}{\ell!}
\mathrm{ad}_{\Omega_{(q_0)}(\tau)}^\ell
\big( \dot{\Omega}_{(q_0)}(\tau) \big) \mathrm{d}x \notag \\
={}& \sum_{\ell=0}^{\infty} \frac{1}{(\ell+1)!}
\mathrm{ad}_{\Omega_{(q_0)}(\tau)}^\ell
\big( \dot{\Omega}_{(q_0)}(\tau) \big)\label{eq:expansion-L-q},
\end{align}
where the second line follows from the derivative of the exponential map, the fourth line follows from the expansion of the adjoint representation. 
Our first step is to reduce the total truncation error to the integral of the generator error.
\begin{lemma}
\label{lem:error-reduction-general}
The error of the $q_0$-th order approximation evolution is bounded by
\begin{align*}
&\big\| e^{\Omega_{(q_0)}(M)} -\mc Y(M)\big\|_{\diamond} \\
\le {}& \bigg( \int_{0}^{M}\| \mc{L}_{(q_0)}(\tau)
-\mc{L}(\tau)\|_{\diamond}\,\d \tau\bigg)
e^{\int_{0}^{M}\| \mc{L}_{(q_0)}(\tau)
-\mc{L}(\tau)\|_{\diamond}\,\d \tau}.
\end{align*}
\end{lemma}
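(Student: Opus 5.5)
We compare the two evolutions $\mc Y(\tau)$ and $\mc Y_{(q_0)}(\tau)$, generated by $\mc L(\tau)$ and $\mc L_{(q_0)}(\tau)$ respectively, using the variation-of-constants formula (\lem{variation-constants}) followed by Gronwall's inequality. Write $\mc L_{(q_0)}(\tau)=\mc L(\tau)+\mc B(\tau)$ with $\mc B(\tau):=\mc L_{(q_0)}(\tau)-\mc L(\tau)$, and set $\beta(\tau):=\|\mc B(\tau)\|_\diamond$. Both $\mc L(\tau)$ and $\mc L_{(q_0)}(\tau)$ are bounded and piecewise continuous on $[0,M]$. For $\mc L_{(q_0)}$ this holds because $\Omega_{(q_0)}$ is a finite sum of iterated integrals of the piecewise-constant $\mc L(\tau)$, so it and its derivative are bounded. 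The series in Eq.~\eq{expansion-L-q} converges absolutely since $\Omega_{(q_0)}(\tau)$ is bounded.

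\textbf{Step 1 (Duhamel identity).} Apply \lem{variation-constants} with $\mc A=\mc L$ and $\mc B$ as above. This gives
\begin{align*}
\mc Y_{(q_0)}(M)=\mc Y(M)+\int_0^M \mc U_{\mc L}(M,\tau)\,\mc B(\tau)\,\mc Y_{(q_0)}(\tau)\,\d\tau .
\end{align*}
Here we use that $\mc Y_{(q_0)}(\tau)=e^{\Omega_{(q_0)}(\tau)}$ is the propagator $\mc U_{\mc L_{(q_0)}}(\tau,0)$. This holds because it solves the same linear ODE with $\mc Y_{(q_0)}(0)=\mc I$, since $\Omega_{(q_0)}(0)=0$.

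The key structural point is that the unperturbed propagator $\mc U_{\mc L}(M,\tau)$ is a product of factors $e^{c\mc L_j}$ with $c\ge 0$. Each factor is a Lindbladian evolution for nonnegative time, hence a quantum channel, so $\|\mc U_{\mc L}(M,\tau)\|_\diamond=1$. Taking diamond norms, with submultiplicativity, gives
\begin{align*}
\|\mc Y_{(q_0)}(M)-\mc Y(M)\|_\diamond\le \int_0^M \beta(\tau)\,\|\mc Y_{(q_0)}(\tau)\|_\diamond\,\d\tau .
\end{align*}

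\textbf{Step 2 (bound on $\|\mc Y_{(q_0)}(\tau)\|_\diamond$).} Apply the same identity on $[0,\tau]$, which gives $\mc Y_{(q_0)}(\tau)=\mc Y(\tau)+\int_0^\tau \mc U_{\mc L}(\tau,\sigma)\mc B(\sigma)\mc Y_{(q_0)}(\sigma)\,\d\sigma$. The function $u(\tau):=\|\mc Y_{(q_0)}(\tau)\|_\diamond$ then satisfies $u(\tau)\le 1+\int_0^\tau\beta(\sigma)u(\sigma)\,\d\sigma$. Gronwall's inequality yields $u(\tau)\le \exp\big(\int_0^\tau\beta\big)\le \exp\big(\int_0^M\beta\big)$.

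\textbf{Step 3 (conclusion).} Substitute this bound into Step 1 and pull the exponential out of the integral. This gives exactly $\big(\int_0^M\beta\big)e^{\int_0^M\beta}$, which is the claim of \lem{error-reduction-general}.

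The only delicate point is Step 1. We must place the perturbation so that the contractive propagator $\mc U_{\mc L}$, rather than the possibly non-contractive $\mc U_{\mc L_{(q_0)}}$, appears on the left. We must also make sure that no inverse evolutions enter, which is why the interaction-picture form is avoided. Beyond that, the remaining steps are routine.
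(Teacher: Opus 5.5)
Your proposal is correct and follows essentially the same route as the paper: the variation-of-constants identity with the contractive exact propagator $\mc U_{\mc L}(M,\tau)$ placed on the left, Gronwall's inequality to get $\|\mc Y_{(q_0)}(\tau)\|_\diamond\le \exp\bigl(\int_0^\tau\beta\bigr)$, and substitution back into the Duhamel bound. The only difference is cosmetic and occurs in the last step: the paper evaluates $\int_0^M\beta(\tau)e^{\int_0^\tau\beta}\,\d\tau=e^{\int_0^M\beta}-1$ and then uses $e^x-1\le xe^x$, whereas you bound the exponential uniformly by $e^{\int_0^M\beta}$; both give the stated bound.
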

\begin{proof}
Let \(\Delta(\tau)=\mathcal{L}_{(q_0)}(\tau)-\mathcal{L}(\tau)\) and denote by \(\mathcal{Y}(\tau,s)\) the exact time-ordered evolution generated by \(\mathcal{L}(\tau)\). Since a time-dependent Lindbladian generates a completely positive trace-preserving propagator for \(\tau\ge s\), \(\mathcal{Y}(\tau,s)\) is a quantum channel. Hence \(\|\mathcal{Y}(\tau,s)\|_{\diamond}=1\).

By \lem{variation-constants}, the difference between the approximate and exact evolution satisfies
\[
\mathcal{Y}_{(q_0)}(\tau)-\mathcal{Y}(\tau,0)=\int_0^\tau \mathcal{Y}(\tau,s)\,\Delta(s)\,\mathcal{Y}_{(q_0)}(s)\,ds.
\]
Taking diamond norms yields
\[
\|\mathcal{Y}_{(q_0)}(\tau)\|_{\diamond}\le 1+\int_0^\tau \|\Delta(s)\|_{\diamond}\|\mathcal{Y}_{(q_0)}(s)\|_{\diamond}ds.
\]
By Gronwall's inequality~\cite{gronwall1919note},
\[
\|\mathcal{Y}_{(q_0)}(\tau)\|_{\diamond}\le \exp\Big(\int_0^\tau \|\Delta(s)\|_{\diamond}ds\Big).
\]
Applying the same identity to the difference at time \(M\) gives
\begin{align*}
    \|\mathcal{Y}_{(q_0)}(M)-\mathcal{Y}(M,0)\|_{\diamond}
    &\le \int_0^M \|\Delta(\tau)\|_{\diamond}\,\|\mathcal{Y}_{(q_0)}(\tau)\|_{\diamond}d\tau\\
    &\le \exp\Big(\int_0^M \|\Delta(\tau)\|_{\diamond}d\tau\Big)-1
\end{align*}
which implies the stated bound using \(e^x-1\le x e^x\) for \(x\ge 0\).
\end{proof}
\lem{error-reduction-general} reduces our main task to analyzing and bounding $\int_{0}^{M}\| \mc{L}_{(q_0)}(\tau)-\mc{L}(\tau)\|_{\diamond}$. To understand the structure of this error term, we rely on the following property of the modified generator, which are based on the grade of the commutator terms \cite{fang2025high}. For an expression $L$ involving nested time integrals of commutators of $\mc L(\tau)$ (or norms of such terms), its \emph{grade} $\mathrm{gd}(L)$ is defined as the total number of $\mc L$ operators appearing in its nested commutator structure. 

\begin{lemma}
\label{lem:grade-cond}
 For any $1\le k\le q_0$, the sum of all terms in $\mc L_{(q_0)}(\tau) -\mc L(\tau)$ with grade $k$ vanishes. 
\end{lemma}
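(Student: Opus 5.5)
The approach is to introduce a bookkeeping parameter and compare formal power series. Replace $\mc L(\tau)$ by $\lambda\mc L(\tau)$. By Eq.~\eq{magnus-formula}, each $\Omega_q(\tau)$ is homogeneous of degree $q$ in $\lambda$, and so is $\dot\Omega_q(\tau)$. Hence a term of grade $k$ in any expression built from $\Omega_q$, $\dot\Omega_q$ and commutators is exactly the coefficient of $\lambda^k$. The claim then becomes: the $\lambda$-expansion of $\mc L_{(q_0)}(\tau)-\lambda\mc L(\tau)$, read off from Eq.~\eq{expansion-L-q}, has vanishing coefficients for $\lambda^1,\ldots,\lambda^{q_0}$. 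Only finitely many terms contribute to each fixed power of $\lambda$: a term $\ad^\ell_{\Omega_{(q_0)}}(\dot\Omega_{q})$ has grade at least $\ell+q$. So this comparison is purely algebraic and needs no convergence of the Magnus series.

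The key step is to use the defining property of the Magnus terms at the level of formal power series in $\lambda$. For small $\lambda$ the full Magnus series converges, since $\mc L(\tau)$ is bounded on $[0,\tau]$. Then $\Omega=\sum_q\Omega_q$ satisfies $e^{\Omega(\tau)}=\mc Y(\tau)$, and therefore
\[
\lambda\mc L(\tau)=\sum_{\ell\ge0}\frac{1}{(\ell+1)!}\ad^\ell_{\Omega(\tau)}(\dot\Omega(\tau))
\]
identically in $\lambda$. Alternatively, one can take this as the standard recursive characterization of the $\Omega_q$ from which Eq.~\eq{magnus-formula} is derived. Comparing coefficients of $\lambda^k$ for each $k\ge1$ gives the order conditions cited from Ref.~\cite{fang2025high}: the grade-$k$ part of the right-hand side equals $\lambda\mc L$ for $k=1$ and vanishes for $k\ge2$.

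Next we compare the truncated generator with the full one. The grade-$k$ part of $\sum_\ell\frac{1}{(\ell+1)!}\ad^\ell_{\Omega}(\dot\Omega)$ involves only products $\ad_{\Omega_{q_1}}\cdots\ad_{\Omega_{q_\ell}}(\dot\Omega_{q_{\ell+1}})$ with $q_1+\cdots+q_{\ell+1}=k$, so every $q_i\le k$. If $k\le q_0$, replacing $\Omega$ by $\Omega_{(q_0)}$ therefore changes nothing at grade $k$. Hence the grade-$k$ part of $\mc L_{(q_0)}(\tau)$ coincides with that of the full expansion, which is $\mc L(\tau)$ for $k=1$ and $0$ for $2\le k\le q_0$. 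Subtracting the grade-one term $\mc L(\tau)$ gives the lemma.

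The main obstacle I expect is justifying the full-series identity rigorously, without ever assuming convergence at the actual parameter values. I would handle it with the $\lambda$ scaling. For $|\lambda|$ small enough, the Magnus series converges on the bounded interval (the standard $\int\|\mc L\|<\pi$ criterion), so the identity holds as an analytic identity in $\lambda$ there. Both sides, restricted to each grade, are polynomial-coefficient identities, so they hold for the homogeneous parts at every $\lambda$, in particular at $\lambda=1$. A minor point is that $\mc L(\tau)$ is piecewise constant. Differentiability of $\Omega_q$ then holds only away from the integer breakpoints, which have measure zero, and this is harmless for the integral bound in \lem{error-reduction-general}.
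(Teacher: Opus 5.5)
Your proposal is correct, but it follows a different route from the paper. The paper gives no derivation of its own: it imports the order condition from Ref.~\cite{fang2025high}, namely their Proposition~3 together with a remark on the grades of the remainder in their Eq.~(33).

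You instead rederive that order condition from first principles, in four steps:
\begin{enumerate}
\item The $\lambda$-scaling turns ``grade'' into polynomial degree.
\item You use the identity $\lambda\mc L(\tau)=\sum_{\ell\ge0}\frac{1}{(\ell+1)!}\ad^{\ell}_{\Omega(\tau)}(\dot\Omega(\tau))$ only for small $|\lambda|$, where the Magnus series genuinely equals $\log\mc Y$.
\item Homogeneity then carries each grade-$k$ identity to $\lambda=1$.
\item The grade-$k$ part involves only $\Omega_q$ with $q\le k$, so truncating at $q_0$ changes nothing through grade $q_0$.
\end{enumerate}

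The citation buys brevity. Your version is self-contained and makes transparent why no convergence at the actual step size is needed, which is the same philosophy the paper uses for \lem{magnus-bch-general}. It also spares the reader from matching indexing conventions with an external paper.

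One input deserves a more careful statement: the convergence criterion. The $\int\|\mc L\|<\pi$ criterion is formulated for the operator norm on a Hilbert space. You only need some positive radius in $\lambda$, and two ingredients already give one. The first is the crude estimate $\|\Omega_q\|_\diamond\le 2^{q-1}q^{-2}\bigl(\int_0^{\tau}\|\lambda\mc L(s)\|_\diamond\,\d s\bigr)^q$, which follows directly from Eq.~\eq{magnus-formula}. The second is the standard fact that the $\Omega_q$ are the $\lambda$-Taylor coefficients of $\log\mc Y$.
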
 
\lem{grade-cond} follows from combining Proposition~3 in Ref.~\cite{fang2025high} and the fact that all terms in the remainder in their Eq.~(33) have grade at least $q_0+2$.
This implies that the error integral in \lem{error-reduction-general} contains only terms with grade $q_0+1$ and higher. We use the convention that $[X]=X$. Define the sum of \emph{doubly right-nested commutator} norms as
\begin{align}
\label{eq:def-beta-bar-general}
   \alpha_{\mathrm{comm}}^{(q_1, \ldots, q_{d})}
    =
    \sum_{v_{1,1}, \ldots, v_{d, q_{d}} = 1}^M
    \begin{aligned}[t]
    \big\|&\big[ [\mc L_{v_{1,1}}, \ldots, \mc L_{v_{1,q_1}}], \ldots,\\
    &\quad [\mc L_{v_{d,1}}, \ldots, \mc L_{v_{d,q_{d}}}] \big]
    \big\|_{\diamond}.
    \end{aligned}
\end{align} We now state the lemma that bounds the integrated expansion terms in $\mc{L}_{(q_0)}(\tau)$ as doubly right-nested commutators of the generators. 

\begin{lemma}
\label{lem:doubly-nested-bound-general}
For any non-negative integer $\ell$ and any sequence of positive integers $q_1, \ldots, q_{\ell+1}$, the integrated norm of the corresponding term in the modified generator $\mc L_{(q_0)}$ is bounded by
 \begin{align*}
  &\int_{0}^{M}
  \bigg\|\bigg(\prod_{\ell'=\ell}^{1}
  \ad_{\Omega_{q_{\ell'}}(\tau)}\bigg)
  (\dot\Omega_{q_{\ell+1}}(\tau))
  \bigg\|_{\diamond}
  \,\d\tau\\
  \le {}& \bigg(\prod_{\ell'=1}^{\ell+1} \frac{1}{q_{\ell'}^2}\bigg)
  q_{\ell+1}\alpha_{\comm}^{(q_1, \ldots, q_{\ell+1})}
  \le \alpha_{\comm}^{(q_1, \ldots, q_{\ell+1})}.
 \end{align*}
\end{lemma}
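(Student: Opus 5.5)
The plan is to substitute the Magnus integral formula Eq.~\eq{magnus-formula} for every $\Omega_{q_{\ell'}}(\tau)$ and the derivative formula for $\dot\Omega_{q_{\ell+1}}(\tau)$. Then we discretize each time integral over the unit intervals on which $\mc L(\cdot)$ is constant, exactly as in the proof of \lem{magnus-bch-general}, and bound each piece with the triangle inequality.

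\textbf{Inner factors.} For each inner factor $\Omega_{q_{\ell'}}(\tau)$ with $\ell'\le\ell$, extend the simplex $0\le\tau_q\le\cdots\le\tau_1\le\tau$ to the cube $[0,\tau]^{q_{\ell'}}\subseteq[0,M]^{q_{\ell'}}$, inserting the time-ordering $\mc T$ and a factor $1/q_{\ell'}!$. Splitting the cube into unit cells indexed by $(v_{\ell',1},\ldots,v_{\ell',q_{\ell'}})$ writes $\Omega_{q_{\ell'}}(\tau)$ as a sum over index tuples. Each term is $\sum_\sigma c_{\sigma,q_{\ell'}}$ times $\mc R_\sigma\circ\mc T$ applied to $[\mc L_{v_{\ell',1}},\ldots,\mc L_{v_{\ell',q_{\ell'}}}]$, weighted by a cell-volume factor in $[0,1]$ (it is smaller than one only for cells cut by the upper limit $\tau$).

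\textbf{Outer factor.} For $\dot\Omega_{q_{\ell+1}}(\tau)$, the first slot is fixed to $\mc L(\tau)=\mc L_{v(\tau)}$. Only $q_{\ell+1}-1$ integrated variables remain; symmetrizing them gives a factor $1/(q_{\ell+1}-1)!$ together with cells in $[0,M]^{q_{\ell+1}-1}$. The outer integral $\int_0^M\d\tau$ then sums over the index $v_{\ell+1,1}$ of the interval containing $\tau$, each with unit length.

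\textbf{Assembling the bound.} After these expansions, each term is a multilinear doubly right-nested commutator
\[
\big[[\cdots]_1,\ldots,[\cdots]_{\ell},[\cdots]_{\ell+1}\big],
\]
where the $\ell'$-th block is a reordering $\mc R_\sigma\circ\mc T$ of $[\mc L_{v_{\ell',1}},\ldots,\mc L_{v_{\ell',q_{\ell'}}}]$. Reordering the tuple only relabels the summation, so summing over all tuples gives at most $\alpha_{\comm}^{(q_1,\ldots,q_{\ell+1})}$ for each fixed choice of permutations. The number of permutations in block $\ell'$ is $q_{\ell'}!$, and each carries $|c_{\sigma,q_{\ell'}}|\le 1/q_{\ell'}^2$.

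For the inner blocks the count $q_{\ell'}!$ cancels the symmetrization factor $1/q_{\ell'}!$, leaving $1/q_{\ell'}^2$. For the outer block, $q_{\ell+1}!$ against $1/(q_{\ell+1}-1)!$ leaves $q_{\ell+1}/q_{\ell+1}^2$. The product of these is the first claimed bound, and the second follows from $q_{\ell'}\ge1$.

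\textbf{Main obstacle.} The main care point is the $\tau$-dependence of the upper limits. Both the ordering operator $\mc T$ and the permutation acting on truncated cells depend on the actual time values, so we must check two things. First, for each fixed cell assignment, every $\mc R_\sigma\circ\mc T$ evaluation is again a right-nested commutator of the same multiset of indices. Ties within a cell are handled by fixing any consistent order, which gives a measure-zero ambiguity. Second, all cell weights are at most one. Once these are verified, the triangle inequality applies cellwise and the proof reduces to the counting above.
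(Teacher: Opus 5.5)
You follow the paper's route: substitute the Magnus integrals, symmetrize each simplex into a cube with $\mathcal T$ inserted, discretize cellwise with weights at most one, and let $\int_0^M\d\tau$ run over the cell of $\tau$. Your final constants are also correct. The gap is in the counting step, which is the real content of the lemma.

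You claim that for each \emph{fixed} choice of permutations, the sum over cell tuples of the norms is at most $\alpha_{\mathrm{comm}}^{(q_1,\ldots,q_{\ell+1})}$, because ``reordering the tuple only relabels the summation.'' This is false. The map $v\mapsto\mathcal R_\sigma\circ\mathcal T\{v\}$ is not a bijection: $\mathcal T$ sends all $q!/\prod_j q_j!$ rearrangements of a multiset of cell indices (multiplicities $q_j$) to the same sorted tuple.

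A counterexample already appears in the one-block sum representing $\Omega_3(M)$ with $M=2$. Write $F(u)=\|[\mathcal L_{u_1},[\mathcal L_{u_2},\mathcal L_{u_3}]]\|_\diamond$, with $F(\mathcal R_\sigma\circ\mathcal T\{v\})$ the analogous norm of the reordered commutator. Set $x=\|[\mathcal L_2,[\mathcal L_2,\mathcal L_1]]\|_\diamond$ and $y=\|[\mathcal L_1,[\mathcal L_1,\mathcal L_2]]\|_\diamond$, and let $\sigma$ swap the first two slots. Then $\sum_{v\in[2]^3}F(\mathcal R_\sigma\circ\mathcal T\{v\})=3(x+y)$, whereas $\sum_{u\in[2]^3}F(u)=2(x+y)$. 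In general, the per-$\sigma$ sums average over $S_q$ to exactly $\sum_u F(u)$. Your bound would therefore force them all to be equal. So the cancellation of the count $q_{\ell'}!$ against $1/q_{\ell'}!$ cannot be obtained permutation by permutation.

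The repair is the paper's pooled count.
\begin{itemize}
\item First bound $|c_{\sigma,q}|\le 1/q^2$ uniformly, then sum over $\sigma$ and $v$ jointly.
\item A sorted tuple with multiplicities $(q_j)$ has $q!/\prod_j q_j!$ preimages under $\mathcal T$. Each arrangement of that multiset arises from it under exactly $\prod_j q_j!$ permutations. Hence $\sum_{\sigma\in S_q}\sum_v F(\mathcal R_\sigma\circ\mathcal T\{v\})=q!\sum_u F(u)$.
\item For the $\dot\Omega_{q_{\ell+1}}$ block, additionally track the slot $j'$ with $\sigma(j')=1$ that receives $\mathcal L(\tau)$. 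The coefficient is then $(q_{\ell+1}-1)!$ per pair ($j'$, tuple). Summing over the $q_{\ell+1}$ values of $j'$ gives the factor $q_{\ell+1}$ in the statement.
\end{itemize}
With this pooled count your numbers go through unchanged.

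The ``main obstacle'' you single out is not the real one. The $\tau$-dependent upper limits are harmless once the integrand is a nonnegative norm. The many-to-one collapse under $\mathcal T$ is what actually requires the combinatorial argument.
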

\begin{proof}
Let $T_q(t) := \{(\tau_1, \ldots, \tau_q) \mid t \ge \tau_1 \ge \cdots \ge \tau_q \ge 0\}$ be the time-ordered $q$-simplex, with the usual empty-product and zero-dimensional-integral conventions when $\ell=0$ or $q=0$.
 The term to be bounded is
\begin{align*}
    \mc{X}_{\ell}(\tau) := [\Omega_{q_1}(\tau), \ldots, \Omega_{q_\ell}(\tau), \dot{\Omega}_{q_{\ell+1}}(\tau)].
\end{align*}
By substituting the integral definitions, using the multilinearity of the commutator, and applying the triangle inequality, we obtain
\begin{widetext}
\begin{align}
    \|\mc{X}_{\ell}(\tau)\|_{\diamond}
    &\le \begin{multlined}[t]\sum_{\substack{(\sigma_1,\ldots,\sigma_{\ell+1})\in\\ S_{q_1}\times\cdots\times S_{q_{\ell+1}}}} \bigg(\prod_{\ell'=1}^{\ell+1} |c_{\sigma_{\ell'}, q_{\ell'}}|\bigg) \int_{T_{q_1}(\tau)} \d\tau_{1,1}, \ldots,\d\tau_{1,q_1}\cdots\int_{T_{q_{\ell+1}-1}(\tau)}\d\tau_{\ell+1,1},\ldots,\d\tau_{\ell+1,q_{\ell+1}-1} \\
        \bigg\| \Big[ \mc R_{\sigma_1}\big\{[\mc L(\tau_{1,1}), \ldots, \mc L(\tau_{1,q_1})]\big\}, \ldots, \mc R_{\sigma_\ell}\big\{[\mc L(\tau_{\ell,1}), \ldots, \mc L(\tau_{\ell,q_\ell})]\big\}, \\
        \mc R_{\sigma_{\ell+1}}\big\{[\mc L(\tau), \mc L(\tau_{\ell+1,1}), \ldots, \mc L(\tau_{\ell+1,q_{\ell+1}-1})]\big\} \Big] \bigg\|_{\diamond}\nonumber
    \end{multlined}\\
    & \le \begin{multlined}[t]\sum_{\substack{(\sigma_1,\ldots,\sigma_{\ell+1})\in\\ S_{q_1}\times\cdots\times S_{q_{\ell+1}}}} \bigg(\prod_{\ell'=1}^{\ell+1} \frac{1}{q_{\ell'}^2} \frac{1}{q_{\ell'}!}\bigg)q_{\ell+1} \int_{[0,\tau]^{q_1}} \d\tau_{1,1}, \ldots,\d\tau_{1,q_1}\cdots\int_{[0,\tau]^{q_{\ell+1}-1}}\d\tau_{\ell+1,1},\ldots,\d\tau_{\ell+1,q_{\ell+1}-1} \\
        \bigg\| \Big[ \mc R_{\sigma_1}\circ \mc T\big\{[\mc L(\tau_{1,1}), \ldots, \mc L(\tau_{1,q_1})]\big\}, \ldots, \mc R_{\sigma_\ell}\circ \mc T\big\{[\mc L(\tau_{\ell,1}), \ldots, \mc L(\tau_{\ell,q_\ell})]\big\}, \\
        \mc R_{\sigma_{\ell+1}}\circ \mc T\big\{[\mc L(\tau), \mc L(\tau_{\ell+1,1}), \ldots, \mc L(\tau_{\ell+1,q_{\ell+1}-1})]\big\} \Big] \bigg\|_{\diamond}\nonumber
    \end{multlined}\\
    & \le \begin{multlined}[t]\sum_{\substack{(\sigma_1,\ldots,\sigma_{\ell+1})\in\\ S_{q_1}\times\cdots\times S_{q_{\ell+1}}}} \bigg(\prod_{\ell'=1}^{\ell+1} \frac{1}{q_{\ell'}^2} \frac{1}{q_{\ell'}!}\bigg)q_{\ell+1}
        \sum_{\substack{
             v_{1,1}, \ldots, v_{\ell, q_\ell} = 1 \\
             v_{\ell+1, 1}, \ldots, v_{\ell+1, q_{\ell+1}-1} = 1
            }}^{\lceil\tau\rceil} \bigg\| \Big[ \mc R_{\sigma_1}\circ \mc T\big\{[\mc L(v_{1,1}), \ldots, \mc L(v_{1,q_1})]\big\}, \ldots, \\
        \mc R_{\sigma_\ell}\circ \mc T\big\{[\mc L(v_{\ell,1}), \ldots, \mc L(v_{\ell,q_\ell})]\big\}, 
        \mc R_{\sigma_{\ell+1}}\circ \mc T\big\{[\mc L(\tau), \mc L(v_{\ell+1,1}), \ldots, \mc L(v_{\ell+1,q_{\ell+1}-1})]\big\} \Big] \bigg\|_{\diamond},\label{eq:bound-Xk-general}
    \end{multlined} 
\end{align}
\end{widetext}
where the third line follows from the same integral decomposition argument in the proof of \lem{magnus-bch-general}. For each $i\in[\ell]$, after applying time-ordering operator $\mc T$, each unordered commutator $[\mc L(v_{i,1}), \ldots, \mc L(v_{i,q_i})]$ is mapped to an ordered commutator 
\begin{align}
\label{eq:ordered-comm-1-general}
    &[\underbrace{\mc L(\lceil\tau\rceil), \ldots,
    \mc L(\lceil\tau\rceil)}_{q_{i,\lceil\tau\rceil}},
    \ldots, \underbrace{\mc L(1), \ldots, \mc L(1)}_{q_{i,1}}]\nonumber\\
    ={}&
    [\underbrace{\mc L_{\lceil\tau\rceil}, \ldots,
    \mc L_{\lceil\tau\rceil}}_{q_{i,\lceil\tau\rceil}},
    \ldots, \underbrace{\mc L_{1}, \ldots, \mc L_{1}}_{q_{i,1}}],
\end{align}
where $q_{i,j}$ is the number of $j$ in $v_{i,1}, \ldots, v_{i,q_i}$ and there are
\begin{align*}
	    \binom{q_i}{q_{i, \lceil\tau\rceil}, \ldots, q_{i,1}}
        = \frac{q_i!}{\prod_{j=1}^{\lceil\tau\rceil}q_{i,j}!}
\end{align*}
unordered commutators mapped to the same  ordered commutator in Eq.~\eq{ordered-comm-1-general}. Then, after applying $\mc R_{\sigma_i}$ to each ordered commutator, we regroup the sum according to the index sequence $(u_{i,1}, \ldots, u_{i,q_i})$ of the resulting unordered commutator $[\mc L_{u_{i,1}},\ldots, \mc L_{u_{i,q_i}}]$. Each unordered commutator $[\mc L_{u_{i,1}},\ldots, \mc L_{u_{i,q_i}}]$ is mapped from one ordered commutator $[\underbrace{\mc L_{\lceil\tau\rceil} \ldots,\mc L_{\lceil\tau\rceil}}_{q_{i,\lceil\tau\rceil}},  \ldots, \underbrace{\mc L_{1},  \ldots, \mc L_{1}}_{q_{i,1}}]$ and the number of $\sigma_i\in S_{q_i}$ mapping the ordered commutator to the same unordered commutator is
\begin{align*}
    \prod_{j=1}^{\lceil\tau\rceil}q_{i,j}!.
\end{align*}
Therefore, the combinatorial coefficient of $[\mc L_{u_{i,1}},\ldots, \mc L_{u_{i,q_i}}]$ in the final sum is 
\begin{align*}
    \frac{q_i!}{\prod_{j=1}^{\lceil\tau\rceil}q_{i,j}!}
    \prod_{j=1}^{\lceil\tau\rceil}q_{i,j}! = q_i!.
\end{align*}
For $i= \ell+1$, after applying $\mc T$, each $[\mc L(\tau), \mc L(v_{\ell+1,1}), \ldots, \mc L(v_{\ell+1,q_{\ell+1}-1})]$ is mapped to
\begin{align}
\label{eq:ordered-comm-general}
    [\mc L_{\lceil \tau \rceil},
    \underbrace{\mc L_{\lceil\tau\rceil}, \ldots,
    \mc L_{\lceil\tau\rceil}}_{q_{\ell+1,\lceil\tau\rceil}},
    \ldots,
    \underbrace{\mc L_{1}, \ldots, \mc L_{1}}_{q_{\ell+1,1}}],
\end{align}
where $q_{\ell+1,j}$ is the number of $j$ in $v_{\ell+1, 1},\ldots, v_{\ell+1, q_{\ell+1}-1}$, and there are
\begin{align*}
    \binom{q_{\ell+1}-1}{q_{\ell+1,\lceil \tau \rceil}, \ldots,
    q_{\ell+1,1}}
    = \frac{(q_{\ell+1}-1)!}
    {\prod_{j=1}^{\lceil\tau\rceil}q_{\ell+1,j}!}
\end{align*} unordered commutators mapped to the same ordered commutator in Eq.~\eq{ordered-comm-general}. After applying $\mc R_{\sigma_{\ell+1}}$, we regrouping the sum of the commutators according to the position of the fixed $\mc L_{\lceil \tau \rceil}$ after permutation, $j'$, and the indices of other $\mc L$ operators in a commutator
\begin{align*}
    [\mc L_{v_{\ell+1,1}}, \ldots,\mc L_{\lceil\tau\rceil}, \ldots, \mc L_{v_{\ell+1,q_{\ell+1}-1}}].
\end{align*}
Each such unordered commutator is mapped from one ordered commutator in Eq.~\eq{ordered-comm-general} and the number of $\sigma_{\ell+1}\in S_{q_{\ell+1}}$ satisfying $\sigma_{\ell+1}(j')=1$ and $\mc R_{\sigma_{\ell+1}}$ maps the ordered commutator to the same unordered commutator are
\begin{align*}
    \prod_{j=1}^{\lceil\tau\rceil}q_{\ell+1,j}!.
\end{align*}
Therefore, the combinatorial coefficient of $[\mc L_{v_{\ell+1,1}}, \ldots, \mc L_{v_{\ell+1,j'-1}},\mc L_{\lceil\tau\rceil},\mc L_{v_{\ell+1,j'}}, \ldots, \mc L_{v_{\ell+1,q_{\ell+1}-1}}]$ in the final sum is
\begin{align*}
    &\frac{(q_{\ell+1}-1)!}
    {\prod_{j=1}^{\lceil\tau\rceil}q_{\ell+1,j}!}
    \prod_{j=1}^{\lceil\tau\rceil}q_{\ell+1,j}!
    = (q_{\ell+1}-1)!.
\end{align*}
In conclusion, we have
\begin{widetext}
\begin{align*}
    \|\mc X_\ell(\tau)\|_{\diamond}&\le \begin{multlined}[t]
        \bigg(\prod_{\ell'=1}^{\ell+1}
        \frac{1}{q_{\ell'}^2}\frac{1}{q_{\ell'}!}q_{\ell'}!\bigg)
        \frac{q_{\ell+1}}{q_{\ell+1}}
        \sum_{j'=1}^{q_{\ell+1}} \sum_{\substack{
                 v_{1,1}, \ldots, v_{\ell, q_\ell} = 1 \\
                 v_{\ell+1, 1}, \ldots, v_{\ell+1, q_{\ell+1}-1} = 1
            }}^{\lceil\tau\rceil} \big\| \big[ 
                [\mc L_{v_{1,1}}, \ldots, \mc L_{v_{1,q_1}}], \ldots, \\ 
                    [\mc L_{v_{\ell,1}}, \ldots, \mc L_{v_{\ell,q_\ell}}], 
                    [\mc L_{v_{\ell+1,1}}, \ldots, \mc L_{v_{\ell+1,j'-1}},
                    \mc L_{\lceil\tau\rceil},\mc L_{v_{\ell+1,j'}}, \ldots,
                    \mc L_{v_{\ell+1,q_{\ell+1}-1}}]
                \big] \big\|_{\diamond}.
    \end{multlined}\\
    &=\begin{multlined}[t]
        \bigg(\prod_{\ell'=1}^{\ell+1} \frac{1}{q_{\ell'}^2}\bigg)
        \sum_{j'=1}^{q_{\ell+1}} \sum_{\substack{
                 v_{1,1}, \ldots, v_{\ell, q_\ell} = 1 \\
                 v_{\ell+1, 1}, \ldots, v_{\ell+1, q_{\ell+1}-1} = 1
            }}^{\lceil\tau\rceil} \big\| \big[ 
                [\mc L_{v_{1,1}}, \ldots, \mc L_{v_{1,q_1}}], \ldots, \\ 
                    [\mc L_{v_{\ell,1}}, \ldots, \mc L_{v_{\ell,q_\ell}}], 
                    [\mc L_{v_{\ell+1,1}}, \ldots, \mc L_{v_{\ell+1,j'-1}},
                    \mc L_{\lceil\tau\rceil},\mc L_{v_{\ell+1,j'}}, \ldots,
                    \mc L_{v_{\ell+1,q_{\ell+1}-1}}]
                \big] \big\|_{\diamond}.
    \end{multlined}
\end{align*}
Finally, we integrate both sides from $\tau=0$ to $M$.  Since the right-hand side is piecewise constant in $\tau$, the integral is discretized to sum over $[M]$, where $\lceil\tau\rceil$ corresponds to the index $v_{\ell+1, j'}$. This yields 
\begin{align*}
    \int_{0}^{M} \|\mc{X}_{\ell}(\tau)\|_{\diamond} \,\d\tau 
     \le {}& \bigg(\prod_{\ell'=1}^{\ell+1} \frac{1}{q_{\ell'}^2}\bigg)
    \sum_{j'=1}^{q_{\ell+1}}\sum_{v_{\ell+1,j'}=1}^{M} 
    \sum_{\substack{
                 v_{1,1}, \ldots, v_{\ell, q_\ell} = 1 \\
                 v_{\ell+1, k} = 1 ~(\forall k \neq j')
            }}^{v_{\ell+1,j'}}
     \big\| \big[ 
         [\mc L_{v_{1,1}}, \ldots, \mc L_{v_{1,q_1}}], \ldots, 
        [\mc L_{v_{\ell+1,1}}, \ldots, \mc L_{v_{\ell+1,q_{\ell+1}}}]
     \big] \big\|_{\diamond}, \\
    \le {}& \bigg(\prod_{\ell'=1}^{\ell+1} \frac{1}{q_{\ell'}^2}\bigg) q_{\ell+1}
    \begin{multlined}[t]
 \sum_{v_{1,1}, \ldots, v_{\ell+1, q_{\ell+1}} = 1}^M
 \big\| \big[ 
         [\mc L_{v_{1,1}}, \ldots, \mc L_{v_{1,q_1}}], \ldots, 
        [\mc L_{v_{\ell+1,1}}, \ldots, \mc L_{v_{\ell+1,q_{\ell+1}}}]
     \big] \big\|_{\diamond}.
    \end{multlined}
\end{align*}
\end{widetext}
The commutator sum on the right-hand side is $\alpha_{\comm}^{(q_1, \ldots, q_{\ell+1})}$ from Eq.~\eq{def-beta-bar-general}. Moreover,
\begin{align*}
q_{\ell+1}\prod_{\ell'=1}^{\ell+1}\frac{1}{q_{\ell'}^2}
\le \frac{1}{q_{\ell+1}}\le1,
\end{align*}
which proves both inequalities in the lemma.
\end{proof}

We now state the main result concerning the BCH truncation error bound, which serves to bypass the convergence requirements of the standard BCH series.
\begin{theorem}[BCH truncation error bound]
 \label{thm:general-bch-truncation}
 For any positive integer $q_0$, define 
\begin{align}
\label{eq:def-beta-comm-sum}
\alpha_{\comm,q_0}
&:= \sum_{d=1}^{\infty}\frac{1}{d!}
\sum_{\substack{1\le q_1, \ldots, q_{d}\le q_0\\
q_1+\cdots+q_{d} \ge q_0+1}}
\alpha_{\comm}^{(q_1, \ldots, q_{d})}.
\end{align}
 If $\alpha_{\comm,q_0} \le 1$, the truncation error of the $q_0$-th order BCH expansion is bounded by
 \begin{align*}
 \left\|\exp\bigg(\sum_{q=1}^{q_0} \Phi_q\bigg)
 - \prod_{v = 1}^Me^{\mc L_v} \right\|_{\diamond}
 \le e\alpha_{\comm,q_0}.
 \end{align*}
\end{theorem}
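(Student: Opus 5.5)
We follow the reduction already set up in this appendix. By \lem{magnus-bch-general}, $\Phi_q=\Omega_q(M)$ for every $q$ for the piecewise-constant evolution \eq{df-eq-general}. Hence $\exp(\sum_{q=1}^{q_0}\Phi_q)=\mc Y_{(q_0)}(M)$, while $\prod_{v}e^{\mc L_v}=\mc Y(M)$. Writing $\Delta(\tau)=\mc L_{(q_0)}(\tau)-\mc L(\tau)$ and $E:=\int_0^M\|\Delta(\tau)\|_\diamond\,\d\tau$, \lem{error-reduction-general} gives a truncation error of at most $Ee^{E}$. It therefore suffices to prove $E\le\alpha_{\comm,q_0}$. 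Indeed, under the hypothesis $\alpha_{\comm,q_0}\le 1$ we then have $e^{E}\le e$, and the bound $e\,\alpha_{\comm,q_0}$ follows.

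To bound $E$, expand $\mc L_{(q_0)}$ using \eq{expansion-L-q}. Since $\Omega_{(q_0)}=\sum_{q\le q_0}\Omega_q$, multilinearity gives
\begin{align*}
\mc L_{(q_0)}(\tau)=\sum_{\ell=0}^{\infty}\frac{1}{(\ell+1)!}\sum_{1\le q_1,\ldots,q_{\ell+1}\le q_0}\big[\Omega_{q_1}(\tau),\ldots,\Omega_{q_\ell}(\tau),\dot\Omega_{q_{\ell+1}}(\tau)\big].
\end{align*}
Each summand is a homogeneous expression of grade $q_1+\cdots+q_{\ell+1}$ in $\mc L(\tau)$, because each $\Omega_q$ and $\dot\Omega_q$ is a sum of grade-$q$ nested commutators. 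The exact generator $\mc L(\tau)$ has grade $1$. By \lem{grade-cond}, the grade-$k$ parts of $\mc L_{(q_0)}-\mc L$ cancel for every $k\le q_0$. So $\Delta(\tau)$ equals exactly the sum of the summands above with $q_1+\cdots+q_{\ell+1}\ge q_0+1$. This is the step to check carefully: the cancellation in \lem{grade-cond} must hold at the level of these explicit grouped terms, not only asymptotically. Because each summand is homogeneous in grade, the grade-$k$ part of the sum is literally the sum of those index tuples with $\sum_i q_i=k$. Hence discarding the tuples with total grade at most $q_0$ leaves an exact identity, and no reordering issue arises, provided the series in $\ell$ converges absolutely. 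Absolute convergence holds in finite dimension, since $\Omega_{(q_0)}(\tau)$ is bounded.

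Apply the triangle inequality and integrate over $[0,M]$. Setting $d=\ell+1$, \lem{doubly-nested-bound-general} bounds each term's integrated norm by $\alpha_{\comm}^{(q_1,\ldots,q_d)}$. This gives
\begin{align*}
E\le\sum_{d=1}^{\infty}\frac{1}{d!}\sum_{\substack{1\le q_1,\ldots,q_d\le q_0\\ q_1+\cdots+q_d\ge q_0+1}}\alpha_{\comm}^{(q_1,\ldots,q_d)}=\alpha_{\comm,q_0}.
\end{align*}
Interchanging the integral with the infinite sum is justified by Tonelli's theorem, since all summands are nonnegative. If the right-hand side is infinite, the statement is vacuous.

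The main obstacle is the exact grade-by-grade cancellation in the second paragraph, i.e.\ ensuring that \lem{grade-cond} applies to $\Delta$ as a whole with no residual grade-$\le q_0$ remainder. The remaining steps are bookkeeping built on the lemmas already proved.
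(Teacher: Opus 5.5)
Your proposal is correct and follows the paper's proof step for step. You identify $\exp(\sum_{q\le q_0}\Phi_q)$ with $\mc Y_{(q_0)}(M)$ via \lem{magnus-bch-general}, reduce to the integrated generator error via \lem{error-reduction-general}, use \lem{grade-cond} to keep only the multilinear terms of \eq{expansion-L-q} with total grade at least $q_0+1$, and bound each by \lem{doubly-nested-bound-general} with $d=\ell+1$. Your remarks on absolute convergence of the expanded series and on the Tonelli interchange are justifications the paper leaves implicit, not a different route.
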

\begin{proof}
    By \lem{magnus-bch-general}, the problem is equivalent to bounding the Magnus expansion truncation error $\|\exp(\Omega_{(q_0)}(M)) - \mc Y(M)\|_{\diamond}$ for the continuous system in Eq.~\eq{df-eq-general}, where $\Omega_{(q_0)}(M)$ is the $q_0$-th order Magnus generator. 

    By \lem{error-reduction-general}, it suffices to bound $\int_{0}^{M}\| \mc L_{(q_0)}(\tau)-\mc L(\tau)\|_{\diamond}\,\d \tau$. By \lem{grade-cond}, we only need to consider terms with a grade greater than $q_0$. 
    For any $q\ge q_0+1$, as $\mc L(\tau)$ is grade-one, the grade-$q$ terms of $\mc L_{(q_0)}(\tau)-\mc L(\tau)$ match the grade-$q$ component of $\mc L_{(q_0)}(\tau)$, namely
    \begin{align*}
        &\sum_{\ell=0}^{\infty}\frac{1}{(\ell+1)!}
        \sum_{\substack{1\le q_1, \ldots, q_{\ell+1}\le q_0\\
        q_1+\cdots+q_{\ell+1}=q}}
        \bigg(\prod_{\ell'=\ell}^1\ad_{\Omega_{q_{\ell'}}(\tau)}\bigg)
        (\dot{\Omega}_{q_{\ell+1}}(\tau)),
    \end{align*} 
    by expanding each $\Omega_{(q_0)}(\tau) = \sum_{q = 1}^{q_0}\Omega_q(\tau)$ in Eq.~\eq{expansion-L-q}. Applying \lem{doubly-nested-bound-general} to the grade-$q$ terms of $\int_{0}^M \|\mc L_{(q_0)}(\tau)-\mc L(\tau)\|_{\diamond}$ and substituting $d := \ell+1$ yields 
    \begin{align*}
     \sum_{d=1}^{\infty}\frac{1}{d!}
     \sum_{\substack{1\le q_1, \ldots, q_{d}\le q_0\\
     q_1+\cdots+q_{d}=q}}
     \alpha_{\comm}^{(q_1, \ldots, q_{d})}.
    \end{align*}
    Therefore, the difference integral can be bounded by
    \begin{align*}
     &\int_{0}^{M}\|\mc L_{(q_0)}(\tau)-\mc L(\tau)\|_{\diamond}\,\d \tau\\
     \le{}& \sum_{q = q_0+1}^{\infty} \sum_{d=1}^{\infty}\frac{1}{d!}
     \sum_{\substack{1\le q_1, \ldots, q_{d}\le q_0\\q_1+\cdots+q_{d}=q}}
     \alpha_{\comm}^{(q_1, \ldots, q_{d})}
     = \alpha_{\mathrm{comm}, q_0}.
    \end{align*}
    By \lem{error-reduction-general}, the error between the two exponentials can be bounded by
    \begin{align*}
    &\|\exp(\Omega_{(q_0)}(M))-\mc Y(M)\|_{\diamond}\\
    \le {} & \bigg( \int_{0}^{M}\| \mc L_{(q_0)}(\tau)
    -\mc L(\tau)\|_{\diamond}\,\d \tau\bigg)e^{\int_{0}^{M}\| \mc L_{(q_0)}(\tau)
    -\mc L(\tau)\|_{\diamond}\,\d \tau}\\
    \le {} & e^{\alpha_{\mathrm{comm}, q_0}}\alpha_{\mathrm{comm}, q_0} \le e\alpha_{\mathrm{comm}, q_0},
    \end{align*}
    which concludes the proof.
\end{proof}
As a direct application of \thm{general-bch-truncation}, we now derive the truncation error bound specifically for the second-order product formula $\mc S(t)$.
\begin{theorem}
\label{thm:bch-pf}
    Let $\mc L=\sum_{j=1}^m \mc L_j$ be a decomposition into Lindbladian summands and let $t>0$. Consider the second-order product formula $\mc S(t) = \prod_{j=m}^1 e^{t \mc L_j/2} \prod_{j=1}^m e^{t \mc L_j /2}$, and let $\{\Phi_q\}_{q\ge1}$ be the coefficients of its formal BCH series. Let
    \begin{align*}
    \alpha_{\comm}^{(q_1, \ldots, q_{d})}(t)
    ={}& \sum_{j_{1,1}, \ldots, j_{d, q_{d}} = 1}^m
    \big\|\big[ [\mc L_{j_{1,1}}, \ldots, \mc L_{j_{1,q_1}}], \ldots,\\
    &\hspace{5.1em} [\mc L_{j_{d,1}}, \ldots, \mc L_{j_{d,q_{d}}}] \big]
    \big\|_{\diamond}\,t^{q_1+\cdots + q_{d}}
    \end{align*} and \begin{align}
    \label{eq:beta-pf}
        \alpha_{\comm,q_0}(t)
        ={}& \sum_{d=1}^{\infty}\frac{1}{d!}
        \sum_{\substack{1\le q_1, \ldots, q_{d}\le q_0\\
        q_1+\cdots+q_{d} \ge q_0+1}}
        \alpha_{\comm}^{(q_1, \ldots, q_{d})}(t).
    \end{align}
    If $\alpha_{\comm,q_0}(t)\le 1$, then
    \begin{align*}
          \left\|\exp\bigg(\sum_{q=1}^{q_0} \Phi_q t^q\bigg)
          - \mc{S}(t)\right\|_{\diamond}
          \le e\alpha_{\comm,q_0}(t).
    \end{align*}
\end{theorem}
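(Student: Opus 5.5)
The plan is to apply \thm{general-bch-truncation} directly to the $2m$-factor product that defines $\mc S(t)$, with the generators rescaled. Set $M=2m$ and define generators $\mc L'_v$ for $v\in[2m]$ in the ordering of the continuous evolution used for \thm{bch-pf-main}: $\mc L'_v = \tfrac{t}{2}\mc L_v$ for $1\le v\le m$, and $\mc L'_v=\tfrac{t}{2}\mc L_{2m+1-v}$ for $m+1\le v\le 2m$. Each $\mc L'_v$ is a nonnegative multiple of a Lindbladian, hence a Lindbladian. The product $\prod_{v=1}^{2m} e^{\mc L'_v}$ is then exactly $\mc S(t)$ in the operator ordering of the paper's product notation. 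The formal BCH series of this product has coefficients $\Phi'_q$, and by homogeneity of Eq.~\eq{bch} in the generators, $\Phi'_q=\Phi_q t^q$. Here the $\Phi_q$ are the BCH coefficients of $\mc S$, viewed as a polynomial in the generators scaled by $t$. So the truncated exponential in \thm{general-bch-truncation} is $\exp(\sum_{q\le q_0}\Phi_q t^q)$.

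Next, I would compare the quantity $\alpha'_{\comm,q_0}$ built from $\{\mc L'_v\}$ via Eq.~\eq{def-beta-comm-sum} with $\alpha_{\comm,q_0}(t)$ in Eq.~\eq{beta-pf}. For fixed $(q_1,\ldots,q_d)$ with $q=\sum_i q_i$:
\begin{itemize}
\item Each index $v\in[2m]$ maps to some $j\in[m]$, and each $j$ has exactly two preimages.
\item Multilinearity pulls out a factor $(t/2)^q$.
\item Summing over all $(2m)^q$ index tuples gives exactly $2^q$ copies of each $m$-index tuple.
\end{itemize}
Hence $\alpha'^{(q_1,\ldots,q_d)}_{\comm} = 2^q (t/2)^q\,\alpha^{(q_1,\ldots,q_d)}_{\comm}(1) = \alpha^{(q_1,\ldots,q_d)}_{\comm}(t)$, an exact identity with the factors of two cancelling. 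Summing with the same weights $1/d!$ over the same index ranges gives $\alpha'_{\comm,q_0}=\alpha_{\comm,q_0}(t)$. The hypothesis $\alpha_{\comm,q_0}(t)\le1$ is therefore exactly the hypothesis of \thm{general-bch-truncation}, which yields the bound $e\,\alpha_{\comm,q_0}(t)$.

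The main thing to verify carefully is the identification of the BCH coefficients: $\Phi_q$ should be the coefficients of the product formula as used elsewhere, namely the expansion of $\mc S(st)$ in powers of $st$. The point is that Eq.~\eq{bch} is homogeneous of degree $q$. Substituting $\mc L'_v=(t/2)\mc L_{\pi(v)}$ makes $\Phi'_q$ equal to $t^q$ times a $t$-independent combination, which is the definition of $\Phi_q$. This uses no convergence, only the finite-order formal identity of \lem{magnus-bch-general}. The only other subtlety is the ordering convention $\prod_{v=1}^M$, which I would check matches $\mc Y(2m)=\mc S(t)$ as stated earlier in the paper.
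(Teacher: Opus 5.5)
Your proposal is correct and is essentially the paper's proof. You rewrite $\mc S(t)$ as the $2m$-factor product with generators $\tfrac{t}{2}\mc L_j$, each appearing twice, and note that the $2^q$ multiplicity of index tuples cancels the factor $(1/2)^q$. This makes the $\alpha_{\comm,q_0}$ of the rescaled family equal to $\alpha_{\comm,q_0}(t)$, and you then invoke \thm{general-bch-truncation}.

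You also make two checks that the paper leaves implicit, and both are correct. First, the rescaled generators are still Lindbladians, which is needed for the contractivity step in \lem{error-reduction-general}. Second, homogeneity of the BCH coefficients gives $\Phi'_q=\Phi_q t^q$.
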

\begin{proof}
    We rewrite the product formula as \begin{align*}
        \prod_{j=m}^1 e^{t \mc L_j/2}
        \prod_{j=1}^m e^{t \mc L_j /2}
        =: \prod_{v=1}^{2m} e^{\tilde{\mc L}_v},
    \end{align*}
    where \begin{align*}
        \tilde{\mc L}_v = \begin{cases}
            t\mc L_v/2 & \text{if } v \le m, \\
            t\mc L_{2m+1-v}/2 &\text{if } v \ge m+1.
        \end{cases}
    \end{align*}
    The nested commutator bound $\alpha_{\comm}^{(q_1, \ldots, q_{d})}$ defined in Eq.~\eq{def-beta-bar-general} evaluated for $\{\tilde{\mc L}_v\}_{v=1}^{2m}$ is \begin{align}
    &\sum_{v_{1,1}, \ldots, v_{d, q_{d}} = 1}^{2m}
    \begin{aligned}[t]
    \big\|&\big[ [\tilde{\mc L}_{v_{1,1}}, \ldots, \tilde{\mc L}_{v_{1,q_1}}], \ldots,\\
    &\quad [\tilde{\mc L}_{v_{d,1}}, \ldots, \tilde{\mc L}_{v_{d,q_{d}}}] \big]
    \big\|_{\diamond}
    \end{aligned} \\
     ={}& \sum_{j_{1,1}, \ldots, j_{d, q_{d}} = 1}^m
     \begin{aligned}[t]
     \big\|&\big[ [\mc L_{j_{1,1}}, \ldots, \mc L_{j_{1,q_1}}], \ldots,\\
     &\quad [\mc L_{j_{d,1}}, \ldots, \mc L_{j_{d,q_{d}}}] \big]
     \big\|_{\diamond} t^{q_1+\cdots + q_{d}}
     \end{aligned}\\
     ={}&
     \alpha_{\comm}^{(q_1, \ldots, q_{d})}(t).
    \end{align}
    The first equality holds because the sequence $\{\tilde{\mc L}_v\}_{v=1}^{2m}$ contains exactly two copies of each operator $\mc L_j$. Consequently, the sum over $v$ indices includes every tuple of $j$ indices with multiplicity $2^{q_1+\cdots+q_{d}}$, which cancels the factor $(1/2)^{q_1+\cdots+q_{d}}$ arising from the step size $t/2$. Then, the bound $\alpha_{\comm,q_0}$ in Eq.~\eq{def-beta-comm-sum} defined for $\{\tilde{\mc L}_v\}_{v=1}^{2m}$ is reduced to $\alpha_{\comm,q_0}(t)$. The result then follows from \thm{general-bch-truncation}.
\end{proof}


\section{Lindbladian simulation with extrapolation}
\label{append:extrapolation}
Let \begin{align*}
    \alpha_{\rm comm}^{(q)}=\sum_{j_1, \ldots, j_q=1}^m \|[\mc L_{j_1}, \mc L_{j_2}, \ldots, \mc L_{j_q}]\|_{\diamond}.
\end{align*} 
By \cite[Proposition~5]{aftab2024multi}, the formal BCH expansion of the symmetric product formula $\mc S(1) = \prod_{j=m}^1 e^{ \mc L_j/2} \prod_{j=1}^m e^{\mc L_j /2}$ consists exclusively of odd-order terms: \begin{align*}
    \prod_{j=m}^1 e^{ \mc L_j/2} \prod_{j=1}^m e^{\mc L_j /2}=\exp\bigg(\mc L+\sum_{q=3, 5,\ldots}^{\infty} \Phi_q\bigg),
\end{align*} where \begin{align*}
    \|\Phi_q\|_{\diamond}\le \frac{1}{q^2}\alpha_{\rm comm}^{(q)}\le\alpha_{\rm comm}^{(q)} .
\end{align*}
Since each $\Phi_q$ in the BCH series is a sum of $q$-fold nested commutators of $\mc L_j$, the formal BCH formula of $\mc S(t)$ is \begin{align*}
    \mc S(t) = \prod_{j=m}^1 e^{t \mc L_j/2} \prod_{j=1}^m e^{t\mc L_j /2}=\exp\bigg(t\mc L + \sum_{q=3, 5,\ldots}^{\infty}\Phi_q t^q\bigg).
\end{align*} 
For any $s\in (0,1)$, we define \begin{align}
\label{eq:trunc-eff-gen}
    \mc G_{(q_0)}(s) = \mc L+\frac{1}{st}\sum_{q=3, 5,\ldots}^{q_0} \Phi_q(st)^q = \mc L+ \sum_{q=2, 4,\ldots}^{q_0-1} \Phi_{q+1}(st)^{q}
\end{align}
to be the truncated effective generator of $\mc S(st)$.
\begin{lemma}
\label{lem:bound-E}
    Let \begin{align}
    \label{eq:def-mu-comm}
    \mu_{\mathrm{comm},q_0}:=\max_{q=3, 5,\ldots, q_0} (\alpha_{\rm comm}^{(q)})^{1/q}.
    \end{align}
    The evolution $e^{t\mc G_{(q_0)}(s)}$ admits a series expansion in $s$ given by \begin{align*}
        e^{t\G_{(q_0)}(s) }=e^{t\mc L }+ \sum_{q=2,4,\ldots}^{\infty}{\E}_{(q_0),q} s^q,
    \end{align*}
    where $\mc E_{(q_0),q}$ are superoperators bounded by
    \begin{align*}
       \|\mc E_{(q_0),q}\|_{\diamond}
       \le (1+2\mu_{\mathrm{comm},q_0}t)^{3q/2}.
    \end{align*}
\end{lemma}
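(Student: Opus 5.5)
The plan is to treat $\mc G_{(q_0)}(s)=\mc L+\mc V(s)$ as a perturbation of the exact Lindbladian $\mc L$. Here
\[
\mc V(s):=\sum_{q=2,4,\ldots}^{q_0-1}\Phi_{q+1}t^q s^q .
\]
We then expand $e^{t\mc G_{(q_0)}(s)}$ in a Dyson series around $e^{t\mc L}$. Iterating \lem{variation-constants} with $\mc A=\mc L$ and $\mc B=\mc V(s)$, both constant in time, gives
\begin{align*}
e^{t\mc G_{(q_0)}(s)}=e^{t\mc L}+\sum_{n=1}^{\infty}\int_{0\le\tau_n\le\cdots\le\tau_1\le t}
e^{(t-\tau_1)\mc L}\mc V(s)e^{(\tau_1-\tau_2)\mc L}\mc V(s)\cdots\mc V(s)e^{\tau_n\mc L}\,\d\tau_1\cdots\d\tau_n .
\end{align*}
The key point is that only forward evolutions $e^{\tau\mc L}$ with $\tau\ge0$ appear. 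These are quantum channels, so each has diamond norm one and all growth is carried by the factors $\mc V(s)$. The series converges absolutely for every $s$, since the $n$-th term has norm at most $(t\|\mc V(s)\|_\diamond)^n/n!$.

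Next we substitute the finite polynomial $\mc V(s)$ into each term and collect powers of $s$. The $n$-th term becomes a sum over tuples $(q_1,\ldots,q_n)$ of even integers in $[2,q_0-1]$, each tuple contributing $s^{q_1+\cdots+q_n}$. Absolute convergence of the double series justifies regrouping by total degree. This yields an expansion in even powers of $s$, and we define $\mc E_{(q_0),q}$ as the coefficient of $s^q$. Only $n\le q/2$ contributes, since each $q_i\ge2$.

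For the bound we use $\|\Phi_{q_i+1}\|_\diamond\le\alpha_{\rm comm}^{(q_i+1)}\le\mu_{\mathrm{comm},q_0}^{q_i+1}$. This holds because $q_i+1$ is odd and lies in $[3,q_0]$, and it is exactly why the maximum in Eq.~\eq{def-mu-comm} only needs to run over odd $q\le q_0$. Writing $\mu:=\mu_{\mathrm{comm},q_0}$, the simplex integral gives $t^n/n!\le t^n$, so
\begin{align*}
\|\mc E_{(q_0),q}\|_\diamond\le\sum_{n=1}^{q/2}\frac{t^n}{n!}\sum_{\substack{q_1+\cdots+q_n=q\\ q_i\ \mathrm{even}\ \ge2}}\prod_{i=1}^n\mu^{q_i+1}t^{q_i}
\le(\mu t)^q\sum_{n=1}^{q/2}\binom{q/2-1}{n-1}(\mu t)^n .
\end{align*}
Here $\binom{q/2-1}{n-1}$ counts the compositions of $q/2$ into $n$ positive parts. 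The binomial theorem gives
\[
\|\mc E_{(q_0),q}\|_\diamond\le(\mu t)^{q+1}(1+\mu t)^{q/2-1}\le(1+\mu t)^{3q/2}\le(1+2\mu t)^{3q/2}.
\]

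The step needing the most care is the combinatorial bookkeeping. It must be checked that every coefficient $\mc E_{(q_0),q}$ receives contributions only from finitely many $n$, and that the regrouping is legitimate. Both follow from absolute convergence together with the fact that the propagators in the Dyson terms are contractive channels. The estimate itself is elementary once the perturbative representation with forward-only evolutions is in place.
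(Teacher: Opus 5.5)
Your proposal is correct and follows essentially the paper's argument: a Dyson (interaction-picture) expansion of $e^{t\mathcal G_{(q_0)}(s)}$ around $e^{t\mathcal L}$ in which every propagator is a forward-time channel, combined with $\|\Phi_{q_i+1}\|_\diamond\le\mu_{\mathrm{comm},q_0}^{q_i+1}$ and a count of compositions. The only difference is in the final bookkeeping. You count compositions exactly via $\binom{q/2-1}{n-1}$ and sum with the binomial theorem, which gives the slightly sharper bound $(\mu t)^{q+1}(1+\mu t)^{q/2-1}$ with $\mu=\mu_{\mathrm{comm},q_0}$. The paper instead uses the cruder count $2^q$ and splits into the cases $2\mu_{\mathrm{comm},q_0}t\ge1$ and $2\mu_{\mathrm{comm},q_0}t<1$.
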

\begin{proof}
    Applying \lem{interaction-pic} with $\mc A(t) = \mc L$ and $\mc B(t) = \sum_{q=3, 5,\ldots}^{q_0} \Phi_q (st)^{q-1}$ yields
    \begin{widetext}
    \begin{align*}
    e^{t\G_{(q_0)}(s) }-e^{t\mc L}
    &= e^{t\mc L}\exp_{\mathcal T}\bigg(\int_{0}^t \d \tau_1 e^{-\tau_1 \mc L} \sum_{q=2, 4,\ldots}^{q_0-1} \Phi_{q+1} (st)^{q} e^{\tau_1 \mc L}\bigg)-e^{t\mc L} \\
    &= e^{t\mc L}  \sum_{j=1}^{\infty} \int_{0}^t \d \tau_1 \int_{0}^{\tau_1}\d \tau_2 \cdots \int_{0}^{\tau_{j-1}}\d \tau_j \prod_{j'=j}^{1}\bigg(e^{- \tau_{j'}\mc L}\sum_{q=2, 4,\ldots}^{q_0-1}\Phi_{q+1}s^qt^qe^{\tau_{j'}\mc L}\bigg) \\
    &= \sum_{q=2,4,\ldots}^{\infty}s^{q} \underbrace{\sum_{j=1}^{\infty} \sum_{\substack{q_1+\dots+q_j = q \\ q_{j'} \in \{2, 4, \dots, q_0-1\}}} \int_{0}^t \d \tau_1 \int_{0}^{\tau_1}\d \tau_2 \cdots \int_{0}^{\tau_{j-1}}\d \tau_j \prod_{j'=j}^{1}\big(e^{(\tau_{j'-1}- \tau_{j'})\mc L}\Phi_{q_{j'}+1}t^{q_{j'}}\big)e^{ \tau_{j}\mc L}}_{=:{\mc E}_{(q_0),q}}\\
    \end{align*}
where the third line follows from the Dyson expansion, and we define $\tau_0=t$ in the fourth line. The coefficient $\mc E_{(q_0),q}$ can be bounded by
    \begin{align*}
    \|\mc E_{(q_0), q}\|_{\diamond} &\le \sum_{j=1}^{\infty} \sum_{\substack{q_1+\dots+q_j = q \\ q_{j'} \in \{2, 4, \dots, q_0-1\}}} \int_{0}^t \d \tau_1 \int_{0}^{\tau_1}\d \tau_2 \cdots \int_{0}^{\tau_{j-1}}\d \tau_j \prod_{j'=j}^{1}\big(\big\|e^{(\tau_{j'-1}- \tau_{j'})\mc L}\big\|_{\diamond}\big\|\Phi_{q_{j'}+1}\big\|_{\diamond}t^{q_{j'}}\big)\big\|e^{ \tau_{j}\mc L}\big\|_{\diamond} \\
    &\le \sum_{j=1}^{\infty} \sum_{\substack{q_1+\dots+q_j = q \\ q_{j'} \in \{2, 4, \dots, q_0-1\}}} \int_{0}^t \d \tau_1 \int_{0}^{\tau_1}\d \tau_2 \cdots \int_{0}^{\tau_{j-1}}\d \tau_j \prod_{j'=j}^{1}\big(\big\|\Phi_{q_{j'}+1}\big\|_{\diamond}t^{q_{j'}}\big) \\
    &\le \sum_{j=1}^{\infty} \sum_{\substack{q_1+\dots+q_j = q \\ q_{j'} \in \{2, 4, \dots, q_0-1\}}}  \frac{t^j}{j!} \mu_{\mathrm{comm}, q_0}^{q+j}t^{q}\le \sum_{j=1}^{\lfloor q/2\rfloor} \frac{1}{2^j j!}(2\mu_{\mathrm{comm}, q_0}t)^{q+j}.
    \end{align*} 
    \end{widetext}
where the second line follows from $\tau_{j'-1} \ge \tau_{j'}$ and hence $\|e^{(\tau_{j'-1} -\tau_{j'})\mc L} \|_{\diamond}=1$, the fourth line follows from \begin{align*}
    \sum_{\substack{q_1+\dots+q_j = q \\ q_{j'} \in \{2, 4, \dots, q_0-1\}}} 1 \le \sum_{\substack{q_1+\dots+q_j = q \\ q_{j'} \ge 1}} 1 =\binom{q-1}{j-1}\le 2^{q},
\end{align*}
and $q=q_1+\cdots +q_j \ge 2j$.
If $2\mu_{\mathrm{comm}, q_0} t\ge 1$, we have \begin{align*}
    \|\mc E_{(q_0), q}\|_{\diamond}
    \le{}& \sum_{j=1}^{\lfloor q/2\rfloor}
    \frac{1}{2^j j!}(2\mu_{\mathrm{comm}, q_0}t)^{q+q/2}\\
    \le{}& (2\mu_{\mathrm{comm}, q_0}t)^{3q/2}
    \sum_{j=1}^\infty\frac{1}{2^j j!}
    \le (2\mu_{\mathrm{comm}, q_0}t)^{3q/2}.
\end{align*}
If $2\mu_{\mathrm{comm}, q_0} t< 1$, we have \begin{align*}
    \|\mc E_{(q_0), q}\|_{\diamond}
    \le{}& (2\mu_{\mathrm{comm}, q_0}t)^{q}
    \sum_{j=1}^{\infty} \frac{1}{2^j j!}
    \le (2\mu_{\mathrm{comm}, q_0}t)^{q}.
\end{align*}
Both cases imply the bound stated in \lem{bound-E}.
\end{proof}
The same proof applies without truncating the BCH series. If $\sup_{q=3,5,\ldots}(\alpha_{\rm comm}^{(q)})^{1/q}<\infty$, the full BCH series converges absolutely for sufficiently small $s$, and $\mc S(st)^{1/s}$ has an even expansion in $s$. At each fixed total grade, only finitely many compositions occur, so the bounds in \lem{bound-E} hold with $\mu_{\mathrm{comm},q_0}$ replaced by this supremum.

We define the function
\begin{align*}
f(s) := \tr[O\mc S(st)^{1/s}\rho_0] ,\quad s\in(0,1),
\end{align*}
which approximates the exact evolution in the limit $s\to 0$. Our goal is to estimate the value
\begin{align*}
f(0):=\lim_{s\to 0} f(s) = \tr[Oe^{t\mc L}\rho_0],
\end{align*}
via extrapolation.

\begin{algorithm}[t]
    \caption{Lindbladian simulation with step-size extrapolation}
    \label{algo:simu-extrapolation}
    \KwInput{Lindbladian $\mc L = \sum_{j=1}^m \mc L_j$, simulation time $t$, observable $O$, initial state $\rho_0$, accuracy $\eps$, BCH truncation order $q_0$.}
    \KwOutput{$\tr[Oe^{t\mc L} \rho_0]\pm \eps \|O\|$.}
    Set nested-commutator parameters $\alpha_{\mathrm{comm}, q_0}(t)$ and  $\mu_{\mathrm{comm}, q_0}$ as per Eq.~\eq{beta-pf} and Eq.~\eq{def-mu-comm}, respectively \;
    Find the smallest integer $p\ge2$ satisfying $e^{-2p} \le \eps/(8C \log p)$, where $C\ge1$ is chosen as in \lem{richardson-even}\;
    Compute $r_1, \ldots, r_p$ according to \lem{richardson-even}\;
    Find the largest $s_0=1/n$, $n\in\mathbb Z_+$, such that, with $s_p=s_0/r_p$,
    $s_p \le e^{-1}(1+2\mu_{\mathrm{comm}, q_0}t)^{-3/2}$ and
    $\alpha_{\mathrm{comm}, q_0}(s_p t)\le s_p\eps/(4e^2C \log p)$\label{lin:s-cond}\;
    Set $s_1\gets s_0/r_1, \ldots, s_p\gets s_0/r_p$, and compute $b_1, \ldots, b_p$ according to \lem{richardson-even}\;
    $S\gets \Big\lceil8\frac{\|\bm{b}\|_1^2}{\eps^2}\log(6p)\Big\rceil$\;
    \For{$i=1, \ldots, p$}{
    \For{$j=1, \ldots, S$}{
    Apply the product formula $\mc S(s_i t)$ sequentially $1/s_i$ times to $\rho_0$ and measure the observable $O$, denoting the outcome by $\mu_{i,j}$;}
    $\mu_i\gets\frac{\sum_{j=1}^S \mu_{i,j}}{S}$\;}
    \Return{$\sum_{i=1}^p b_i \mu_i$}\;
\end{algorithm}

\begin{theorem}
\label{thm:general-extra-algo}
    There exists an algorithm (\algo{simu-extrapolation}) that, given any simulation time $t>0$, precision $\eps\in(0,1)$, observable $O$, and initial state $\rho_0$, estimates $\tr[O e^{t\mc L}\rho_0]$ to within additive error $\eps\|O\|$ with probability at least $2/3$. Let $s_p$ be the step size satisfying the nested commutator condition in \lin{s-cond}. The algorithm uses a total of \begin{align*}
        \mathcal{O}\bigg( \frac{\log(1/\eps)(\log\log (1/\eps))^4}{\eps^2s_p}\bigg)  = \widetilde{\mathcal{O}}\Big(\frac{1}{\eps^2s_p}\Big)
    \end{align*} Trotter steps. This involves $\widetilde{\mathcal{O}}(1/\eps^2)$ independent circuit runs, each comprising at most 
    \begin{align*}
        \mathcal{O}\bigg(\frac{\log(1/\eps)}{s_p}\bigg) = \widetilde{\mathcal{O}}\Big( \frac{1}{s_p} \Big)
    \end{align*} Trotter steps.
\end{theorem}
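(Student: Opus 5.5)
The proof splits the total error of \algo{simu-extrapolation} into a deterministic bias $|F^{(p)}-f(0)|$ and a statistical error from finite sampling. We bound each by $\eps\|O\|/2$ and then count resources. For the bias we invoke \lem{richardson-even}. This needs $f$ written as an even expansion plus a remainder, and here we must avoid assuming convergence of the full BCH series. So we introduce the surrogate $\tilde f(s):=\tr[O e^{t\mc G_{(q_0)}(s)}\rho_0]$ with $\mc G_{(q_0)}$ from Eq.~\eq{trunc-eff-gen}. Because $\mc S(st)^{1/s}$ is only ever evaluated at the nodes $s_j$, which satisfy $1/s_j=n r_j\in\mathbb Z_+$, we have $e^{t\mc G_{(q_0)}(s_j)}=\exp(\sum_{q\le q_0}\Phi_q(s_jt)^q)^{1/s_j}$. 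We then write
\begin{align*}
|F^{(p)}-f(0)|\le \Big|\sum_j b_j\tilde f(s_j)-f(0)\Big|+\|\bm b\|_1\max_j|f(s_j)-\tilde f(s_j)|.
\end{align*}

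\textbf{First term.} By \lem{bound-E}, $\tilde f(s)=f(0)+\sum_{q\,\mathrm{even}}\tr[O\mc E_{(q_0),q}\rho_0]s^q$ is even in $s$, and it is analytic for $|s|<(1+2\mu_{\mathrm{comm},q_0}t)^{-3/2}$. The Richardson weights of \lem{richardson-even} annihilate $s^2,\dots,s^{2p-2}$ and reproduce the constant term. So the first term is at most $\|\bm b\|_1\|O\|\sum_{q\ge 2p}\big((1+2\mu_{\mathrm{comm},q_0} t)^{3/2}s_p\big)^q$. Here we use the tail of the even series directly, rather than a Taylor remainder on $[-1,1]$; this avoids having to rescale $s$. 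The first condition in \lin{s-cond} makes the ratio at most $e^{-1}$, so this term is $\le 2C\log p\,\|O\|e^{-2p}$. The choice of $p$ in the algorithm then makes it at most $\eps\|O\|/4$, with $p=\Theta(\log(1/\eps))$.

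\textbf{Second term.} Set $\mc N_1=\mc S(s_jt)$, which is a channel, and $\mc N_2=\exp(\sum_{q\le q_0}\Phi_q(s_jt)^q)$. \thm{bch-pf} applies because $\alpha_{\comm,q_0}(s_pt)\le 1$, and it gives $\|\mc N_1-\mc N_2\|_\diamond\le e\,\alpha_{\comm,q_0}(s_jt)$. Consequently $\|\mc N_2\|_\diamond\le 1+e\alpha_{\comm,q_0}(s_jt)$. \lem{power-bound} with $k=1/s_j$ then bounds $|f(s_j)-\tilde f(s_j)|$ by
\[
\|O\|\,\frac{e\,\alpha_{\comm,q_0}(s_jt)}{s_j}\,\big(1+e\alpha_{\comm,q_0}(s_jt)\big)^{1/s_j}.
\]
Two points remain. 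First, $\alpha_{\comm,q_0}(xt)/x$ should be nondecreasing in $x$, which holds since every term has degree at least $q_0+1\ge 2$ in $x$; this lets the second condition of \lin{s-cond} at $s_p$ control every $s_j\le s_p$. Second, that condition gives $e\alpha_{\comm,q_0}(s_jt)/s_j\le \eps/(4eC\log p)\le 1$, so the power factor is at most $e$. The second term is therefore $\le C\log p\cdot e\cdot\eps\|O\|/(4eC\log p)=\eps\|O\|/4$.

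\textbf{Sampling and resources.} Each $\mu_{i,j}$ lies in $[-\|O\|,\|O\|]$. By Hoeffding's inequality with the stated $S$ and a union bound over the $p$ step sizes, each $|\mu_i-f(s_i)|\le \eps\|O\|/(2\|\bm b\|_1)$ simultaneously with probability $\ge 2/3$. This gives a statistical error $\le\eps\|O\|/2$. The number of Trotter steps per run is $1/s_i\le 1/s_p$; since $r_p/r_1=\mathcal O(p)$, this is $\mathcal O(\log(1/\eps)/s_p)$ up to the relative size of $s_p$ among the nodes. There are $pS=\mathcal O(p\log^2p\log p/\eps^2)$ runs in total, and the total step count is $S\sum_i r_i/s_0$ with $\sum_i r_i=\mathcal O(p^2)$ and $s_p=s_0/r_p$. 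Together these give the stated polylogarithmic factors.

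We expect the main obstacle to be the second term: keeping the non-contractive $\mc N_2^{1/s}$ under control. This is exactly why \lem{power-bound} and the scaling of the \lin{s-cond} condition with $s_p\eps$ are needed, together with the monotonicity that transfers the condition from $s_p$ to all $s_j$. The existence of an admissible $s_0=1/n$ holds because both conditions hold for all sufficiently small $s$, since $\alpha_{\comm,q_0}(st)=\mathcal O(s^{q_0+1})$.
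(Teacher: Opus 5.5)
Your bias analysis follows the paper's proof step for step, and it is correct, including the constants. You use the same split of the remainder into two parts: the even tail of $e^{t\mc G_{(q_0)}(s)}$, bounded via \lem{bound-E} and the first condition of \lin{s-cond}, and the accumulated BCH truncation error, bounded via \thm{bch-pf} and \lem{power-bound}. You also use the same monotonicity of $\alpha_{\comm,q_0}(st)/s$ to pass from $s_p$ to every $s_j$, the same cancellation $\sum_j b_j s_j^{2\ell}=\delta_{\ell,0}$, and the same Hoeffding and union bound.

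The resource count, however, is wrong as written in two places.

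\emph{Per-run count.} Since $\pi(2j-1)/(8p)<\pi/4$, the numbers $r_j$ are nonincreasing in $j$. Hence $s_1\le\cdots\le s_p$, so $s_p$ is the \emph{largest} step size, as you correctly use in the bias analysis. It follows that $1/s_i\ge 1/s_p$, not $\le$. The per-run bound must therefore come from $\max_i 1/s_i=1/s_1=(r_1/r_p)\,s_p^{-1}$, where $r_1=\Theta(p^2)$ and $r_p=\Theta(p)$. The ratio you need is $r_1/r_p=\Theta(p)$, not $r_p/r_1$, and this gives $\Theta(\log(1/\eps)/s_p)$.

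\emph{Total count.} $\sum_i r_i$ is not $\mathcal O(p^2)$. Because $\sin x=\Theta(x)$ on $(0,\pi/4)$, one has $r_i=\Theta(p^2/i)$. Summing gives $\sum_i r_i=\Theta(p^2\log p)$, and hence $\sum_i 1/s_i=\Theta(p\log p/s_p)$. Multiplying by $S=\mathcal O(\log^3 p/\eps^2)$ gives $\mathcal O(p\log^4 p/(\eps^2 s_p))$. This is exactly where the fourth power of $\log\log(1/\eps)$ in the statement comes from. Your estimate would have produced only a third power, so ``together these give the stated polylogarithmic factors'' rests on an incorrect intermediate bound.

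Both fixes are one line each. The paper makes them via $1/s_j=\Theta(p/(j s_p))$.
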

\begin{proof}
    By \lem{bound-E}, function $f(s)$ admits the expansion
    \begin{widetext}
    \begin{align}
    \label{eq:expansion-f}
        f(s) = \underbrace{\tr[Oe^{t\mc L}\rho_0]+ \sum_{q=2,4,\ldots}^{2p-2}\tr[O\mc E_{(q_0),q}\rho_0] s^q}_{Q_{2p}(s)}+ \underbrace{\sum_{q=2p, 2p+2, \ldots}^{\infty}\tr[O\mc E_{(q_0),q}\rho_0] s^q+\tr[O\big(\mc S(st)^{1/s}-e^{t\mc G_{(q_0)}(s)}\big)\rho_0]}_{R_{2p}(s)}.
    \end{align}
    Since $1/s_0\in \mb{Z}_{+}$ and $r_j\in \mb{Z}_{+}$ by \lem{richardson-even}, $1/s_j = r_j/s_0 \in \mb{Z}_{+}$ for all $j\in [p]$. Also, for any $j\in [p]$, the inverse step size  \begin{align*}
        \frac{1}{s_j}= \frac{r_j}{r_ps_p} = \frac{\Big\lceil\frac{\sqrt{8} p}{\pi \sin (\pi(2 j-1) / 8 p)}\Big\rceil}{\Big\lceil\frac{\sqrt{8} p}{\pi \sin (\pi(2 p-1) / 8 p)}\Big\rceil}\frac{1}{s_p}
    \end{align*} satisfying $s_j \le s_p$ and $1/s_j = \Theta(p/(js_p))$.
    The defining series for $\alpha_{\mathrm{comm}, q_0}(st)$ has non-negative coefficients and lowest degree $q_0+1$. Since it is finite at $s=s_p$ by \lin{s-cond}, $\alpha_{\mathrm{comm}, q_0}(st)/s$ is monotonically increasing on $[0,s_p]$. Since $s_j \le s_p$ for all $j\in [p]$, we have \begin{align}
        \label{eq:cond-s-j}
        s_j \le e^{-1}(1+2\mu_{\mathrm{comm}, q_0}t)^{-3/2}, \quad e\alpha_{\mathrm{comm}, q_0}(s_jt)\le s_j\frac{\eps}{4eC \log p},
    \end{align}
    for all $j\in[p]$.  
    For any $s_j$, the remainder $R_{2p}(s_j)$ satisfies
    \begin{equation}
    \label{eq:bound-R-2p}
        \begin{aligned}
        |R_{2p}(s_j)| & \le \sum_{q=2p, 2p+2, \ldots}^{\infty}\|O\|\|\mc E_{(q_0), q} \|_{\diamond} s_j^q + \|O\|\big\|\mc S(s_jt)^{1/s_j}-e^{t\mc G_{(q_0)}(s_j)}\big\|_{\diamond} \\
        &\le \sum_{q=2p, 2p+2, \ldots}^{\infty}\|O\|((1+2\mu_{\mathrm{comm}, q_0}t)^{3/2} s_j)^q + \frac{1}{s_j}\|O\|\max\big\{1, \big\|e^{s_jt \mc G_{(q_0)}(s_j)}\big\|_{\diamond}^{1/s_j}\big\}\big\|\mc S(s_jt)-e^{s_jt \mc G_{(q_0)}(s_j)}\big\|_{\diamond} \\
        &\le \|O\|e^{-2p}\sum_{q = 0, 2, \ldots}^{\infty} e^{-q} + \frac{1}{s_j}(1+e\alpha_{\mathrm{comm}, q_0}(s_jt))^{1/s_j}e\alpha_{\mathrm{comm}, q_0}(s_jt)\|O\|\\
        &\le 2\|O\|e^{-2p}+(1+s_j)^{1/s_j}\frac{\eps}{4eC\log p}\|O\| \\
        &\le \frac{\eps}{2C\log p} \|O\|,
    \end{aligned}
    \end{equation}
    \end{widetext}
	where the first line follows from $|\tr[O\mc N(\rho_0)]|\le \|O\|\|\mc N(\rho_0)\|_1\le \|O\|\|\mc N\|_{\diamond}$ for any superoperator $\mc N$, the second line follows from \lem{bound-E} and \lem{power-bound}, the third line follows from Eq.~\eq{cond-s-j} and \thm{bch-pf}, and the fourth line follows from Eq.~\eq{cond-s-j}. The Richardson coefficients in \lem{richardson-even} satisfy $\sum_{j=1}^p b_js_j^{2\ell}=\delta_{\ell,0}$ for $\ell=0,\ldots,p-1$. Since $Q_{2p}$ is an even polynomial of degree at most $2p-2$, it follows that
    \begin{align*}
        \sum_{j=1}^p b_jQ_{2p}(s_j)=Q_{2p}(0)=f(0).
    \end{align*}
    Therefore,
    \begin{align*}
        \bigg|\sum_{j=1}^p b_j f(s_j)-f(0)\bigg|
        &=\bigg|\sum_{j=1}^p b_jR_{2p}(s_j)\bigg|\\
        &\le \|\mathbf{b}\|_1 \max_{j\in[p]}|R_{2p}(s_j)| \\
        &\le C\log p\frac{\eps}{2C\log p}\|O\|\\
        &\le \frac{\eps }{2}\|O\|.
    \end{align*}
    Since the measurement outcome $\mu_{i,j}$ is bounded by $\|O\|$ and $\mathbb{E}[\mu_{i,j}] = \tr[O \mc S(s_it)^{1/s_i} \rho_0] = f(s_i)$, Hoeffding's inequality gives \begin{align*}
        &\Pr\bigg[\bigg|\frac{1}{S}\sum_{j=1}^S \mu_{i,j}-f(s_i)\Bigg|
        \ge \frac{\eps\|O\|}{2\|\bm{b}\|_1}\bigg]\\
        &\qquad \le 2\exp\bigg(-\frac{S\eps^2}{8\|\bm{b}\|_1^2}\bigg)
        \le \frac{1}{3p}.
    \end{align*}
    By the union bound, with probability at least $2/3$, we have $|\mu_i-f(s_i)|\le \eps \|O\|/(2\|\bm{b}\|_1)$ for all $i\in [p]$ and hence \begin{align*}
        &\bigg|\sum_{j=1}^p b_j \mu_j-f(0)\bigg|\\
        \le{}& \bigg|\sum_{j=1}^p b_j (\mu_j-f(s_j))\bigg|
        + \bigg|\sum_{j=1}^p b_jf(s_j) -f(0)\bigg|\\
        \le{}& \frac{1}{2}\|O\|\eps +  \frac{1}{2}\|O\|\eps
        = \|O\|\eps.
    \end{align*}
    The algorithm uses a total of \begin{align*}
        S \sum_{j=1}^p \frac{1}{s_j}
        ={}& \mathcal O\bigg(\frac{\|\bm{b}\|_1^2}{\eps^2}\log p
        \sum_{j=1}^p \frac{p}{j s_p}\bigg)\\
        ={}& \mathcal O\bigg(\frac{p\log^4 p}{\eps^2 s_p}\bigg)
        = \mathcal O\bigg(\frac{\log(1/\eps)(\log\log(1/\eps))^4}
        {\eps^2 s_p}\bigg)
    \end{align*}
    Trotter steps. The first equality follows from $\|\bm{b}\|_1\le C\log p$ by \lem{richardson-even} and $\sum_{j=1}^p 1/j = \Theta(\log p)$. The maximum number of Trotter steps per coherent run is \begin{align*}
       \max_{j\in [p]} \frac{1}{s_j} = \frac{1}{s_1} = \Theta\Big(\frac{p}{s_p}\Big) = \Theta\bigg(\frac{\log(1/\eps)}{s_p}\bigg).
    \end{align*}
\end{proof}
\subsection{Application to \texorpdfstring{$(\Gamma,k)$}{(Gamma,k)}-local Lindbladians}
\label{append:k-local-example}

Consider the $(\Gamma,k)$-local Lindbladian on a lattice $\Lambda = [N]$ as
\begin{align}
    \mc L(\rho) = -\mathrm{i} \sum_{\mu=1}^{m_C} [ H_\mu, \rho] + \sum_{\nu=1}^{m_D} \Big(L_\nu \rho L_\nu ^{\dagger}-\frac{1}{2}\{ L_\nu^{\dagger} L_\nu , \rho\}\Big),
\end{align}
where each $H_\mu$ is supported on at most $k$ sites and each $L_\nu$ is a sum of at most $\Gamma$ operators supported on at most $k$ sites,
\begin{align}
    L_{\nu} &= \sum_{\gamma=1}^{\Gamma } d_{\nu,\gamma}, \quad |\supp(d_{\nu,\gamma})| \le k.
\end{align}
Based on this structure, we identify two levels of decomposition relevant to our analysis.

\myparagraph{Trotter decomposition.}
For the construction of product formulas, we decompose $\mc L$ into Lindbladians generated by a single Hamiltonian or jump operator. We write
\begin{align}
\label{eq:trotter-decompose}
    \mc L = \sum_{j=1}^{m} \mc L_j,
\end{align}
where $m = m_C + m_D$. Each $\mathcal{L}_j$ represents the Lindbladian corresponding to either a Hamiltonian term $-\mathrm{i}[H_\mu, \cdot]$ or a dissipator generated by $L_\nu$.

\myparagraph{Local superoperator decomposition.}
For the commutator analysis, we use the elementary decomposition in Eq.~\eq{local-decompose}. Each $\mc K_v$ is supported on at most $2k$ sites, and the family $\{\mc K_v\}_{v=1}^M$ satisfies the $g$-extensiveness condition in Eq.~\eq{def-extensive-main}. The resulting total-norm estimate in Eq.~\eq{1-norm-bound} can be overly pessimistic for physically relevant systems because it does not exploit locality. We instead use commutator bounds that incorporate the local structure.

In the context of Hamiltonian simulation, \citet{mizuta2026commutator} provided an upper bound on the sum of nested commutators of local Hermitian operators. Their proof uses only the commutativity structure determined by the supports and the submultiplicative commutator bound. The same argument therefore applies to local superoperators. We state the resulting form below.

\begin{lemma}[Superoperator form of {\cite[Lemma~7]{mizuta2026commutator}}]
\label{lem:commu-bound}
     Let $\mc K_1, \ldots, \mc K_{M}$ and $\mc A$ be superoperators on an $N$-site lattice, each supported on at most $2k$ sites. Suppose that $\{\mc K_v\}_{v=1}^M$ satisfy the $g$-extensiveness condition in Eq.~\eq{def-extensive-main}. For any non-negative integer $q$ and any $q'\in\{0,\ldots,q\}$, we have
    \begin{align*}
        \sum_{v_1, \ldots, v_q=1}^{M}
        &\big\|[\mc K_{v_1}, \ldots, \mc K_{v_{q'}}, \mc A,\\
        &\quad \mc K_{v_{q'+1}}, \ldots, \mc K_{v_q}]
        \big\|_{\diamond}\le q!(4kg)^q\|\mc A\|_{\diamond},
    \end{align*}
    where empty index sequences are omitted; in particular, the case $q=0$ is the identity bound $\|\mc A\|_{\diamond}\le\|\mc A\|_{\diamond}$.
\end{lemma}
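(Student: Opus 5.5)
We prove the superoperator form of \cite[Lemma~7]{mizuta2026commutator} by induction on $q$, peeling off one local superoperator at a time. We always fix the one term that is adjacent to the current "core" and whose support must overlap the core's support. The case $q=0$ is trivial. For $q\ge1$, the commutator $[\mc K_{v_1},\ldots,\mc K_{v_{q'}},\mc A,\mc K_{v_{q'+1}},\ldots,\mc K_{v_q}]$ is a right-nested expression: the innermost part $\mc B:=[\mc A,\mc K_{v_{q'+1}},\ldots,\mc K_{v_q}]$ is built first, and then $\mc K_{v_{q'}},\ldots,\mc K_{v_1}$ act on it from the left. We would therefore handle the two kinds of positions slightly differently.

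\emph{Step 1 (left positions).} Suppose $q'\ge1$. The outermost operation is $[\mc K_{v_1},\mc Z]$, where $\mc Z$ is the remaining nested commutator. Its support satisfies $\supp(\mc Z)\subseteq \supp(\mc A)\cup\bigcup_{i\ge2}\supp(\mc K_{v_i})$, so $|\supp(\mc Z)|\le 2k q$. Superoperators with disjoint supports commute, so the term vanishes unless $\supp(\mc K_{v_1})\cap\supp(\mc Z)\ne\emptyset$. Using $\|[\mc K,\mc Z]\|_\diamond\le 2\|\mc K\|_\diamond\|\mc Z\|_\diamond$, we sum over $v_1$ first. By $g$-extensiveness,
\begin{align*}
\sum_{v_1:\,\supp(\mc K_{v_1})\cap \supp(\mc Z)\neq\emptyset}\|\mc K_{v_1}\|_\diamond\le |\supp(\mc Z)|\,g\le 2kqg .
\end{align*}
This gives a factor $4kgq$ times the sum for $q-1$ with $q'-1$ left terms. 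Induction then yields $q\cdot(q-1)!(4kg)^{q-1}\cdot 4kg=q!(4kg)^q$.

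\emph{Step 2 (right positions).} Suppose $q'=0$. The expression is $[\mc A,\mc K_{v_1},\ldots,\mc K_{v_q}]$. We cannot strip $\mc A$ off the outside, because the sum is over the $\mc K$'s. Instead, we rewrite so that the induction still removes one summed term. The innermost pair $[\mc K_{v_{q-1}},\mc K_{v_q}]$ is nonzero only if the supports overlap, but the first chosen index $v_q$ is free. A cleaner route is to reverse the roles: order the sum so that the last factor $\mc K_{v_q}$ is treated via the Jacobi-free observation that every $\mc K_{v_i}$ in a nonzero right-nested commutator must overlap the union of supports of the terms to its right, with $\mc A$'s support included at its position.

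The counting then goes as follows. Choose indices from right to left. The rightmost free term $\mc K_{v_q}$ must overlap $\supp(\mc A)$ together with everything to its right. When $q'<q$, the condition for each $\mc K$ to the right of $\mc A$ is also nonvanishing overlap with a growing set. I would phrase the whole proof as an overlap-chain count, in the style of Childs et al.: order the $q$ free terms by when their support is "attached", and bound the $i$-th attached term's admissible weight by $g$ times the size of the current cluster, which is at most $2k(i)$ once $\mc A$ is counted as a seed of size $\le 2k$. The product $\prod_{i=1}^q 2\cdot 2k i g=q!(4kg)^q$ then appears, together with the norm $\|\mc A\|_\diamond$ from submultiplicativity.

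\emph{Main obstacle.} The hardest part is justifying, for terms to the right of $\mc A$, that the nonvanishing condition still reads "overlap the cluster already built". For $[\mc K_{v_{i}},[\cdots,\mc K_{v_q}]]$ the overlap is with terms to the \emph{right} only, so $\mc A$ is not yet in the cluster. My first attempt is to grow the cluster from the rightmost element $\mc K_{v_q}$ with seed $\mc A$ replaced by $\mc K_{v_q}$ itself. The danger is that the sum over $v_q$ is then unconstrained, which would produce a factor $N$. To avoid this, I would use the fact that the whole right block $\mc B'=[\mc K_{v_{q'+1}},\ldots,\mc K_{v_q}]$ must overlap $\supp(\mc A)$, since $[\mc A,\mc B']\ne0$. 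I would then count by choosing a member $\mc K_{v_i}$ of $\mc B'$ that intersects $\supp(\mc A)$ (at most $q-q'$ choices, weight $\le 2kg$). From that member I grow the connected overlap structure, which necessarily connects all of $\mc B'$ since a nonzero nested commutator has a connected support graph. Each newly attached term costs at most $g$ times the current cluster size. If this tree count works, it gives the same $q!(4kg)^q$ bound, and Step 1 then handles the left terms.
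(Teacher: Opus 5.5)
Your Step~1 is sound. For $q'\ge1$, summing the outermost index $v_1$ against $|\supp(\mc Z)|\le 2kq$ gives the factor $4kgq$ and reduces $(q,q')$ to $(q-1,q'-1)$, so the case $q'=q$ is fully proved. But this reduction bottoms out at $(q-q',0)$, and the bound $\sum_{v}\|[\mc A,\mc K_{v_1},\ldots,\mc K_{v_n}]\|_\diamond\le n!(4kg)^n\|\mc A\|_\diamond$ is never established.

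The overlap-chain paragraph of Step~2 is wrong for $q'<q$. Nothing lies to the right of $\mc K_{v_q}$, so it need not meet $\supp(\mc A)$, and the product $\prod_i 4kgi$ is justified only when $\mc A$ is innermost. Your fallback is explicitly conditional, and as sketched it would not give the stated constant:
\begin{itemize}
\item Using only connectivity of the support graph means a union bound over all labeled spanning trees with a marked vertex. There are $n^{n-1}$ of these rather than $n!$, a loss of roughly $e^{n}$.
\item Charging each newly attached term $g$ times the current cluster size, along a tuple-dependent attachment order, forces a sum over attachment orders. That costs an extra factorial.
\end{itemize}

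The missing idea is that the arbitrary-$q'$ form of the statement is exactly what closes the $q'=0$ case, so you should induct on $q$ for all $q'$ simultaneously.
\begin{itemize}
\item If $[\mc A,[\mc K_{v_1},\ldots,\mc K_{v_q}]]\neq0$, some $\mc K_{v_i}$ meets $\supp(\mc A)$. Take a union bound over $i\in[q]$ and use $\|[\mc A,\mc B']\|_\diamond\le 2\|\mc A\|_\diamond\|\mc B'\|_\diamond$.
\item For fixed $v_i$, apply the order-$(q-1)$ hypothesis with $\mc K_{v_i}$ (support at most $2k$) as the distinguished operator in position $i$, i.e.\ with $q'\to i-1$. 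This bounds the sum over the remaining indices by $(q-1)!(4kg)^{q-1}\|\mc K_{v_i}\|_\diamond$.
\item Summing $v_i$ over terms meeting $\supp(\mc A)$ costs $2kg$ by $g$-extensiveness. The $q$ positions then give $q\cdot 2\cdot 2kg\cdot(q-1)!(4kg)^{q-1}=q!(4kg)^q$.
\end{itemize}
Combined with your Step~1, both cases at order $q$ use only the hypothesis at order $q-1$, so the induction closes.

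This is the same move the paper makes in the proof of Theorem~\ref{thm:doubly-nested-bound}: fix a term overlapping the right block, apply the lemma to the rest, and multiply by the number of positions. The paper itself does not reprove this lemma. It defers to Mizuta's Lemma~7 and notes that only support-commutativity and $\|[X,Y]\|_\diamond\le2\|X\|_\diamond\|Y\|_\diamond$ are used.

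If you prefer a direct tree count, the structure to use is not connectivity but the parent map $j\mapsto p(j)>j$, since each term must meet some term to its right. This gives $(n-1)!$ increasing trees. With $n$ choices of the node attached to $\mc A$, and each tree rerooted at that node so every term costs $2kg$, you get $n!(2kg)^n$ before the factor $2^n$ from the commutator norms.
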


A useful corollary can be obtained by summing the inequality \lem{commu-bound} for $\mc A=\mc K_v$ over all $v\in[M]$.

\begin{corollary}
\label{cor:commu-bound-sum}
    Under the assumptions of \lem{commu-bound}, for any positive integer $q$, the sum of the norms of the nested commutator involving $q$ local superoperators is bounded by
    \begin{align*}
        \sum_{v_1, \ldots, v_q=1}^{M} \big\|[\mc K_{v_1}, \ldots, \mc K_{v_q}]\big\|_{\diamond}  \le \frac{1}{4kq}q!(4kg)^{q}N,
    \end{align*}
    where $N$ is the number of sites in the lattice.
\end{corollary}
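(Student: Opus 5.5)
The plan is to obtain the corollary from \lem{commu-bound} by treating the innermost superoperator of the nested commutator as the distinguished operator $\mc A$, and then summing over it with the $g$-extensiveness condition.

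Fix $q\ge 1$ and an index $v_q\in[M]$. Apply \lem{commu-bound} with $q-1$ in place of $q$, with $q'=q-1$ (all free indices to the left of $\mc A$), and with $\mc A=\mc K_{v_q}$. This $\mc A$ is supported on at most $2k$ sites, as the lemma requires. The right-nested commutator $[\mc K_{v_1},\ldots,\mc K_{v_{q-1}},\mc K_{v_q}]$ is then exactly the one in the lemma, so
\begin{align*}
\sum_{v_1,\ldots,v_{q-1}=1}^{M}\big\|[\mc K_{v_1},\ldots,\mc K_{v_{q-1}},\mc K_{v_q}]\big\|_{\diamond}\le (q-1)!\,(4kg)^{q-1}\,\|\mc K_{v_q}\|_{\diamond}.
\end{align*}
For $q=1$ this is the trivial identity case of the lemma.

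Next, sum over $v_q\in[M]$ and use the total-norm estimate Eq.~\eq{1-norm-bound}, $\sum_v\|\mc K_v\|_\diamond\le Ng$. That estimate follows by summing the $g$-extensiveness condition Eq.~\eq{def-extensive-main} over all sites, since each $\mc K_v$ has nonempty support and is therefore counted at least once. This gives the bound $(q-1)!(4kg)^{q-1}gN$.

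It remains to rewrite the constant:
\begin{align*}
(q-1)!\,(4kg)^{q-1}g=\frac{q!}{q}\cdot\frac{(4kg)^{q}}{4k}=\frac{1}{4kq}\,q!\,(4kg)^q,
\end{align*}
which is exactly the claimed bound.

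The only point needing care is the nonempty-support assumption behind Eq.~\eq{1-norm-bound}. A term with empty support is a scalar multiple of the identity superoperator. Such a term drops out of every commutator with $q\ge 2$, and one can assume it is absent from the local decomposition (or absorb it) for $q=1$. Beyond this, the argument is a direct substitution into \lem{commu-bound}, so I do not expect a genuine obstacle.
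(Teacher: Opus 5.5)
Your proposal is correct and follows the paper's proof exactly. You apply \lem{commu-bound} with $q-1$ in place of $q$, $q'=q-1$, and $\mc A=\mc K_{v_q}$, sum over $v_q$ using Eq.~\eq{1-norm-bound}, and rewrite $(q-1)!(4kg)^{q-1}Ng=\frac{1}{4kq}q!(4kg)^qN$; your caveat about empty-support terms is extra care that the paper leaves implicit in Eq.~\eq{1-norm-bound}.
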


\begin{proof}
We apply \lem{commu-bound} with its $q$ replaced by $q-1$, $q'=q-1$, and $\mc A = \mc K_{v_q}$. This includes $q=1$ through the trivial $q=0$ case of the lemma. Summing over the index $v_q$, we obtain
\begin{align*}
    \sum_{v_1, \ldots, v_q=1}^{M} \big\|[\mc K_{v_1}, \ldots, \mc K_{v_q}]\big\|_{\diamond} 
    & \le \sum_{v_q=1}^{M} (q-1)! (4kg)^{q-1}  \|\mc K_{v_q}\|_{\diamond} \\
    &= (q-1)! (4kg)^{q-1}  \sum_{v_q=1}^{M}\|\mc K_{v_q}\|_{\diamond}\\
    &\le (q-1)!(4kg)^{q-1}  Ng\\
    &= \frac{1}{4kq}q!(4kg)^{q}N,
\end{align*}
where the second inequality follows from Eq.~\eq{1-norm-bound}.
\end{proof}

We can now bound the doubly right-nested commutators defined in Eq.~\eq{def-beta-bar-general} using \lem{commu-bound}. 
\begin{theorem}
\label{thm:doubly-nested-bound}
    Let $\mc K_1, \ldots, \mc K_{M}$ be superoperators on an $N$-site lattice, each supported on at most $2k$ sites. Suppose that $\{\mc K_v\}_{v=1}^M$ satisfy the $g$-extensiveness condition in Eq.~\eq{def-extensive-main}. For a sequence of positive integers $q_1, \ldots, q_{d}$, define the cumulative counts $P_{r} = \sum_{j=r}^{d} q_j$ for $r \in [d]$ and $P_{d+1}=1$. Then, the following bound holds:
    \begin{align}
    \label{eq:double-comm-k-local-N}
        &\sum_{v_{1,1}, \ldots, v_{d, q_{d}} = 1}^M
        \big\|\big[ [\mc K_{v_{1,1}}, \ldots, \mc K_{v_{1, q_1}}], \ldots,[\mc K_{v_{d,1}}, \ldots, \mc K_{v_{d, q_{d}}}] \big]
        \big\|_{\diamond}\nonumber\\
        &\le\frac{1}{4kq_d}
        \bigg(\prod_{r=1}^{d} P_{r+1} q_r! (4kg)^{q_r}\bigg) N.
    \end{align}
\end{theorem}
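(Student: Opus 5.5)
We argue by backward induction on the outer position $r$, following the sketch given after \thm{doubly-nested-bound-main}. Write $\mc C_{\vec v_r}=[\mc K_{v_{r,1}},\ldots,\mc K_{v_{r,q_r}}]$ for the $r$-th inner commutator and $\mc W_r=[\mc C_{\vec v_r},\ldots,\mc C_{\vec v_d}]$ for the tail. The left-hand side of Eq.~\eq{double-comm-k-local-N} is the sum of $\|\mc W_1\|_\diamond$ over all indices. The inductive claim is that, for fixed indices in blocks $r+1,\ldots,d$,
\begin{align*}
\sum_{\vec v_r}\|[\mc C_{\vec v_r},\mc W_{r+1}]\|_\diamond\le P_{r+1}\,q_r!\,(4kg)^{q_r}\,\|\mc W_{r+1}\|_\diamond .
\end{align*}
At the base we use \cor{commu-bound-sum} for the innermost block, which gives $\sum_{\vec v_d}\|\mc C_{\vec v_d}\|_\diamond\le \frac{1}{4kq_d}q_d!(4kg)^{q_d}N$. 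Since $P_{d+1}=1$, this matches the $r=d$ factor together with the prefactor $N/(4kq_d)$. Iterating the claim for $r=d-1,\ldots,1$ and summing over the remaining indices then multiplies in exactly the factors $P_{r+1}q_r!(4kg)^{q_r}$.

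To prove the inductive claim, fix the tail and let $X=\supp(\mc W_{r+1})$. This is contained in the union of the supports of its $P_{r+1}$ local factors, so $|X|\le 2kP_{r+1}$. If no $\mc K_{v_{r,i}}$ meets $X$, then $\mc C_{\vec v_r}$ acts on sites disjoint from $X$ and the commutator vanishes. Otherwise, choose the first (say leftmost) position $i$ at which $\mc K_{v_{r,i}}$ meets $X$; this gives a disjoint partition of the nonzero terms according to $i\in[q_r]$ and the index $v=v_{r,i}$. We then bound
\begin{align*}
\|[\mc C_{\vec v_r},\mc W_{r+1}]\|_\diamond\le 2\|\mc C_{\vec v_r}\|_\diamond\|\mc W_{r+1}\|_\diamond .
\end{align*}
Summing over the other $q_r-1$ indices with $\mc K_v$ held fixed at position $i$ is the situation of \lem{commu-bound} with $\mc A=\mc K_v$ and $q'=i-1$. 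That lemma gives at most $(q_r-1)!(4kg)^{q_r-1}\|\mc K_v\|_\diamond$. Summing over $i$ contributes a factor $q_r$. Summing over $v$ with $\supp(\mc K_v)\cap X\neq\emptyset$ is controlled by $g$-extensiveness: the total is at most $\sum_{j\in X}\sum_{v\ni j}\|\mc K_v\|_\diamond\le |X|g\le 2kP_{r+1}g$. Multiplying these out gives $2\cdot q_r!(4kg)^{q_r-1}\cdot 2kP_{r+1}g=P_{r+1}q_r!(4kg)^{q_r}$, which is the claimed bound.

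The main obstacle is the bookkeeping in the overlap step. Dropping the restriction that the other factors in the chosen block avoid $X$ only enlarges the sum, which is harmless. The real issue is that \lem{commu-bound} is applied with $\mc A$ in an arbitrary position, including inside the nest. We need to check that its statement, which allows any $q'$, indeed covers this case with the same constant. We must also confirm that $\supp(\mc W_{r+1})$ is bounded by the union of the supports of the constituents, since commutators never enlarge support.
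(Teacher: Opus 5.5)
Your proposal is correct and follows the paper's proof essentially step for step: backward induction seeded by \cor{commu-bound-sum}, a partition by the first factor of $\mc C_{\vec v_r}$ that meets $\supp(\mc W_{r+1})$, \lem{commu-bound} applied with $\mc A=\mc K_v$ at position $q'=i-1$, and $g$-extensiveness to bound the overlapping sum by $|X|g\le 2kP_{r+1}g$. Both points you flag are already covered: \lem{commu-bound} is stated for every $q'\in\{0,\ldots,q\}$, and the support of a commutator is contained in the union of the supports of its arguments.
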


\begin{proof}
    For any layer $r \in [d]$, let $\vec{v}_r = (v_{r,1}, \ldots, v_{r, q_r})$ denote the vector of indices ranging from $1$ to $M$. We define the nested commutator for layer $r$ as
    \begin{align*}
        {\mc C}_{\vec{v}_r} := [{\mc K}_{v_{r,1}}, \ldots, {\mc K}_{v_{r, q_r}}].
    \end{align*}
    The left hand side of Eq.~\eqref{eq:double-comm-k-local-N} can be rewritten as
    \begin{align*}
        \sum_{\vec{v}_{1}, \ldots, \vec{v}_{d}} \big\| [ {\mc C}_{\vec{v}_1},\ldots, {\mc C}_{\vec{v}_d} ] \big\|_{\diamond}.
    \end{align*}
    Define the partial nested commutator accumulated from layer $d$ up to $r$ as
    \begin{align*}
        \mathcal{W}_{\vec{v}_{r}, \ldots, \vec{v}_d} := [ {\mc C}_{\vec{v}_r}, \ldots, {\mc C}_{\vec{v}_{d-1}}, {\mc C}_{\vec{v}_d} ].
    \end{align*}
    We prove the following bound by backward induction on $r$ from $d$ to $1$:
    \begin{align} \label{eq:induct-W-N}
        \sum_{\vec{v}_{r}, \ldots, \vec{v}_{d}}
        \big\|\mathcal{W}_{\vec{v}_{r}, \ldots, \vec{v}_d}\big\|_{\diamond}
        \le \frac{1}{4kq_d}
        \bigg(\prod_{j=r}^{d} P_{j+1} q_j! (4kg)^{q_j}\bigg) N.
    \end{align}
    
    Consider the base case $r=d$. The term is $\sum_{\vec{v}_d} \|{\mc C}_{\vec{v}_d}\|_{\diamond} = \sum_{v_{d,1}, \ldots, v_{d, q_d}} \| [\mc K_{v_{d,1}}, \ldots, \mc K_{v_{d, q_d}}] \|_{\diamond}$. Applying \cor{commu-bound-sum} yields the bound \begin{align*}
        \sum_{\vec{v}_d} \|{\mc C}_{\vec{v}_d}\|_{\diamond}  \le \frac{1}{4kq_d} q_d!(4kg)^{q_d} N,
    \end{align*} which implies that the base case holds since $P_{d+1}=1$.

    Assume Eq.~\eqref{eq:induct-W-N} holds for $r$. We consider the sum for $r-1$:
    \begin{align*}
        \sum_{\vec{v}_{r-1}, \ldots, \vec{v}_d} \big\|\mathcal{W}_{\vec{v}_{r-1}, \ldots, \vec{v}_d}\big\|_{\diamond} 
        &= \sum_{\vec{v}_{r}, \ldots, \vec{v}_d} \sum_{\vec{v}_{r-1}} \big\| [ {\mc C}_{\vec{v}_{r-1}}, \mathcal{W}_{\vec{v}_{r}, \ldots, \vec{v}_d} ] \big\|_{\diamond}.
    \end{align*}
    Let $X_{\mathcal{W}} = \supp(\mathcal{W}_{\vec{v}_{r}, \ldots, \vec{v}_d})$. The superoperator $\mathcal{W}_{\vec{v}_{r}, \ldots, \vec{v}_d}$ is composed of nested commutators involving a total of $P_r = \sum_{j=r}^d q_j$ local terms. Since each local superoperator has a support size of at most $2k$, the total support size is bounded by $|X_{\mathcal{W}}| \le 2k P_r$. 
    
    The inner commutator $[ {\mc C}_{\vec{v}_{r-1}}, \mathcal{W}_{\vec{v}_{r}, \ldots, \vec{v}_d} ]$ is non-zero only if at least one operator in the sequence ${\mc C}_{\vec{v}_{r-1}}$ has a support that overlaps with $X_{\mathcal{W}}$. We bound the sum over $\vec{v}_{r-1}$ by iterating over the index $j \in [q_{r-1}]$ of the first operator in the sequence that overlaps with $X_{\mathcal{W}}$. Let this operator be $\mc K_{u_{r-1, j}}$. We have
    \begin{widetext}
    \begin{align*}
        \sum_{\vec{v}_{r-1}} \big\| [ {\mc C}_{\vec{v}_{r-1}}, \mathcal{W}_{\vec{v}_{r}, \ldots, \vec{v}_d} ] \big\| _{\diamond}
        \le~& \sum_{j=1}^{q_{r-1}} \sum_{\substack{u_{r-1, j} \in [M] \\ \supp({\mc K}_{u_{r-1, j}}) \cap X_{\mathcal{W}} \neq \emptyset}} \sum_{\substack{u_{r-1, \ell} \in [M] \\ (\forall \ell \neq j)}} \big\| [ [{\mc K}_{u_{r-1,1}}, \ldots, {\mc K}_{u_{r-1, q_{r-1}}}], \mathcal{W}_{\vec{v}_{r}, \ldots, \vec{v}_d} ] \big\|_{\diamond} \\
        \le~& \sum_{j=1}^{q_{r-1}} \sum_{\substack{u_{r-1, j} \in [M] \\ \supp({\mc K}_{u_{r-1, j}}) \cap X_{\mathcal{W}} \neq \emptyset}} (q_{r-1}-1)! (4kg)^{q_{r-1}-1} \|\mc K_{u_{r-1, j}}\|_{\diamond} \cdot 2 \big\|\mathcal{W}_{\vec{v}_{r}, \ldots, \vec{v}_d}\big\|_{\diamond} \\
        \le~& \sum_{j=1}^{q_{r-1}} (2k P_r g) \cdot (q_{r-1}-1)! (4kg)^{q_{r-1}-1} \cdot 2 \big\|\mathcal{W}_{\vec{v}_{r}, \ldots, \vec{v}_d}\big\|_{\diamond} \\
        =~& q_{r-1} \cdot (2k P_r g) \cdot (q_{r-1}-1)! (4kg)^{q_{r-1}-1} \cdot 2 \big\|\mathcal{W}_{\vec{v}_{r}, \ldots, \vec{v}_d}\big\|_{\diamond} \\
        =~& P_r \cdot q_{r-1}! \cdot (4kg)^{q_{r-1}} \big\|\mathcal{W}_{\vec{v}_{r}, \ldots, \vec{v}_d}\big\|_{\diamond}.
    \end{align*}
    \end{widetext}
    The second inequality follows from $\|[X,Y]\|_{\diamond}\le 2\|X\|_{\diamond}\|Y\|_{\diamond}$ and \lem{commu-bound} with its $q$ replaced by $q_{r-1}-1$, $q'=j-1$, and $\mc A=\mc K_{u_{r-1,j}}$, after summing over all indices $u_{r-1,\ell}$ with $\ell\neq j$. The third inequality uses the $g$-extensiveness condition in Eq.~\eq{def-extensive-main}, where the sum of norms over operators overlapping with $X_{\mathcal{W}}$ is bounded by $|X_{\mathcal{W}}|g \le 2k P_r g$.
    
    Substituting this result back into the sum over $\vec{v}_r, \ldots, \vec{v}_d$ and applying the inductive hypothesis Eq.~\eqref{eq:induct-W-N} yields
	    \begin{align*}
	        &\sum_{\vec{v}_{r-1}, \ldots, \vec{v}_d} \big\|\mathcal{W}_{\vec{v}_{r-1}, \ldots, \vec{v}_d}\big\|_{\diamond} 
	        \\
            \le{}& P_r q_{r-1}! (4kg)^{q_{r-1}} \sum_{\vec{v}_{r}, \ldots, \vec{v}_d} \big\|\mathcal{W}_{\vec{v}_{r}, \ldots, \vec{v}_d}\big\|_{\diamond} \\
	        \le{}&P_r q_{r-1}! (4kg)^{q_{r-1}} \cdot \frac{1}{4kq_d}\bigg(\prod_{j=r}^{d} P_{j+1} q_j! (4kg)^{q_j}\bigg) N \\
	        ={}&\frac{1}{4kq_d}
	        \bigg(\prod_{j=r-1}^{d} P_{j+1} q_j! (4kg)^{q_j}\bigg) N.
	    \end{align*}
    This completes the induction for $r=1$.
\end{proof}

Plugging this bound into $\alpha_{\comm,q_0}(t)$ defined in \thm{bch-pf}, we can bound the truncation error for any $(\Gamma,k)$-local Lindbladian.
\begin{lemma}
\label{lem:bch-approx} 
Let $\mc L=\sum_{j=1}^m \mc L_j$ be a $(\Gamma,k)$-local Lindbladian on an $N$-site lattice $\Lambda$ satisfying the $g$-extensiveness condition.
If $8e^2q_0kgt\le 1$, the commutator bound $\alpha_{\comm,q_0}(t)$ defined in \thm{bch-pf} for $\mc L$ is bounded by \begin{align*}
    \alpha_{\comm,q_0}(t)\le N e^{-q_0}.
\end{align*}
\end{lemma}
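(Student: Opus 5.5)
The plan is to reduce $\alpha_{\comm,q_0}(t)$ to commutator sums over the elementary superoperators $\mc K_v$, bound those with \thm{doubly-nested-bound}, and then carry out the summation over $d$, the compositions $(q_1,\ldots,q_d)$, and the total grade $q\ge q_0+1$.

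\textbf{Step 1: pass to the local pieces.} Each $\mc L_j$ is a sum of elementary superoperators $\mc K_v$ from Eq.~\eq{local-decompose}, and distinct $j$ use disjoint sets of $v$. By multilinearity and the triangle inequality,
\begin{align*}
\alpha_{\comm}^{(q_1,\ldots,q_d)}(t)\le t^{q}\sum_{v}\big\|\big[[\mc K_{v_{1,1}},\ldots],\ldots,[\ldots,\mc K_{v_{d,q_d}}]\big]\big\|_\diamond ,
\end{align*}
where $q=\sum_i q_i$.

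\textbf{Step 2: apply the doubly-nested bound.} \thm{doubly-nested-bound} bounds the right-hand side by
\[
\frac{1}{4kq_d}\prod_r P_{r+1}\,q_r!\,(4kg)^{q_r}\,N .
\]
Since $q_r\le q_0$, we have $q_r!\le q_0^{q_r}$. Since $P_{r+1}\le q$, and $1/(4kq_d)\le 1$ (we may assume $k\ge1$), the result is
\begin{align*}
\alpha_{\comm}^{(q_1,\ldots,q_d)}(t)\le q^d\,(4q_0kgt)^q\,N .
\end{align*}

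\textbf{Step 3: sum over compositions and over $d$.} Fix $q$ and $d$. The number of compositions of $q$ into $d$ positive parts is $\binom{q-1}{d-1}\le 2^{q-1}$; the cap $q_i\le q_0$ can simply be dropped. Summing over $d$ with the weight $1/d!$ and using $\sum_{d\ge1} q^d/d!\le e^q$, the grade-$q$ contribution is at most
\[
N\,2^{q}e^{q}(4q_0kgt)^q = N\,(8e\,q_0kgt)^q .
\]

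\textbf{Step 4: sum over grades.} The hypothesis $8e^2q_0kgt\le1$ gives $8eq_0kgt\le e^{-1}$. Hence
\begin{align*}
\alpha_{\comm,q_0}(t)\le N\sum_{q\ge q_0+1}e^{-q}=N\,\frac{e^{-(q_0+1)}}{1-e^{-1}}=\frac{N e^{-q_0}}{e-1}\le Ne^{-q_0}.
\end{align*}

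The main obstacle is the bookkeeping in Step 3, namely controlling the growth in $d$ together with the composition count. The rough bound $\binom{q-1}{d-1}\le 2^{q}$ costs a factor $2$ per grade, and this is absorbed exactly by the constant $8e^2$ in the hypothesis, so the constant closes.

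A subsidiary point is the exchange of summation order in the double series. All terms are nonnegative, so Tonelli justifies regrouping by total grade $q$.
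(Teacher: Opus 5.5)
Your proposal is correct and follows the paper's proof essentially step for step. Both arguments use the same chain:
\begin{itemize}
\item reduce to the $\mc K_v$ by the triangle inequality;
\item apply \thm{doubly-nested-bound} with $q_r!\le q_0^{q_r}$ and $P_{r+1}\le q$ to get $q^d(4q_0kgt)^qN$;
\item bound the composition count by $\binom{q-1}{d-1}\le 2^{q-1}$ and use $\sum_d q^d/d!\le e^q$;
\item finish with a geometric sum using $8eq_0kgt\le e^{-1}$.
\end{itemize}
Your remarks on dropping the prefactor $1/(4kq_d)\le 1$ and on regrouping nonnegative terms by total grade only make explicit what the paper does silently.
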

\begin{proof} Let $\mc L= \sum_{v=1}^{M} \mc K_{v}$ be the decomposition of $\mc L$ into $2k$-local superoperators in Eq.~\eq{local-decompose}.
    For any positive integer sequence $q_1, \ldots, q_{d}$, let $
        P_{r} = \sum_{j=r}^{d} q_j,$
    for $r\le d$, $P_{d+1} = 1$, and $q = q_1+\cdots+q_{d}$. Then, we have 
	\begin{align*}
	\alpha_{\comm}^{(q_1, \ldots, q_{d})}(t)
	&= \sum_{j_{1,1}, \ldots, j_{d, q_{d}} = 1}^m
	\begin{aligned}[t]
	\big\|&\big[ [\mc L_{j_{1,1}}, \ldots, \mc L_{j_{1,q_1}}], \ldots,\\
	&\quad [\mc L_{j_{d,1}}, \ldots, \mc L_{j_{d,q_{d}}}] \big]
	\big\|_{\diamond}\,t^{q}
	\end{aligned}\\
	& \le  \sum_{v_{1,1}, \ldots, v_{d, q_{d}} = 1}^M
	\begin{aligned}[t]
	\big\|&\big[ [\mc K_{v_{1,1}}, \ldots, \mc K_{v_{1,q_1}}], \ldots,\\
	&\quad [\mc K_{v_{d,1}}, \ldots, \mc K_{v_{d,q_{d}}}] \big]
	\big\|_{\diamond}\,t^q
	\end{aligned} \\
	& \le  \bigg(\prod_{r=1}^{d}P_{r+1}q_r!(4kg)^{q_r }\bigg)Nt^q,
	\end{align*}
where the second line follows by decomposing each $\mc L_j$ into local superoperators $\mc K_v$ and applying the triangle inequality, the third line follows from \thm{doubly-nested-bound}.
For $q_1, \ldots, q_{d}\le q_0$, we have 
\begin{align*}
    \alpha_{\comm}^{(q_1, \ldots, q_{d})}(t)\le q^{d}\bigg(\prod_{r=1}^{d}q_0^{q_{r}}(4kg)^{q_r }\bigg)N t^q = q^{d} (4q_0kgt)^q N,
\end{align*}
where the inequality follows from $P_{r+1}\le q$. 
Then, writing $Q$ for the total grade, we have
\begin{align*}
    \alpha_{\comm,q_0}(t)
    &\le N\sum_{Q=q_0+1}^{\infty}(4q_0kgt)^Q
    \sum_{d=1}^{Q}\frac{Q^d}{d!}
    \sum_{\substack{1\le q_1,\ldots,q_d\le q_0\\q_1+\cdots+q_d=Q}}1\\
    &\le N\sum_{Q=q_0+1}^{\infty}(8eq_0kgt)^Q
    \le Ne^{-q_0}.
\end{align*}
For the composition count, we use
\begin{align*}
    \sum_{d=1}^{Q}
    \sum_{\substack{1\le q_1,\ldots,q_d\le q_0\\q_1+\cdots+q_d=Q}}1
    \le \sum_{d=1}^{Q}\binom{Q-1}{d-1}=2^{Q-1}.
\end{align*}
Together with $\sum_{d=1}^{Q}Q^d/d!\le e^Q$, this gives the second line above. The last inequality follows from $8e^2q_0kgt\le1$.
\end{proof}
Based on the convergence condition in Eq.~\eq{converg-bch}, the bound in Eq.~\eq{trivial-alpha} guarantees convergence of the BCH series $\sum_{q=1}^{\infty}\Phi_q$ when $\sum_{v=1}^M\|\mc K_v\|_{\diamond}$ is sufficiently small, corresponding to $g=\mathcal O(1/N)$ with a sufficiently small constant. In contrast, for fixed $t$, \lem{bch-approx} shows that the exponential of the truncated BCH formula approximates the exponential product in the larger regime $g=\mathcal{O}(1/(k\log N))$, where the full BCH series is not guaranteed to converge. By combining these bounds with \thm{general-extra-algo}, we obtain the following Trotter number bounds for simulating $(\Gamma,k)$-local Lindbladians.
\begin{theorem}
\label{thm:local-extrapolation}
    Let $\mc L=\sum_{j=1}^m \mc L_j$ be a $(\Gamma,k)$-local Lindbladian on an $N$-site lattice $\Lambda$ satisfying the $g$-extensiveness condition. There exists an algorithm that, given any simulation time $t>0$, precision $\eps\in(0,1)$, observable $O$, and initial state $\rho_0$, estimates $\tr[Oe^{t \mc L} \rho_0]$ to within additive error $\eps \|O\|$ with probability at least $2/3$. The algorithm uses a total of \begin{align*}
        &\mathcal{O}\Big(\frac{\log(1/\eps)(\log\log(1/\eps))^4}{\eps^2}
        (kgt)^{3/2 }(\sqrt{N}+ \log^{3/2}(Nkgt/\eps))\Big)
    \end{align*} Trotter steps. This involves $\widetilde{\mathcal{O}}(1/\eps^2)$ independent circuit runs, each comprising at most 
    \begin{align*}
        &\mathcal{O}\big(\log(1/\eps)(kgt)^{3/2 }
        (\sqrt{N}+ \log^{3/2}(Nkgt/\eps))\big)
    \end{align*} Trotter steps.
\end{theorem}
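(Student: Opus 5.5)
The plan is to instantiate \thm{general-extra-algo}: choose $q_0$ and the largest admissible $s_p$ so that both conditions in \lin{s-cond} hold. Then read off the cost $\mathcal O(\log(1/\eps)/s_p)$ per circuit and $\mathcal O(\log(1/\eps)(\log\log(1/\eps))^4/(\eps^2 s_p))$ in total. Everything reduces to lower-bounding the admissible $s_p$ by
\[
\Omega\big((kgt)^{-3/2}(\sqrt N+\log^{3/2}(Nkgt/\eps))^{-1}\big),
\]
up to the rounding $s_0=1/n$, which only costs a constant factor.

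\emph{First condition.} I need $s_p\le e^{-1}(1+2\mu_{\mathrm{comm},q_0}t)^{-3/2}$. By \cor{commu-bound-sum}, after expanding each $\mc L_j$ into the $\mc K_v$ by the triangle inequality, $\alpha_{\comm}^{(q)}\le q!(4kg)^qN/(4kq)$. Using $q!\le q^q$, this gives $(\alpha_{\comm}^{(q)})^{1/q}\le 4kgq\,N^{1/q}$. For $3\le q\le q_0$ this is at most $4kg\max(3N^{1/3},q_0)$, since $qN^{1/q}$ is maximized at an endpoint of $[3,q_0]$ and $q_0N^{1/q_0}=\mathcal O(q_0)$ once $q_0\ge\log N$. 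Hence $\mu_{\mathrm{comm},q_0}t=\mathcal O(kgt(N^{1/3}+q_0))$, and the first condition holds whenever
\[
s_p\le c_1(kgt)^{-3/2}\big(\sqrt N+q_0^{3/2}\big)^{-1}
\]
for a small constant $c_1$. This uses $kgt\gtrsim1$; when $kgt\lesssim1$ the bound is $\mathcal O(1)$ and is absorbed.

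\emph{Second condition.} I need $\alpha_{\comm,q_0}(s_pt)\le s_p\eps/(4e^2C\log p)$. I will apply \lem{bch-approx} with $t$ replaced by $s_pt$, which requires $8e^2q_0kgs_pt\le1$. This is a weaker restriction than the first condition, provided $c_1$ is small and $kgt\ge1$. Rather than using the crude conclusion $Ne^{-q_0}$, I re-run the last line of the proof of \lem{bch-approx}:
\[
\alpha_{\comm,q_0}(s_pt)\le N\sum_{Q>q_0}(8eq_0kgs_pt)^Q\le 2N(8eq_0kgs_pt)^{q_0+1}.
\]
With $x:=8eq_0kgs_pt\le e^{-1}$, this quantity is $\le 2Nxe^{-q_0}\cdot$const. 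Since $x\propto s_p$, one factor of $s_p$ cancels the $s_p$ on the right-hand side. It then suffices that
\[
2N\cdot 8eq_0kgt\cdot e^{-q_0}\le \eps/(4e^2C\log p),
\]
which holds for $q_0=\Theta(\log(Nkgt\log\log(1/\eps)/\eps))=\Theta(\log(Nkgt/\eps))$; the residual $\log q_0$ is absorbed. This step is the main obstacle: keeping the factor $s_p$ rather than losing it is exactly what prevents an extra $1/\eps$-dependent factor in $1/s_p$.

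\emph{Conclusion.} Substituting $q_0^{3/2}=\mathcal O(\log^{3/2}(Nkgt/\eps))$ gives
\[
1/s_p=\mathcal O\big((kgt)^{3/2}(\sqrt N+\log^{3/2}(Nkgt/\eps))\big).
\]
Plugging this into \thm{general-extra-algo} yields both stated counts, with $\widetilde{\mathcal O}(1/\eps^2)$ runs coming from $p\,S$.
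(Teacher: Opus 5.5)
Your proposal is correct and is essentially the paper's proof. It uses the same choice $q_0=\Theta(\log(Nkgt/\eps))$, the same endpoint bound $\mu_{\comm,q_0}\le 4kg\max\{3N^{1/3},eq_0\}$ via \cor{commu-bound-sum}, and the same geometric tail from \lem{bch-approx}. It also carries the same implicit ``nontrivial regime'' caveat as the paper: $kgt$ must be large enough, and $1/s_p$ must be large compared with $r_p$ so that rounding costs only a constant.

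The only difference is packaging. You extract one factor of $s_p$ directly via $2Nx^{q_0+1}\le 2Nxe^{-q_0}$, whereas the paper evaluates \lem{bch-approx} at $s_{p,1}=(8e^2q_0kgt)^{-1}$ and passes to all smaller $s_p$ by monotonicity of $\alpha_{\comm,q_0}(st)/s$. This step is also less delicate than you suggest: dropping the factor $s_p$ would only enlarge the required $q_0$ by $\log(1/s_p)$, not add an extra factor to $1/s_p$.
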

\begin{proof}
    We use \algo{simu-extrapolation} to estimate the expectation value. To analyze the complexity, let $\mc L=\sum_{v=1}^M\mc K_v$ be the local decomposition in Eq.~\eq{local-decompose}, and let $q_0$ be the smallest odd integer satisfying
    \begin{align}
    \label{eq:def-q-0}
        q_0\ge\max\bigg\{3,2\log\bigg(\frac{32e^4NCkgt\log p}{\eps}\bigg)\bigg\},
    \end{align}
    and set
    \begin{align}
    \label{eq:local-s-cond}
        s_{p,1}:=(8e^2q_0kgt)^{-1}.
    \end{align}
    These choices satisfy $q_0=\mathcal O(\log(Nkgt/\eps))$ and $s_{p,1}^{-1}=\mathcal O(kgt\log(Nkgt/\eps))$. Since $q_0\le e^{q_0/2}$ for $q_0\ge3$, Eq.~\eqref{eq:def-q-0} implies $Ne^{-q_0}\le s_{p,1}\eps/(4e^2C\log p)$. By \lem{bch-approx},
    \begin{align*}
        \alpha_{\mathrm{comm},q_0}(s_{p,1}t)
        \le \frac{s_{p,1}\eps}{4e^2C\log p}.
    \end{align*}
    Since $\alpha_{\mathrm{comm},q_0}(st)/s$ is increasing where the defining series is finite, every $s_p\le s_{p,1}$ satisfies
    \begin{align*}
        \alpha_{\mathrm{comm},q_0}(s_pt)
        \le \frac{s_p\eps}{4e^2C\log p}.
    \end{align*}
    
    The right-nested commutator bound $\alpha_{\rm comm}^{(q)}$ satisfies \begin{align*}
        \alpha_{\rm comm}^{(q)}
        ={}& \sum_{j_1, \ldots, j_q = 1}^m
        \|[\mc L_{j_1}, \ldots, \mc L_{j_q}]\|_{\diamond}\\
        \le{}& \sum_{v_1, \ldots, v_q=1}^M
        \|[\mc K_{v_1}, \ldots, \mc K_{v_q}]\| _{\diamond}\\
        \le{}& Ng(q-1)!(4kg)^{q-1}.
    \end{align*}
    The first inequality follows from the triangle inequality by decomposing each $\mc L_j$ into  terms of $\mc K_v$. The second inequality follows from \cor{commu-bound-sum}. Then, we have \begin{align*}
        \mu_{\mathrm{comm},q_0}
        ={}& \max_{q=3, 5,\ldots, q_0} \big(\alpha_{\mathrm{comm}}^{(q)}\big)^{1/q}\\
        \le{}& 4kg \max_{q=3, 5, \ldots, q_0} qN^{1/q}\\
        \le{}& 4kg\max\{3N^{1/3}, q_0 N^{1/q_0}\}\\
        \le{}& 4kg\max\{3N^{1/3}, e q_0 \}.
    \end{align*}
    Here $xN^{1/x}$ decreases for $x<\log N$ and increases for $x>\log N$, so its maximum over $[3,q_0]$ occurs at an endpoint. If $q_0\le\log N$, the second endpoint is bounded by the first; if $q_0\ge\log N$, then $N^{1/q_0}\le e$. Therefore, with
    \begin{align*}
        s_{p,2}:=e^{-1}\big[1+8kgt\max\{3N^{1/3},eq_0\}\big]^{-3/2},
    \end{align*}
    we have
    \begin{align*}
        s_{p,2}\le e^{-1}(1+2\mu_{\mathrm{comm},q_0}t)^{-3/2}.
    \end{align*}

    To ensure $1/s_0=1/(s_pr_p)\in\mathbb Z_+$, consider the feasible choice
    \begin{align*}
        \frac{1}{s_p}:=\left\lceil
        \frac{\max\{s_{p,1}^{-1},s_{p,2}^{-1}\}}{r_p}
        \right\rceil r_p.
    \end{align*}
    This choice satisfies $s_p\le\min\{s_{p,1},s_{p,2}\}$ and
    \begin{align*}
        \max\{s_{p,1}^{-1},s_{p,2}^{-1}\}
        \le \frac{1}{s_p}
        <\max\{s_{p,1}^{-1},s_{p,2}^{-1}\}+r_p.
    \end{align*}
    The largest admissible $s_0$ selected in \algo{simu-extrapolation} can only decrease $1/s_p$. Using $r_p=\mathcal O(\log(1/\eps))$, Eqs.~\eqref{eq:def-q-0} and \eqref{eq:local-s-cond} therefore give, in the nontrivial asymptotic regime, the bound
    \begin{align*}
        \frac{1}{s_p}
        =\mathcal{O}\big((kgt)^{3/2}
        (\sqrt{N}+\log^{3/2}(Nkgt/\eps))\big).
    \end{align*}
    The complexity bounds then follow from \thm{general-extra-algo}.
\end{proof}

\section{Numerical Experiment Details}\label{append:numerics}
All the results and plots are obtained by numerical simulations on a computing node with 8vCPUs and 64GB RAM via Python 3.11.12.  We use the `QuTiP' library~\cite{qutip5} to store and manipulate the density matrices. For an $N$-qubit superoperator $A(\cdot)B$, the matrix representation $B^T\otimes A$ is of the size $4^N\times 4^N$. 

For the task of observable estimation and Richardson extrapolation in \fig{extrapolation} and \fig{heisenberg-extrapolation}, since the errors after extrapolation are approaching the numerical errors, we directly call the matrix exponential function to calculate $e^{t\L}$ and $\mathcal{S}(t/r)^r$, which achieves high precision but is limited to a small system size $N$.

For the task of channel approximation in \fig{commutator} and \fig{heisenberg-noise}, since computing the diamond norm $\norm{e^{t\L}-\mathcal{S}(t/r)^r}_{\diamond}$ relies on semidefinite programming with the matrix representation of superoperator exponentials, the curse of dimensionality prevents us from scaling beyond $N=6$. In turn, we consider the error of the output density matrix in terms of the trace norm $\norm{(e^{t\L}-\mathcal{S}(t/r)^r)\rho_0}_1$ given an initial $\rho_0$, which is smaller than the diamond distance as worst case metric. On the other hand, this enables us to adopt the forward Euler method to incrementally update the density matrix through $2^N\times 2^N$ matrix multiplication, rather than exponentiating a $4^N\times 4^N$ superoperator.
Specifically, the numerical update methods used to plot \fig{commutator} and \fig{heisenberg-noise} and their errors are presented as follows:
\begin{itemize}
    \item The exact evolution channel $e^{t\L}$: Repeating $\Delta\rho=\L(\rho)\frac{t}{10^5}$ and $\rho\leftarrow\rho+\Delta\rho$ for $10^5$ times. By directly calculating the exponential of the matrix representation of $t\L$ on small systems, we verify that the numerical errors of the Euler method are of the order $10^{-6}$ in terms of the trace distance between the final density matrices.
    \item The Hamiltonian evolution channel $e^{-it\ad_{H_\mu}}$: For system size $N\in[4,10]$, we calculate the exact matrix exponential $\rho\leftarrow e^{-iH_\mu t}\rho e^{iH_\mu t}$ whose computational cost is much smaller than exponentiating a superoperator. 
    \item The jump operator channel $e^{tD_\nu}$: Repeating $\Delta\rho=D_\nu(\rho)\frac{t}{10^3}$ and $\rho\leftarrow\rho+\Delta\rho$ for $10^3$ times, whose numerical errors are shown to be of the order $10^{-7}$ by the similar approach above.
\end{itemize}

\nocite{blanes2009magnus, Arnal_2018, dollard1984product, arnal2021note, stinespring1955positive, dawson2006solovay, childs2012hamiltonian, gilyen2019quantum, low2019hamiltonian, qutip5, Zhao_2025}
\bibliography{ref}

\end{document}